\documentclass[runningheads, envcountsame, a4paper]{llncs}

\usepackage{amsmath}
\usepackage{amssymb}

\usepackage{microtype}
\usepackage{graphicx}
\usepackage{caption}

\sloppy

\newcommand{\lvec}[1]{\overset{{}_{\leftarrow}}{#1}}
\newcommand{\lrange}[1]{\overleftarrow{#1}}
\newcommand{\lcp}{\mathsf{lcp}}
\newcommand{\rlcp}{\lrange{\mathsf{lcp}}}
\newcommand{\suff}{\mathrm{suf}}
\newcommand{\pref}{\mathrm{pre}}

\begin{document}

\title{Detecting One-variable Patterns}
\author{Dmitry Kosolobov\inst{2} \and Florin Manea\inst{1} \and Dirk Nowotka\inst{1}}
\institute{Institut f\"{u}r Informatik, Christian-Albrechts-Universit\"{a}t zu Kiel, Kiel, Germany\\
\email{$\{$flm,dn$\}$@informatik.uni-kiel.de}\and
Institute of Mathematics and Computer Science, Ural Federal University, Ekaterinburg, Russia\\
\email{dkosolobov@mail.ru}}

\titlerunning{Detecting One-variable Patterns}
\authorrunning{D. Kosolobov, F. Manea, and D. Nowotka}
\toctitle{Detecting One-variable Patterns}
\tocauthor{Dmitry~Kosolobov, Florin~Manea, and Dirk~Nowotka}

\maketitle
\setcounter{footnote}{0}

\begin{abstract}
Given a pattern $p = s_1x_1s_2x_2\cdots s_{r-1}x_{r-1}s_r$ such that $x_1,x_2,\ldots,x_{r-1}\in\{x,\lvec{x}\}$, where $x$ is a variable and $\lvec{x}$ its reversal, and $s_1,s_2,\ldots,s_r$ are strings that contain no variables, we describe an algorithm that constructs  in $O(rn)$ time a compact representation of all $P$ instances of $p$ in an input string of length $n$ over a polynomially bounded integer alphabet, so that one can report those instances in $O(P)$ time.
\keywords{patterns with variables, matching, repetitions, pseudo-repetitions}
\end{abstract}

\section{Introduction}
A \emph{pattern} is a string consisting of \emph{variables} (e.g., $x,y,z$) and \emph{terminal letters} (e.\,g., $a,b,c$). The terminal letters are treated as constants, while the variables are letters to be uniformly replaced by strings over the set of terminals (i.\,e., all occurrences of the same variable are replaced by the same string); by such a replacement, a pattern is mapped to a terminal string. Patterns with variables appeared in various areas of computer science, e.g., stringology and pattern matching~\cite{ami:gen}, combi\-natorics on words~\cite{lot:alg:unPat}, language and learning theory~\cite{ang:fin2}, or regular expressions with back references~\cite{fri:mas,Schmid16}, used in programming languages like Perl, Java, Python. In such applications, patterns are used to express string searching questions such as testing whether a string contains regularities.

Here, we consider the so-called \emph{one-variable patterns}~$p = s_1x_1\cdots s_{r-1}x_{r-1}s_r$ such that, for all $z$, $x_z\in \{x,\lvec{x}\}$, where $x$ is a variable and $\lvec{x}$ its reversal, and $s_z$ is a string over a set $\Sigma$ of terminals. An \emph{instance} of $p$ in a text $t$ is a substring $ s_1w_1\cdots s_{r-1}w_rs_r$ of $t$,~with~$w_z = w$ if $x_z = x$ and $w_z=\lvec{w}$ if  $x_z = \lvec{x}$, for a non-empty $w\in \Sigma^*$ called \emph{substitution} of $x$. We address the problem of efficiently finding instances of such patterns in~texts.

For example, let $p=axabxbc\lvec{x}$. An instance of this pattern, if the alphabet of terminals is $\{a,b,c\}$, is $a\,abc\,ab\,abc\,bc\,cba$, where $x$ is substituted by $abc$ (and, consequently, $\lvec{x}$ by $cba$). Another instance is $a\,aaabbb\,ab\,aaabbb\,bc\,{bbbaaa}$ if $x$ is substituted by $aaabbb$. Both these instances occur in the text $t= aabcababcbccb aaaabbbabaaabbbbcbbbaaa$: the former instance starts at position $1$ and the later starts at position $14$. These two instances overlap at position $14$.

Our motivation for studying such patterns is two-fold. Firstly, the efficient matching of several classes of restricted patterns was analyzed in \cite{FernauManeaMercasSchmid} and connected to algorithmic learning theory \cite{FeMaMeSc14-TCS}. Generally, matching patterns with variables to strings is NP-complete \cite{ehr:fin}, so it seemed an interesting problem to find structurally restricted classes of patterns for which the matching problem is tractable. As such, finding all occurrences of a one-variable pattern in a word occurred as basic component in the matching algorithms proposed in~\cite{FernauManeaMercasSchmid} for patterns with a constant number of repeated variables or for non-cross patterns (patterns that do not have the form $..x..y..x..$).

Secondly, our work extends the study of pseudo-repetitions (patterns from $\{x,\lvec{x}\}^*$). The concept of pseudo-repetitions (introduced in \cite{CCKS09}, studied from both combinatorial \cite{MMNS14} and algorithmic \cite{Xu10,GawrychowskiManeaNowotka} points of view) draws its original motivations from important biological concepts: tandem repeat, i.e., a consecutive repetition of the same sequence of nucleotides; inverted repeat, i.e., a sequence of nucleotides whose reversed image occurred already in the longer DNA sequence we analyze, both occurrences (original and reversed one) encoding, essentially, the same genetic information; or, hairpin structures in the DNA sequences, which can be modeled by patterns of the form $s_1 xs_2 \lvec{x} s_3$. More interesting to us, from a mathematical point of view, pseudo-repetitions generalize both the notions of repetition and of palindrome, central to combinatorics on words and applications. The one-variable pattern model we analyze generalizes naturally the mathematical model of pseudo-repetition by allowing the repeated occurrences of the variable to be separated by some constant factors.

Thus, we consider the next problem, aiming to improve the detection of pseudo-repetition \cite{GawrychowskiManeaNowotka}, as well as a step towards faster detection of occurrences of restricted~patterns~\cite{FernauManeaMercasSchmid,FeMaMeSc14-TCS}.
\begin{problem}\label{main_pb}
Given a string $t\in \Sigma^*$ of length $n$ and a pattern  $p = s_1x_1\cdots s_{r-1}x_{r-1}s_r$ such that, for $1\leq z \leq r-1$, $x_z\in \{x,\lvec{x}\}$ where $x\notin\Sigma$ is a variable and $\lvec{x}$ its reversal, and $s_z\in \Sigma^*$ for $1\leq z \leq r$, report all $P$ instances of $p$ in $t$ (in a form allowing their retrieval in $O(P)$~time).
\end{problem}
We assume that $t$ and all strings $s_z$, for $z=1,\ldots, r$, are over an integer alphabet $\Sigma = \{0,1,\ldots,n^{O(1)}\}$, and that we use the word RAM model with $\Theta(\log n)$-bit machine words\footnote{Hereafter, $\log$ denotes the logarithm with base $2$.} (w.l.o.g., assume that $\log n$ is an integer). In this setting, we propose an algorithm that reports in $O(rn)$ time all instances of $p$ in $t$ in a compactly encoded form, which indeed allows us to retrieve them in $O(P)$ time. Our approach is based on a series of deep combinatorics on words observations, e.g., regarding the repetitive structure of the text, and on the usage of efficient string-processing data structures, combining and extending in novel and non-trivial ways the ideas from~\cite{FernauManeaMercasSchmid,GawrychowskiManeaNowotka,KosolobovRepetitions}.

If the pattern contains only a constant number of variables (e.g., generalized squares or cubes with terminals between the variables), our algorithm is asymptotically as efficient as the algorithms detecting fixed exponent (pseudo-)repetitions. For arbitrary patterns, our solution generalizes and improves the results of \cite{GawrychowskiManeaNowotka}, where an $O(r^2 n)$-time solution to the problem of finding one occurrence of a one-variable pattern with reversals (without terminals) was given. Here, compared to \cite{GawrychowskiManeaNowotka}, we work with patterns that contain both variables and terminals and we detect, even faster, all their instances. Also, we improve the results of \cite{FernauManeaMercasSchmid} in several directions: as said, we find all instances of a one-variable pattern (in~\cite{FernauManeaMercasSchmid} such a problem was solved as a subroutine in the algorithm detecting non-cross patterns, and only some instances of the patterns were found), our algorithm is faster by a $\log n$ factor, and our patterns also contain reversed variables.

In this paper, we omit most of the technicalities of the solution to Problem~\ref{main_pb} from the main part, and prefer to keep the presentation at an intuitive level; the full proofs are available in Appendix. 

\section{Preliminaries}

Let $w$ be a string of length $n$. Denote $|w| = n$. The \emph{empty string} is denoted by $\epsilon$. We write $w[i]$ for the $i$th letter of $w$ and $w[i..j]$ for $w[i]w[i{+}1]\cdots w[j]$. A string $u$ is a \emph{substring} of $w$ if $u = w[i..j]$ for some $i\leq j$. The pair $(i,j)$ is not necessarily unique; we say that $i$ specifies an \emph{occurrence} of $u$ in $w$. A substring $w[1..j]$ (resp., $w[i..n]$) is a \emph{prefix} (resp. \emph{suffix}) of $w$. The \emph{reversal of $w$} is the string $\lvec{w} = w[n]\cdots w[2]w[1]$; $w$ is a \emph{palindrome} if $w = \lvec{w}$. For any $i,j \in \mathbb{R}$, denote $[i..j] = \{k\in \mathbb{Z} \colon i \le k \le j\}$, $(i..j] = [i..j] \setminus \{i\}$, $[i..j) = [i..j] \setminus \{j\}$, $(i..j) = [i..j)\cap (i..j]$. Our notation for arrays is similar to that for strings, e.g., $a[i..j]$ denotes an array indexed by the numbers $[i..j]$: $a[i], a[i{+}1],\ldots, a[j]$.

In Problem \ref{main_pb} we are given an input string (called text) $t$ of length $n$ and a pattern $p = s_1x_1s_2x_2\cdots s_{r-1}x_{r-1}s_r$ such that, for $z \in [1..r)$, $x_z \in \{x,\lvec{x}\}$ and $s_1, s_2, \ldots, s_r$ are strings that contain no $x$ nor $\lvec{x}$. For the simplicity of exposure, we can assume $x_1=x$. An \emph{instance of $p$} in the text $t$ is a substring $t[i..j] = s_1w_1s_2w_2\cdots s_{r-1}w_{r-1}s_r$ such that, for $z \in [1..r)$, $w_z = w$ if $x_z = x$, and $w_z = \lvec{w}$ if $x_z = \lvec{x}$, where $w$ is a string called a \emph{substitution of $x$}; $\lvec{w}$ is called a \emph{substitution of $\lvec{x}$}. We want to find all instances of $p$ occurring in $t$.

An integer $d > 0$ is a \emph{period} of a string $w$ if $w[i] = w[i{+}d]$ for all $i \in [1..|w|{-}d]$; $w$ is \emph{periodic} if it has a period ${\le}\frac{|w|}{2}$. For a string $w$, denote by $\pref_d(w)$ and $\suff_d(w)$, respectively, the longest prefix and suffix of $w$ with period $d$. A \emph{run} of a string $w$ is a periodic substring $w[i..j]$ such that both substrings $w[i{-}1..j]$ and $w[i..j{+}1]$, if defined, have strictly greater minimal periods than $w[i..j]$. A string $w$ is \emph{primitive} if $w \ne v^k$ for any string $v$ and any integer $k > 1$.

\begin{lemma}[see~\cite{CrochemoreRytter}]
A primitive string $v$ occurs exactly twice in the string $vv$.\label{PrimitiveVV}
\end{lemma}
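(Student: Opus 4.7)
The plan is to show that besides the two obvious occurrences of $v$ in $vv$, namely at positions $1$ and $|v|+1$, no other occurrence can exist under the primitivity assumption. So I would suppose, for contradiction, that there is a third occurrence at some position $i$ with $1 < i \le |v|$, and then derive that $v$ must be a proper integer power.

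Concretely, I would set $d = i-1$, so $0 < d < |v|$, and factor $v = v_1 v_2$ with $|v_1| = d$ and $|v_2| = |v|-d$. Both $v_1$ and $v_2$ are then non-empty. The assumed occurrence of $v$ starting at position $i = d+1$ in $vv = v_1 v_2 v_1 v_2$ reads off exactly $v_2 v_1$ (the suffix of length $|v|-d$ of the first copy concatenated with the prefix of length $d$ of the second copy). Since the other reading of the same length-$|v|$ window is $v = v_1 v_2$, this forces the commutativity identity
\[
v_1 v_2 \;=\; v_2 v_1.
\]

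From here the main tool is the classical fact that two non-empty strings commute if and only if they are powers of a common word: there exist a string $u$ and integers $a,b \ge 1$ with $v_1 = u^a$ and $v_2 = u^b$. Then $v = v_1 v_2 = u^{a+b}$ with $a+b \ge 2$, contradicting the primitivity of $v$. This rules out any occurrence in the open range $(1,|v|+1)$, leaving exactly the two boundary occurrences.

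The only real obstacle is the appeal to the commutativity lemma in the last step; this is a standard result of combinatorics on words (it is essentially a corollary of Fine and Wilf's periodicity lemma, or can be proved directly by induction on $|v_1|+|v_2|$), so I would simply cite it from \cite{CrochemoreRytter} rather than reprove it. Everything else is a short bookkeeping argument about indices in $vv$.
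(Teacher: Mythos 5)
Your proof is correct. Note that the paper does not prove this lemma at all: it is simply quoted from~\cite{CrochemoreRytter}. Your argument --- reading the hypothetical internal occurrence at position $i=d+1$ of $vv=v_1v_2v_1v_2$ as the identity $v_1v_2=v_2v_1$ and then invoking the commutation lemma to get a common root $u$ with $v=u^{a+b}$, $a+b\ge 2$, contradicting primitivity --- is exactly the standard textbook derivation found in the cited reference, and the index bookkeeping (ruling out positions $2,\ldots,|v|$, leaving positions $1$ and $|v|+1$) is complete.
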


\begin{lemma}\label{sum_runs}
Let $R$ be the set of all runs of $t$, whose period is at least three times smaller than the length of the run (such runs are called cubic). Then $\sum_{s\in R} |s|\in O(n \log n)$.
\end{lemma}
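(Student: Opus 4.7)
The plan is to reduce the statement to the classical bound of Crochemore, which asserts that a string of length $n$ contains $O(n\log n)$ occurrences of primitively rooted squares (i.e., substrings of the form $vv$ with $v$ primitive). The idea is to charge each such square occurrence to at most one cubic run, while arguing that each cubic run is charged a count of primitively rooted squares proportional to its own length.

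First I would count primitively rooted squares inside a single cubic run $r \in R$ of minimal period $p$ and length $\ell \ge 3p$, with primitive root $v$ of length $p$. For every offset $j \in [1..\ell-2p+1]$ inside $r$, the substring $r[j..j+2p-1]$ inherits period $p$ from $r$ and therefore equals $v'v'$ for some cyclic rotation $v'$ of $v$. Since cyclic rotations of a primitive word remain primitive, each such substring is a primitively rooted square. Hence $r$ contains exactly $\ell - 2p + 1 \ge \ell/3$ distinct primitively rooted square occurrences, all of period~$p$.

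Next I would verify that this charging is injective. A primitively rooted square occurrence is determined by its starting position $j$ together with its period $p$ (half of its length). If two distinct runs of the same minimal period $p$ both enveloped the length-$2p$ window $[j..j+2p-1]$, then their overlap would have length $\ge 2p$, and in the union every pair of positions at distance $p$ would lie entirely inside one of the two runs; the union would therefore still have period $p$, contradicting the maximality of runs. Consequently every primitively rooted square occurrence in $t$ is counted in at most one cubic run.

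Combining the two steps and summing over all cubic runs,
\begin{equation*}
\tfrac{1}{3}\sum_{r\in R}|r| \;\le\; \sum_{r\in R}\bigl(|r| - 2\,p(r) + 1\bigr) \;\le\; \#\{\text{primitively rooted square occurrences in }t\} \;=\; O(n\log n),
\end{equation*}
which yields the lemma. The delicate step is the injectivity of the charging (that two distinct runs of the same period cannot both envelop the same length-$2p$ window); once this is established, everything else reduces to direct counting together with the classical $O(n\log n)$ bound on primitively rooted square occurrences.
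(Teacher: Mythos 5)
Your proof is correct and follows essentially the same route as the paper: each cubic run of minimal period $p$ is charged its $|r|-2p+1\ge |r|/3$ length-$2p$ windows, all of which are primitively rooted square occurrences, and the classical bound of $O(\log n)$ primitively rooted squares per position gives $O(n\log n)$. The only difference is that you make explicit the injectivity of the charging (distinct runs of the same period cannot share a length-$2p$ window), which the paper's proof uses implicitly.
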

\begin{proof}
Consider a run $t[i..j] \in R$ with the minimal period $p$. Since a primitively rooted square of length $2p$ occurs at any position $k \in [i..j{-}2p{+}1]$, the sum $\sum_{s\in R} |s|$ is upper bounded by three times the number of primitively rooted squares occurring in $t$. At each position of $t$, at most $2 \lceil\log n\rceil$ primitively rooted squares may occur (see, e.g., \cite{CrochemoreRytter}), so the result follows.\qed
\end{proof}

In solving Problem \ref{main_pb}, we use a series of preprocessing steps. First, we find all runs in $t$ in $O(n)$ time using the algorithm of~\cite{BannaiIInenagaNakashimaTakedaTsuruta} and, using radix sort, construct lists $R_d$, for $d = 1,2,\ldots,n$, such that $R_d$ contains the starting positions of all runs with the minimal period $d$ in increasing order. We produce from $R_d$ two sublists $R'_d$ and $R''_d$ containing only the runs with the lengths ${\ge}\log n$ and $\geq \log\log n$, respectively (so that $R'_d$ is a sublist of $R''_d$). The following lemma provides us fast access to the lists $R_d, R'_d, R''_d$ from periodic substrings of $t$.

\begin{lemma}[{\cite[Lemma 6.6]{KociumakaRadoszewskiRytterWalen}}]
With $O(n)$ time preprocessing, we can decide in $O(1)$ time for any substring $t[i..j]$ of $t$ whether it is periodic and, if so, compute its minimal period $d$ and find in $R_d$, $R'_d$, or $R''_d$ the run containing $t[i..j]$.\label{SubstringRun}
\end{lemma}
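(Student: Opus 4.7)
The plan is to build, in $O(n)$ total preprocessing time, three ingredients: (i)~the list of all runs of $t$ via the algorithm of Bannai et al., grouped by minimal period into the sorted lists $R_d, R'_d, R''_d$ by radix sort; (ii)~an $O(1)$-time longest-common-extension oracle on $t$ (and on $\lvec{t}$) built from a suffix tree augmented with LCA/RMQ; and (iii)~for each position $i$ of $t$, a compact index of pointers to the runs covering $i$, organised by period. The combinatorial core is that if $t[i..j]$ has minimal period $d \le (j{-}i{+}1)/2$, it lies inside a \emph{unique} run $r$ whose own minimal period equals $d$: extending $t[i..j]$ maximally while preserving period $d$ yields a run $r$, and if the minimal period $d'$ of $r$ were a proper divisor of $d$, then $d'$ would also be a period of $t[i..j]$, contradicting minimality. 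Hence answering the query on $t[i..j]$ reduces to locating this unique run of minimal period $d$.

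To answer the query on $t[i..j]$ with $\ell := j{-}i{+}1$, I would consult the position-$i$ index to retrieve the run $r$ of smallest minimal period $d$ among those covering $i$ with $d \le \ell/2$. An $O(1)$ endpoint comparison confirms whether $r$ also covers $j$; if so, $t[i..j]$ is periodic with minimal period $d$ and is contained in $r$, and otherwise $t[i..j]$ is aperiodic (by the uniqueness above, no other run could contain it). Pointers stored with $r$ directly into $R_d$, and into $R'_d$ or $R''_d$ when $|r|$ is large enough, then yield its location in the appropriate list in $O(1)$.

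The main obstacle is keeping the per-position indices within $O(n)$ total space while still supporting $O(1)$ lookups. Naively storing, at every $i$, pointers to all runs covering $i$ would cost $\sum_i (\text{runs covering } i) = \Theta(n \log n)$, since a given position can be covered by up to $O(\log n)$ distinct runs (a consequence of the $2\lceil\log n\rceil$ bound on primitively rooted squares starting at a position already invoked in Lemma~\ref{sum_runs}). I would overcome this by a micro--macro decomposition: runs of length ${\ge}\log n$ are few enough globally that direct per-position pointers to them fit in $O(n)$, whereas shorter runs intersecting a given block of $\log n$ consecutive text positions are bit-packed into $\Theta(\log n)$-bit machine words, giving $O(n)$ total space and $O(1)$ period-indexed lookups via standard word-RAM operations. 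This micro--macro layering on runs, together with the weighted-ancestor machinery of Kociumaka, Radoszewski, Rytter and Wale\'n, is precisely the technical heart of their Lemma 6.6, which we cite as a black box.
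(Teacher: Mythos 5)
You should first note that the paper itself offers no proof of this statement: it is imported verbatim as Lemma~6.6 of Kociumaka, Radoszewski, Rytter and Wale\'n and used as a black box. Your closing sentence does the same (``which we cite as a black box''), which makes your write-up circular as a proof attempt --- the statement to be established \emph{is} that cited lemma --- so your text has to stand or fall as a self-contained reconstruction of the data structure.

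Judged that way, there is a genuine gap in the query step. You retrieve, among the runs covering position $i$ whose minimal period is at most $\ell/2$, the one with the \emph{smallest} minimal period $d$, and if it does not reach $j$ you declare $t[i..j]$ aperiodic. That inference is wrong. Take a text containing the factor $aabaab$ with neither neighbour extending the period-$3$ repetition (e.g.\ $t=caabaabc$) and query exactly that factor, so $\ell=6$. Position $i$ is covered both by the run $aa$ (minimal period $1\le \ell/2$, ending after two letters) and by the run $aabaab$ (minimal period $3\le\ell/2$). Your rule selects the period-$1$ run, finds that it does not cover $j$, and reports ``aperiodic'', although $t[i..j]$ is periodic with minimal period $3$. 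The uniqueness fact you prove (a periodic $t[i..j]$ lies in a unique run with the same minimal period) does not rescue the step, because the run you retrieved is simply not that run; uniqueness says nothing about runs of other periods covering $i$. (As a side remark, your uniqueness argument itself only treats the case where the run's period is a proper divisor of $d$; the general case $d'<d$ needs Fine--Wilf, though this is easily repaired.) A selection rule that does work is, for instance: among the runs covering $i$ with minimal period at most $\ell/2$, take the one with the \emph{maximum right endpoint}; if it reaches $j$ it is the unique run containing $t[i..j]$ and its period is the answer, otherwise no such run exists and $t[i..j]$ is aperiodic. But supporting this period-thresholded maximum query in $O(1)$ time within $O(n)$ space, over the up to $\Theta(\log n)$ runs covering a position, is precisely the nontrivial content of the cited lemma, and your micro--macro/bit-packing paragraph only gestures at it rather than specifying how the packed words are organised and decoded in constant time.
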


For $i, j \in [1..n]$, let $\lcp(i,j)$ and $\rlcp(i,j)$ be the lengths of the longest common prefixes of the strings $t[i..n]$, $t[j..n]$ and $\lrange{t[1..i]}$, $\lrange{t[1..j]}$, respectively. In $O(n)$ time we build for the string $t\lvec{t}$ the \emph{longest common prefix} data structure (for short, called the \emph{$\lcp$ structure}) that allows us retrieving the values $\lcp(i, j)$ and $\rlcp(i,j)$ for any $i, j \in [1..n]$ in $O(1)$ time (see~\cite{KaSaBu06,CrochemoreRytter}). Thus, to check if the substrings of length $\ell$ starting (resp., ending) at positions $i$ and $j$ in the string $t$ are equal, we just check whether $\lcp(i,j)\geq \ell$ (resp., ${\rlcp}(i,j)\geq \ell$). As a side note, we essentially use that we can compare the reversed image of two substrings of $t$ using the $\lcp$ structure built for $t\lvec{t}$.

With the $\lcp$ structure, it is easy to solve Problem~\ref{main_pb} in $O(rn^2)$ time: we first apply any linear pattern matching algorithm to find in $O(rn)$ time all occurrences of the strings $s_1, s_2, \ldots, s_r$ in $t$ and then, for every position $i \in [1..n]$ of $t$ and every $\ell \in [0..n]$, we check in $O(r)$ time whether an instance $s_1w_1\cdots s_{r-1}w_{r-1}s_r$ of the pattern $p$, with $\ell = |w_1| = \cdots = |w_{r-1}|$, occurs at position $i$.

\paragraph{General strategy.}
For each $z \in [1..r]$, using a pattern matching algorithm (see~\cite{CrochemoreRytter}), we fill in $O(n)$ time a bit array $D_z[1..n]$ where, for $i \in [1..n]$, $D_z[i] = 1$ iff $s_z$ occurs at position $i$. Assume that $p$ contains at least two occurrences of the variable, i.e., $p\notin\{ s_1xs_2\}$ (in the case $p = s_1xs_2$ each instance of $p$ is given by an occurrence of $s_1$, stored in $D_1$, followed by an occurrence of $s_2$, stored in $D_2$).

Let $\alpha = \frac{4}{3}$. For each $k \in [0..\log_{\alpha} n]$, our algorithm finds all instances of $p$ that are obtained by the substitution of $x$ with strings of lengths from $(\frac{3}2\alpha^k..2\alpha^k]$. Clearly, the intervals $(\frac{3}2\alpha^k..2\alpha^k]$ do not intersect and their union covers the interval $[2..n]$. In this manner, our algorithm obtains all instances of $p$ with substitutions of $x$ of length at least two. The remaining instances, when the string substituting $x$ has length one or zero, can be easily found in $O(rn)$ time using the arrays $\{D_z\}_{z=1}^r$.

So, let us fix $k \in [0..\log_{\alpha} n]$ and explain our strategy for this case. Suppose that, for $i,j \in [1..n]$, $t[i..j] = s_1w_1s_2w_2\cdots s_{r-1}w_{r-1}s_r$ is an instance of $p$ and $\frac{3}2\alpha^k < |w_1| = \cdots = |w_{r-1}| \le 2\alpha^k$; then $w_1$ contains a substring $v$ of length $\lceil\alpha^k\rceil$ starting, within $t$, either at position $q_1=h\lceil\alpha^k\rceil + 1$ or at position $q_1=h\lceil\alpha^k\rceil + \lfloor\frac{\lceil\alpha^k\rceil}2\rfloor$ for some integer $h \ge 0$. Based on this observation, we consider all choices of a  substring $v$ of $t$, with length $\lceil\alpha^k\rceil$, starting at positions $h\lceil\alpha^k\rceil + 1$ and $h\lceil\alpha^k\rceil + \lfloor\frac{\lceil\alpha^k\rceil}2\rfloor$ for $h \ge 0$. Such a string $v$ acts as a sort of anchor: it restricts (in a strong way, because of its rather large length with respect to $|w_1|$) the positions where  $w_1$ may occur in $t$, and copies of either $v$ or $\lvec{v}$ should also occur in all $w_2,\ldots,w_{r-1}$, thus restricting the positions where these strings may occur in $t$, as well. Based on a series of combinatorial observations regarding the way such substrings $v$ occur in $t$, and using efficient data structures to store and manipulate these occurrences, we find all corresponding instances of $p$ that contain $v$ in the substitution of $x_1$ in $O(r + \frac{r|v|}{\log n})$ time plus $O(\frac{\log n}{\log\log n})$ time if $\frac{\log n}{16\log\log n} \le |v| \le \log n$. We discuss two cases: $v$ is non-periodic or~periodic.

In the first case, distinct occurrences of $v$ (or $\lvec{v}$) in $t$ do not have large overlaps, so we can detect them rather fast, as described in  Lemma \ref{vFind}: for $\lambda = |s_2|$, we preprocess a data structure that allows us to efficiently find all occurrences of $v$ or $\lvec{v}$ at the distance $\lambda$ to the right of $v$ and these occurrences serve as additional anchors inside the substitution $w_2$; note that the case of very short $v$ requires a separate discussion. Hence, the distinct instances of $p$ where the substitution of $x$ contains a certain non-periodic $v$ also do not have large overlaps (which means, as well, that they are not too many), and they can be identified (and stored, as described in Lemma \ref{FixedLen}) by trying to align occurrences of the strings $s_1,\ldots,s_r$ in a correct manner around the found $v$'s.

Then we consider the case when $v$ is periodic. Then, the occurrences of $v$ or $\lvec{v}$ corresponding to different instances of $p$ might have large overlaps and form runs, so we analyze the runs structure of $t$. Consider, for the simplicity of exposure, a typical example: $t = (abc)^m$ contains $\Theta(|t|^2)$ instances of $p = xcxcabcxcxcxca$ with substitutions $x = ab(cab)^k$, for different $k$. The point in this example is that almost all substitutions are periodic and are contained in one run with the same minimal period. We can encode these instances by an arithmetic progression: for all $ 0\leq h\leq m{-}7$, $0\leq k \leq m{-}h{-}7$, there is an instance of $p$ starting at position $1{+}3h$ of $t$ with substitution of length $2{+}3k$. It turns out, as described in Lemmas~\ref{AllInRunOccurrences} and~\ref{GeneralInRunLemma}, that, for any pattern $p$, all instances of $p$ whose substitutions are periodic substrings of one run with the same minimal period can be encoded by similar arithmetic~progressions.

Consider now another relevant example: $t = (abc)^{\ell}d(abc)^m$ contains $\Theta(|t|)$ instances of $p = xxdxabcxx$ with substitutions $x = (abc)^k$. All these instances can be encoded as follows: for all $k = 0,1,\ldots,\min\{\ell,m\}$, there is an instance of $p$ starting at position $1{+}3\ell{-}3k$ with substitution of length $3k$. So, the letter $d$ ``separates'' the image of $p$ into two runs, breaking the period of the first run. As shown in Lemmas~\ref{SepCond} and~\ref{subz}, there might exist only a constant number of such ``separators'' in a general $p$ and all instances of $p$, with the image $x$ periodic, and which lie in two runs with the same minimal period, split by a given ``separator'', can be encoded by similar arithmetic progressions (the analysis of this case is similar to the analysis of in-a-run instances, so, it is moved in Appendix).

If the substitutions in an instance of $p$ lie in three or more runs (so, also there are more points where the period breaks inside each instance of $p$), then we can find the possible occurrences of $v$ (which are periodic, so they must avoid period-breaking points that separate the runs contained in $p$) and, consequently, find the instances of $p$. The combinatorics of such instances of the pattern is discussed in Lemmas \ref{wsw}, \ref{wsw2}: the essential idea is that the occurrences of $v$ and $\lvec{v}$ in $p$ and the substrings connecting them form runs, separated by substrings which break the periodicity; these substrings should correspond to substrings that interrupt runs in $t$. The actual algorithm identifying and storing these instances of the pattern follows from Lemmas~\ref{SeparateLemma},~\ref{SepCond},~\ref{subz} (and the comments connecting~them).

Finally, since there are $O(\frac{n}{\alpha^k})$ such substrings $v$ and at most $O(n / \frac{\log n}{\log\log n})$ of them (for all $k = 0,1,\ldots$ in total) are such that $\frac{\log n}{16\log\log n} \le |v| \le \log n$, the overall time is $O(\sum_{k=0}^{\log_{\alpha} n} \frac{n}{\alpha^k} (r + \frac{r\alpha^k}{\log n}) + (n / \frac{\log n}{\log\log n})\frac{\log n}{\log\log n}) = O(rn)$.

The details of all the cases considered in our approach are given in Sections \ref{SectNonPeriodic} and \ref{SectPeriodic}, following the general strategy described above. Summing up, we~get:
\begin{theorem}
Problem \ref{main_pb} can be solved in $O(rn)$ time.\label{MainResult}
\end{theorem}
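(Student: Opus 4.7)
The plan is to follow the length-bucketing/anchor strategy already sketched in the general strategy paragraphs and to account carefully for the total cost of each bucket. First I would do the preprocessing in $O(rn)$ time: compute the $\lcp$ data structure for $t\lvec{t}$, precompute all runs of $t$ (plus the lists $R_d, R'_d, R''_d$), run a linear pattern-matching algorithm (e.g.\ KMP) on each $s_z$ versus $t$ to fill $D_z[1..n]$, and invoke Lemma~\ref{SubstringRun} so that any periodic substring of $t$ can be mapped to its run in $O(1)$ time. The degenerate case $p=s_1 x s_2$ is then solved in $O(n)$ by a single pass on $D_1,D_2$, and the substitutions of $x$ of length $0$ or $1$ are handled in $O(rn)$ directly from the arrays $D_z$ by trying each of the $O(n)$ potential starting positions of $s_1$ and each of the $O(1)$ possible letters for $x$.

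Next I would enumerate substitution-length buckets $L_k=(\tfrac{3}{2}\alpha^k..2\alpha^k]$ for $\alpha=\tfrac{4}{3}$, $k\in[0..\log_\alpha n]$, which cover $[2..n]$ disjointly. Fixing $k$, the anchor observation guarantees that any $w$ with $|w|\in L_k$ contains a substring $v$ of length $\lceil\alpha^k\rceil$ starting at one of the $O(n/\alpha^k)$ ``grid'' positions $h\lceil\alpha^k\rceil+1$ or $h\lceil\alpha^k\rceil+\lfloor\lceil\alpha^k\rceil/2\rfloor$. For each such candidate $v$ I would dispatch on whether $v$ is periodic (using Lemma~\ref{SubstringRun}). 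When $v$ is non-periodic, distinct occurrences of $v$ or $\lvec{v}$ cannot overlap on more than half their length, so the relevant anchor positions inside a would-be instance are enumerated via Lemma~\ref{vFind}; aligning them with the $D_z$-occurrences of the $s_z$ using the $\lcp$ structure and storing the surviving instances through Lemma~\ref{FixedLen} handles this branch in $O(r+r|v|/\log n)$ time per anchor, plus the additional $O(\log n/\log\log n)$ term in the narrow regime $\tfrac{\log n}{16\log\log n}\le|v|\le\log n$.

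The main obstacle is the periodic-anchor branch, since then occurrences of $v$ and $\lvec{v}$ coalesce into long runs and an instance of $p$ can contribute $\Theta(|v|)$ candidate positions, which is unaffordable to enumerate one by one. Here I would exploit the combinatorial picture already summarised before Theorem~\ref{MainResult}: classify instances according to how many runs of the common minimal period $d$ their substitutions straddle. If all substitutions live in a single such run, the set of valid $(\text{start},|w|)$ pairs forms a two-dimensional arithmetic progression, compactly representable via Lemmas~\ref{AllInRunOccurrences} and~\ref{GeneralInRunLemma}. If they straddle exactly two runs of period $d$, a single ``separator'' inside $p$ breaks the period; there are only $O(1)$ candidate separator positions in $p$ (Lemma~\ref{SepCond}), and each is again encoded as an arithmetic progression (Lemma~\ref{subz}). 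Finally, if three or more runs are involved, the rigidity lemmas on occurrences of $v$ and $\lvec{v}$ in $p$ (Lemmas~\ref{wsw}, \ref{wsw2}, \ref{SeparateLemma}) pin down $O(1)$ alignments per anchor, reducing the count back to the non-periodic regime. Across all three sub-cases the per-anchor cost matches the non-periodic bound.

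It remains to add everything up. Each bucket $k$ contributes $O((n/\alpha^k)\cdot(r+r\alpha^k/\log n))$ plus, via Lemma~\ref{sum_runs}, the total cubic-run length for the narrow $|v|$ range which telescopes to $O((n/(\log n/\log\log n))\cdot(\log n/\log\log n))=O(n)$. Summing the geometric series in $k$ yields
\[
\sum_{k=0}^{\log_\alpha n}\Bigl(\tfrac{rn}{\alpha^k}+\tfrac{rn}{\log n}\Bigr)+O(n)\;=\;O(rn)+O(rn\cdot\tfrac{\log n}{\log n})+O(n)\;=\;O(rn),
\]
which, combined with the $O(rn)$ preprocessing and the $O(rn)$ handling of $|w|\le 1$ and $p=s_1xs_2$, establishes Theorem~\ref{MainResult}. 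The output representation is compact because every bucket emits either isolated $(\text{start},|w|)$ pairs or arithmetic progressions, both of which support $O(P)$ retrieval of the $P$ instances.
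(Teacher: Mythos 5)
Your overall route is the paper's route: the same preprocessing, the same length buckets $(\frac{3}{2}\alpha^k..2\alpha^k]$ with grid-anchored substrings $v$, Lemmas~\ref{vFind} and~\ref{FixedLen} for non-periodic $v$, and the one-run/two-run/three-or-more-run classification with Lemmas~\ref{AllInRunOccurrences}, \ref{GeneralInRunLemma}, \ref{wsw}, \ref{wsw2}, \ref{SeparateLemma}, \ref{SepCond}, \ref{subz} for periodic $v$. The non-periodic branch and the three-or-more-runs branch are accounted exactly as in the paper, and your geometric summation for those is correct.

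However, there is a genuine gap in your cost accounting for the in-a-run (and in-two-runs) instances. You assert that ``across all three sub-cases the per-anchor cost matches the non-periodic bound,'' but the paper does \emph{not} (and seemingly cannot) charge these instances to anchors: it explicitly abandons the anchor $v$ there and instead processes each run $t[i'..j']$ with period $d$ \emph{once}, applying Lemma~\ref{GeneralInRunLemma} to each of the $f=\Theta((j'-i'+1)/d)$ segments of length $d$, for a per-run cost of $O(r(\frac{j'-i'+1}{d}+\frac{j'-i'+1}{\log n}))$; the total is then bounded by the sum of exponents of cubic runs being $O(n)$ (Kolpakov--Kucherov) together with Lemma~\ref{sum_runs}. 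If you instead try to realize your per-anchor claim by invoking Lemma~\ref{GeneralInRunLemma} for the segments relevant to a given anchor, the cost per anchor is about $O(\frac{|v|}{d}(r+\frac{rd}{\log n}))=O(\frac{r|v|}{d}+\frac{r|v|}{\log n})$, and the $\frac{r|v|}{d}$ term is not dominated by $\frac{r|v|}{\log n}$ when $d$ is small; summed over the anchors inside a long run and over all bucket levels $k$ whose anchors fit in that run, this can reach $\Theta(rn\log n)$. Moreover, per-anchor processing would revisit the same run once per level $k$, whereas the paper's bound crucially uses that each run is processed only once, independently of $k$. Your final summation also misattributes Lemma~\ref{sum_runs}: in the paper that lemma is used precisely for this per-run accounting of periodic instances, while the $O((n/\frac{\log n}{\log\log n})\cdot\frac{\log n}{\log\log n})$ term comes from counting anchors with $\frac{\log n}{16\log\log n}\le|v|\le\log n$. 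So to complete the proof you must either adopt the paper's separate, run-driven pass for in-a-run and in-two-runs instances (with the exponent-sum and Lemma~\ref{sum_runs} argument), or supply a genuinely new argument that these instances can be encoded at $O(r+\frac{r|v|}{\log n})$ cost per anchor, which your proposal does not provide.
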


\section{Non-periodic Anchor Substring $v$}\label{SectNonPeriodic}

As described in the \emph{General Strategy} paragraph, we first choose an anchor string~$v$ occurring in $w_1$ and then try to construct an instance of the pattern $p$ around this $v$. So, let $v$ be a substring of $t$ of length $\lceil\alpha^k\rceil$ starting at position $q_1 = h|v| + 1$ for some integer $h \ge 0$ (the case of position $h|v| + \lfloor\frac{|v|}2\rfloor$ is similar). As explained before, we will iterate through all possible values of $h$, which allows us to identify all instances of the pattern. For a fixed $v$, using Lemma~\ref{SubstringRun}, we check whether it is periodic. In this section, we suppose that $v$ is not periodic; the case of periodic $v$ is considered in Section~\ref{SectPeriodic}.

Our aim is to find all instances $t[i..j] = s_1w_1s_2w_2\cdots s_{r-1}w_{r-1}s_r$ of $p$ in which $w_1$ contains $v$ and has length close to $|v|$, i.e., $i$ and $j$ must be such that $i + |s_1| \le q_1 < q_1 + |v| \le i + |s_1w_1|$ and $\frac{3}2|v| < |w_1| = \cdots = |w_{r-1}| \le 2|v|$.

Let $t[i..j]$ be such a substring. It follows from the inequality $\frac{3}2|v| < |w_1| \le 2|v|$ that we can compute a relatively small interval of $t$ where the $v$ (or $\lvec{v}$) corresponding to $w_2$ may occur. More precisely, if $w_1 = w_2$ (resp., $w_1 = \lvec{w_2}$), then the string $v$ (resp.,~$\lvec{v}$) has an occurrence starting at a position from the interval $[q_1 + |vs_2| .. q_1 + |vvs_2v|]$. Since $v$ is not periodic, the length of the overlap between any two distinct occurrences of $v$ is less than $\frac{|v|}2$. Hence, there are at most four occurrences of $v$ (resp., $\lvec{v}$) starting in $[q_1 + |vs_2| .. q_1 + |vvs_2v|]$. To find these occurrences, our algorithm applies the following general lemma for $\lambda = |s_2|$.

\begin{lemma}
Let $\lambda \ge 0$ be an integer. We can preprocess the text $t$ of length $n$ in $O(n)$ time to produce data structures allowing us to retrieve, for any given non-periodic substring $v = t[q..q'{-}1]$, all occurrences of $v$ and $\lvec{v}$ starting in the substring $t[q' + \lambda .. q' + \lambda + 2|v|]$ in:
\begin{itemize}
\item  $O(\frac{|v|}{\log n})$ time if $|v| > \log n$,
\item $O(\frac{\log n}{\log\log n})$ time if $\frac{\log n}{16\log\log n} \le |v| \le \log n$, and
\item$O(1)$ time otherwise.
\end{itemize}\label{vFind}
\end{lemma}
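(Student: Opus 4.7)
The plan is to combine the $\lcp$ structure already built on $t\lvec{t}$ with a three-way case split on $|v|$, exploiting that non-periodicity of $v$ sharply restricts how occurrences can cluster. Indeed, if $v$ has no period ${\le}|v|/2$, then the overlap of any two distinct occurrences of $v$ (or of $\lvec{v}$) is strictly shorter than $|v|/2$, so their start positions differ by more than $|v|/2$; hence the window $W := [q'{+}\lambda..q'{+}\lambda{+}2|v|]$, of length $2|v|{+}1$, contains at most $4$ candidate starts of $v$ and at most $4$ of $\lvec{v}$. Once a candidate position $p$ is known, a single $\lcp$ (or $\rlcp$) query decides in $O(1)$ whether $v$ or $\lvec{v}$ really occurs at $p$, so the task reduces to \emph{locating} those $O(1)$ candidates in $W$ within the prescribed time.

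For the long regime $|v|>\log n$ I would partition $W$ into $O(|v|/\log n)$ blocks of $\Theta(\log n)$ consecutive positions and precompute, in $O(n)$ total time, one ``four-Russians''-style word per block that encodes, for every suffix-tree node of string-depth $>\log n$, which positions of the block lie in that node's occurrence interval. Identifying $v=t[q..q'{-}1]$ with its suffix-tree node via a weighted-ancestor query in $O(1)$, the scan of the $O(|v|/\log n)$ blocks each contributes $O(1)$ work, which meets the bound. The $\lvec{v}$ case is treated symmetrically using the suffix tree of $\lvec{t}$.

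For the middle regime $\tfrac{\log n}{16\log\log n}\le|v|\le\log n$, the block table becomes too large, but now $|W|=O(\log n)$: I would insert the suffix-array ranks of the positions in $W$ into an on-the-fly fusion-tree-type predecessor structure and query, in $O(\log n/\log\log n)$ time, the $O(1)$ positions whose rank falls in the suffix-array range of $v$ (and, on the mirror structure, in the range of $\lvec{v}$). For the shortest regime $|v|<\tfrac{\log n}{16\log\log n}$, I would precompute for each position $p$ an $O(\log n)$-bit identifier of the suffix-tree node at string-depth $\lceil\log n/(16\log\log n)\rceil$ on the path to position $p$, and group positions of $t$ by this identifier; then ``find the $O(1)$ occurrences of $v$ in $W$'' becomes a constant-time lookup into the group of $v$'s identifier, intersected with the $W$-interval and confirmed by one $\lcp$ query.

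The principal obstacle is the block-tabulation step for the long regime: one must store, in only $O(n)$ total space, enough information so that every long substring and every block of $\Theta(\log n)$ positions can be queried with a single word operation. The standard remedy classifies suffix-tree nodes into heavy and light classes, represents heavy-node ranges by aggregated bitmasks on the blocks, and light-node ranges by explicit short lists, reusing the packed representations developed in~\cite{KosolobovRepetitions}. Once this data structure is in place, the middle and short regimes are handled by routine word-RAM techniques together with the $\lcp$ machinery already at hand.
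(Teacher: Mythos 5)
Your reduction to ``locate $O(1)$ candidate starts, then verify each with one $\lcp$/$\rlcp$ query'' is exactly the right framing (and matches the paper's observation that non-periodicity bounds the number of occurrences in the window by a constant), but the locating step is where your proposal breaks down in all three regimes, and the fixes you sketch do not hold up. In the long regime, a per-block word that records, ``for every suffix-tree node of string-depth $>\log n$'', which block positions lie in that node's occurrence interval cannot exist within $O(n)$ preprocessing/space: there are up to $\Theta(n)$ such nodes, and a position belongs to one node per depth along its path, so the total information is super-linear; the heavy/light ``remedy'' you invoke is precisely the unproved crux, not a standard citation. The paper avoids this by anchoring only on the length-$\log n$ prefix $t_q$ of $v$ (so each text position belongs to exactly one anchor class, giving linear-size occurrence arrays $A_s$), by precomputing $\lambda$-specific pointers $B,\lvec{B}$ to enter the right subarray without a binary search, and --- crucially, a case you never address --- by handling the situation where the anchor $t_q$ itself is periodic and has $\Theta(|v|)$ occurrences in the window: there non-periodicity of $v$ forces any occurrence of $v$ to align with the right end of a run with the anchor's period, and the lists $R'_d$ of runs of length $\ge\log n$ give only $O(\frac{|v|}{\log n})$ candidates.

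The middle and short regimes fail for more elementary reasons. For $\frac{\log n}{16\log\log n}\le |v|\le\log n$ you propose inserting the suffix-array ranks of \emph{all} positions of $W$ into a predecessor structure, but $|W|=2|v|+1$ can be $\Theta(\log n)$, so merely enumerating these positions already exceeds the allowed $O(\frac{\log n}{\log\log n})$ time; the paper instead repeats the anchor scheme with anchors of length $\Theta(\log\log n)$ (arrays $A'$, pointers $B'$, $\lvec{B'}$, run lists $R''_d$), so only $O(\frac{|v|}{\log\log n})$ candidates are ever touched. For $|v|<\frac{\log n}{16\log\log n}$, grouping positions by the suffix-tree node at string-depth $\lceil\frac{\log n}{16\log\log n}\rceil$ does not answer the query: $v$ is \emph{shorter} than that depth, so ``$v$'s identifier'' is not well defined, and distinct occurrences of $v$ in $W$ generally have different continuations and thus land in different groups, so a single-group lookup misses occurrences (and restricting a group's list to the window interval is itself a predecessor problem, not an $O(1)$ lookup, absent $\lambda$-specific pointers). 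The paper's solution here is different in kind: it packs the two relevant text windows (one of them offset by $\lambda$, fixed at preprocessing) into $O(\log n)$ bits via a per-block alphabet reduction and answers by a universal table of size $O(\sqrt{n})$ in $O(1)$ time. As written, your proof has genuine gaps in each regime rather than being an alternative route.
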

\begin{proof}
For $i \in [1..n]$, let $t_i = t[i..i{+}\log n-1]$ be the substring of length $\log n$ starting at position $i$ in $t$. Let $S$ be the set of all distinct strings $t_i$.
Using the suffix array of~$t$, its $\lcp$ structure, and radix sort, we construct in $O(n)$ time the set of arrays $\{A_s\}_{s \in S}$ such that, for any $s\in S$, $A_s$ contains the starting positions of all occurrences of $s$ in $t$ in ascending order. Essentially, for each $s\in S$, we locate an occurrence of $s$ in $t$ and then produce a ``cluster'' of the suffix array of $t$ with the suffixes starting with  $s$, then we radix sort (simultaneously) the positions in these ``clusters'' (all numbers between $1$ and $n$, keeping track of the ``cluster'' from where each position came), to obtain the arrays $A_s$. Further, using the suffix array of the string $t\lvec{t}$, its $\lcp$ structure, and radix sort, we build in $O(n)$ time arrays of pointers $B[1..n]$ and $\lvec{B}[1..n]$ such that, for $i \in [1..n]$, $B[i]$ (resp., $\lvec{B}[i]$) points to the element of $A_{t_i}$ (resp., $A_{\lvec{t_i}}$) storing the leftmost position $j$ with $j \ge i + \lambda$ and $t_i = t_j$ (resp., $\lvec{t_i} = t_j$);
$B[i]$ (resp., $\lvec{B}[i]$) is undefined if there is no such $j$.

\emph{The case $|v| > \log n$.}
In this case, to find all required occurrences of $v$, we note that $v=t[q..q'-1]$ starts with $t_q$. Thus, we first find all occurrences of $t_q$ starting within the segment $[q + \lambda .. q' + \lambda + 2|v|]$. The sequence of all such occurrences forms a contiguous subarray in $A_{t_q}$ and $B[q]$ points to the beginning of this subarray.

In a first case, suppose that the distance between any two consecutive positions stored in this subarray is greater than $\frac{|t_q|}2$. Then there are at most $O(\frac{|v|}{|t_q|}) = O(\frac{|v|}{\log n})$ such occurrences of $t_q$. Some of these occurrences may be extended to form an occurrence of $v$, and they must be identified. To check in constant time whether $v$ occurs indeed at a given position $\ell$ of the subarray we use the $\lcp$ structure and verify whether $\lcp(\ell,q)\geq |v|$.

The case of the string $\lvec{v}$ is analogous but involves $\lvec{t_q}$ and $\lvec{B}$ instead of $t_q$ and~$B$. Hence, we find all required occurrences of $v$ and $\lvec{v}$ in $O(\frac{|v|}{\log n})$ time.

Suppose that the aforementioned subarray of $A_{t_q}$ (resp., $A_{\lvec{t_q}}$), containing the positions of $t_q$ (resp., $\lvec{t_q}$) in the desired range, contains two consecutive occurrences of $t_q$ (resp.,~$\lvec{t_q}$) whose starting positions differ by at most $\frac{|t_q|}2$. Then $t_q$ is periodic. Using Lemma~\ref{SubstringRun}, we compute the minimal period $d$ of $t_q$ and find, in $O(1)$ time, the run $t[i'..j']$ (in the list $R'_d$) containing $t_q$. Recall now that $v$ is not periodic, so we must have that $t[q..j']$ is $\pref_d(v)$, the maximal $d$-periodic prefix of $v$, and $|\pref_d(v)| < |v|$. We now focus on finding the occurrences of $t_q$ in the range $[q' + \lambda .. q' + \lambda + 2|v|]$. Since $R'_d$ contains only runs of length ${\ge}\log n$ and any two runs with period $d$ cannot overlap on more than $d-1$ letters, there are at most $O(\frac{|v|}{\log n})$ runs in $R'_d$ that overlap with the segment $[q + \lambda .. q' + \lambda + 2|v|]$. These runs can be all found in $O(\frac{|v|}{\log n})$ time. Some of them may end with $\pref_d(v)$ and may be extended to the right to obtain an occurrence of $v$ (resp.,~$\lvec{v}$). If $t[i''..j'']$ is one of the runs we found, then there might be an occurrence of $v$ starting at position $j'' - j' + q$ or an occurrence of $\lvec{v}$ ending at position $i'' + j' - q$. So, using the $\lcp$ structure, in a similar way as before, we find all required occurrence of $v$ (resp.,~$\lvec{v}$) in $O(\frac{|v|}{\log n})$ time.

It remains to consider how to find all occurrences of $v = t[q..q'{-}1]$ (resp., $\lvec{v}$) starting in the segment $[q'{+}\lambda .. q'{+}\lambda{+}2|v|]$ in the case $\frac{\log n}{16\log\log n} \le |v| \le \log n$ and $|v| < \frac{\log n}{16\log\log n}$.

\emph{The case $\frac{\log n}{16\log\log n} \le |v| \le \log n$.}
This case is similar to the case $|v| > \log n$. For $i \in [1..n{-}\lfloor\log\log n\rfloor]$, define $t'_i = t[i..i{+}\lfloor\log\log n\rfloor]$. Let $S'$ be the set of all distinct strings $t'_i$. In the same way as in the case $|v| > \log n$, using the suffix array of~$t$, its $\lcp$ structure, and radix sort, we construct in $O(n)$ time the set of arrays $\{A'_{s'}\}_{s'\in S'}$ such that, for any $s'\in S'$, $A'_{s'}$ contains the starting positions of all occurrences of $s'$ in $t$ in ascending order. Further, using the suffix array of the string $t\lvec{t}$, its $\lcp$ structure, and radix sort, we build in $O(n)$ time arrays of pointers $B'[1..n]$ and $\lvec{B'}[1..n]$ such that, for $i \in [1..n]$, $B'[i]$ (resp., $\lvec{B'}[i]$) points to the element of $A'_{t'_i}$ (resp., $A'_{\lvec{t'_i}}$) storing the leftmost position $j$ with $j \ge i + \lambda$ and $t'_i = t'_j$ (resp., $\lvec{t'_i} = t'_j$); $B'[i]$ (resp., $\lvec{B'}[i]$]) is undefined if there is no such $j$. Now we proceed like in the case $|v| > \log n$ but use $t'_q$ instead of $t_q$, the arrays $A'_{t'_q}$, $B'$, $\lvec{B'}$ instead of $A_{t_q}$, $B$, $\lvec{B}$, and the list $R''_d$ instead of $R'_d$. The processing takes $O(\frac{|v|}{\log\log n}) = O(\frac{\log n}{\log\log n})$ time.

\emph{The case $|v| < \frac{\log n}{16\log\log n}$.}
Using radix sort, we can reduce the alphabet of $t$ to $[0..n)$ in $O(n)$ time; let $\$$ be a new letter. For $h \in [0..\frac{n}{\log n})$, let $e_h = t[h\log n{+}1..h\log n{+}2\log n]$ and $f_h = t[h\log n{+}\lambda..h\log n{+}\lambda{+}5\log n]$ assuming $\$ = t[n{+}1] = t[n{+}2] = \ldots$, so that $e_h$ and $f_h$ are well defined. Note that $v$ is a substring of $e_h$ for $h = \lfloor\frac{q - 1}{\log n}\rfloor$ and, if there is an occurrence of $v$ (resp., $\lvec{v}$) starting in the segment $[q' + \lambda .. q' + \lambda + 2|v|]$, then this occurrence is a substring~of~$f_h$.

For each $h \in [0..\frac{n}{\log n})$, our algorithm constructs a string $g_h = e_h\$f_h$ and reduces the alphabet of $g_h$ to $[1..|g_h|]$ as follows. Let $E[0..n]$ be an array of integers filled with zeros. While processing $g_h$, we maintain a counter $c$; initially, $c = 0$. For $i = 1,2,\ldots,|g_h|$, we check whether $E[g_h[i]] = 0$ and, if so, assign $c \gets c + 1$ and $E[g_h[i]] \gets c$. Regardless of the result of this check, we perform $g_h[i] \gets E[g_h[i]]$. Once the alphabet of $g_h$ is reduced, we clear all modified elements of $E$ using an unmodified copy of $g_h$ and move on to $g_{h+1}$. Thus, the reductions of the alphabets of all $g_h$ take $O(n + \sum_{h=0}^{\lfloor n/\log n\rfloor}|g_h|) = O(n)$ overall time.

Each letter in a string $g_h$ fits in $\lceil\log(|g_h|+1)\rceil \le 2\lceil\log\log n\rceil$ bits. Hence, the substrings of $g_h$ corresponding to the substrings $v = t[q..q'{-}1]$ and $t[q' + \lambda .. q' + \lambda + 3|v|]$ together fit in $8|v|\lceil\log\log n\rceil \le \frac{\log n}{2}$ bits. Thus, we can perform the searching of $v$ (resp., $\lvec{v}$) in $t[q' + \lambda .. q' + \lambda + 3|v|]$ in $O(1)$ time using a precomputed table of size $O(2^{\frac{\log n}2}) = O(\sqrt{n})$.
\qed
\end{proof}

Recall that $q_1$ was the starting point of $v$ (for simplicity, assume that $v=t[h_1..h_2]$, where $h_1$ is an alias of $q_1$ that is only used for the uniformity of the notation). Let $q_2 \in [q_1 + |vs_2| .. q_1 + |vs_2vv|]$ be the starting position of an occurrence of $v$ (or $\lvec{v}$) found by Lemma~\ref{vFind}. We now want to see whether there exists an instance of the pattern that has the anchor $v$ from $w_1$ occurring at position $q_1$ and the corresponding $v$ (resp., $\lvec{v}$) from $w_2$ occurring at $q_2$.

If $x_1 = x_2$ (and, consequently, $w_1=w_2$), then $\beta = q_2 - q_1 - |s_2|$ is the length of substitution $w_1$ of $x$ that could produce the occurrence of $v$ at position $q_2$. Once the length $\beta$ is computed, we get that $w_1$ can start somewhere between $h_2-\beta-|s_1|$ and $q_1-|s_1|=h_1-|s_1|$, so all corresponding instances of $p$ will start in the interval $[h_2{-}\beta{-}|s_1|{+}1..h_1{-}|s_1|]$.
These instances (determined by $h_1=q_1$, $|v|$, and $\beta$) can be found by the following lemma (see the case $x_1 \ne x_2$ in Appendix).

For a given $\beta$, let $L_p(\beta)=|s_1s_2\cdots s_r| + (r - 1)\beta $, that is, the length of the image of the pattern $p$ when $x$ is substituted by a variable of length $\beta$.
\begin{lemma}
Given a substring $t[h_1..h_2] = v$ and an integer $\beta \ge |v|$, we can compute a bit array $occ[h_2{-}\beta{-}|s_1|{+}1..h_1{-}|s_1|]$ such that, for any $i$, we have $occ[i] = 1$ iff the string $t[i..i{+}L_p(\beta) - 1]$ is an instance of $p$ containing $v$ in its substring that corresponds to $w_1$ (i.e., $i + |s_1| \le h_1 < h_2 < i + |s_1| + \beta$). This computation takes $O(r + \frac{r\beta}{\log n})$ time, to which we add $O(\frac{\log n}{\log\log n})$ time when $\frac{\log n}{16\log\log n} \le |v| \le \log n$.\label{FixedLen}
\end{lemma}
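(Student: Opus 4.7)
The plan is to express $occ[i]=1$ as a conjunction of $r$ ``terminal'' conditions ($s_z$ lands at the right position) and $r-1$ ``substitution'' conditions ($w_z$ equals $w_1$ or $\lvec{w_1}$), produce each as a bit array of length $\beta-|v|+1$ in $O(1)+O(\beta/\log n)$ time, and AND the $r$ arrays together. Since each array has $O(\beta/\log n)$ machine words, the total cost will be $O(r+r\beta/\log n)$. Setting $j=i+|s_1|$ and $\sigma_z=|s_1\cdots s_{z-1}|$, the $z$-th terminal condition reads $D_z[i+\sigma_z+(z-1)\beta]=1$, which is a fixed shift of $D_z$; the relevant slice is extracted bit-parallelly in $O(\beta/\log n)$ word operations and ANDed across all $z$'s.

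For the straight substitution conditions (those $z$ with $x_z=x_1$), let $c_z$ be the offset between $w_z$ and $w_1$, and decompose $w_1=AvB$ with $A=t[j..h_1{-}1]$ and $B=t[h_2{+}1..j{+}\beta{-}1]$. Then $w_z=w_1$ factors into (a) the one-shot check $\lcp(h_1,h_1{+}c_z)\ge|v|$ that $v$ sits at position $h_1+c_z$; (b) $h_1-j\le \rlcp(h_1{-}1,h_1{+}c_z{-}1)$; and (c) $j+\beta-h_2-1\le \lcp(h_2{+}1,h_2{+}c_z{+}1)$. Hence the valid $j$'s form an interval (empty if (a) fails), whose indicator bit array is built in $O(1)+O(\beta/\log n)$ time per $z$, and the intersection over all straight $z$'s costs $O(r+r\beta/\log n)$.

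For the reversed substitution conditions ($x_z=\lvec{x_1}$), the decomposition $\lvec{w_1}=\lvec{B}\lvec{v}\lvec{A}$ forces an occurrence of $\lvec{v}$ at the position $q_z(j)=2j+c_z+\beta-h_1-|v|$ inside $t$, which shifts by $2$ per unit of $j$, so the clean ``one interval'' reduction of the straight case is no longer available. Instead I plan to enumerate all occurrences of $\lvec{v}$ in the $O(\beta)$-length window of $t$ traced out by $q_z(j)$, re-using the lookup tables $\{A_s\}$, $B$, $\lvec{B}$ (and their small-$v$ analogues) built for Lemma~\ref{vFind}; non-periodicity of $v$ caps the number of such occurrences at $O(\beta/|v|)$. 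For each occurrence, the candidate $j$ is recovered from $q_z(j)$, and the outer pieces $\lvec{A}$, $\lvec{B}$ are verified in $O(1)$ using $\lcp$/$\rlcp$-style queries on the $\lcp$ structure built for $t\lvec{t}$. Marking the survivors in a bit array and ANDing across all reversed $z$'s keeps the per-$z$ cost inside $O(\beta/\log n)$ when $|v|\ge\log n$, and absorbs the stated $O(\log n/\log\log n)$ overhead in the medium-$|v|$ regime.

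The main obstacle will be this reversed case: the linear coupling $q_z(j)=2j+\text{const}$ prevents the ``one $\lcp$-query yields a $j$-interval'' argument used in the straight case, so driving the per-$z$ cost down to $O(\beta/\log n)$ relies crucially on non-periodicity of $v$ (to bound the number of $\lvec{v}$-occurrences in any $O(\beta)$-window) and on re-using the Lemma~\ref{vFind} indexes rather than re-preprocessing inside this lemma.
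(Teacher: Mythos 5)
Your handling of the terminal conditions and of the straight occurrences ($x_z=x_1$) is sound and matches the paper's proof: the paper also anchors everything at the fixed occurrence $t[h_1..h_2]=v$, computes common left/right extensions $b_{\ell},b_r$ with $O(r)$ $\lcp$/$\rlcp$ queries (your conditions (b),(c)), and intersects shifted slices of the arrays $D_z$ by word-parallel ``and''. The genuine gap is in your reversed case, and you name its weak point yourself: the argument ``relies crucially on non-periodicity of $v$''. But the lemma has no such hypothesis, and the paper invokes Lemma~\ref{FixedLen} precisely with \emph{periodic} anchors: in Lemma~\ref{SeparateLemma} (and again in the two-runs analysis) the string $v=t[h_1..h_2]$ has minimal period $d\le\frac{|v|}{2}$ and $\beta=|w_1|$ is computed there before calling Lemma~\ref{FixedLen}. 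For such $v$, the window traced by $q_z(j)$ may contain $\Theta(\beta/d)$, i.e.\ up to $\Theta(\beta)$, occurrences of $\lvec{v}$, and correspondingly $\Theta(\beta)$ genuine candidate values of $j$ per reversed $z$; enumerating and verifying them one by one costs $\Theta(r\beta)$, a $\log n$ factor above the allowed $O(r+\frac{r\beta}{\log n})$, and no candidate-by-candidate scheme can fit the budget because in this regime the answer itself must be produced in compressed bit-array form. This is exactly why the paper's proof does something entirely different for the mixed $x,\lvec{x}$ case: it splits $D$ into eight chunks, shows that two valid starts in one chunk force the substitutions to be periodic with period $2\gamma\le\frac{\beta}{4}$, locates the $O(r)$ runs containing the suspected substitutions, filters $D$ with a congruence $2s(i)\equiv\eta\pmod d$ obtained from Lyndon roots (Lemma~\ref{LyndonRoots}), and only in the residual ``at most one instance per chunk'' situation falls back on an aperiodic substring $\mu$ (or its extensions $\mu_1,\mu_2$) of length about $\frac{\beta}{2}$ and a modified Lemma~\ref{vFind} search. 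None of that machinery is replaceable by your occurrence-enumeration argument once $v$ is periodic.

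A secondary issue: even granting non-periodic $v$, ``re-using'' the structures of Lemma~\ref{vFind} is not immediate, since the pointer arrays $B[\cdot],\lvec{B}[\cdot]$ (and the $\lambda$-shifted blocks $e_h,f_h$ underlying the small-$|v|$ table trick) are hard-wired to the single offset $\lambda$ fixed at preprocessing, whereas your window starts at $q_z(j)$, which varies with $z$ and $\beta$; locating the relevant subarray of $A_{\lvec{t_q}}$ then requires a predecessor search, which is not accounted for. Moreover, for $|v|<\log n$ and $\beta\gg|v|$ (the statement only assumes $\beta\ge|v|$) the bound $O(\beta/|v|)$ on the number of occurrences is not $O(\beta/\log n)$, and the additive $O(\frac{\log n}{\log\log n})$ slack is granted only in the medium regime. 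These points could plausibly be patched in the regime $\beta=O(|v|)$ with non-periodic $v$ that arises in Section~\ref{SectNonPeriodic}, but the periodic-$v$ case above is an essential missing idea, not a technicality.
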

\begin{proof} The general idea of the proof is as follows. Knowing where $v$ (which anchors $w_1$, which substitutes $x$) starts and knowing the length $|w_1|$, we know, if $x_1 = \cdots = x_{r-1}$, where the corresponding occurrences of $v$ from $w_2,\ldots,w_{r-1}$ should be positioned (the case when $x_i \ne x_j$, for some $i \ne j$, is analyzed using more complicated ideas, e.g., from~\cite{CrochemoreEtAl}; see Appendix). We check, in $O(r)$ time, if they indeed occur at those positions. Suppose this checking succeeds. These $v$'s might correspond to more instances of $p$ as in Fig.~\ref{fig:shades}. We further check where the $w_z$'s corresponding to occurrences of $x$ in $p$ may occur.
\begin{figure}[htb]
\centering
    \includegraphics[scale=0.22]{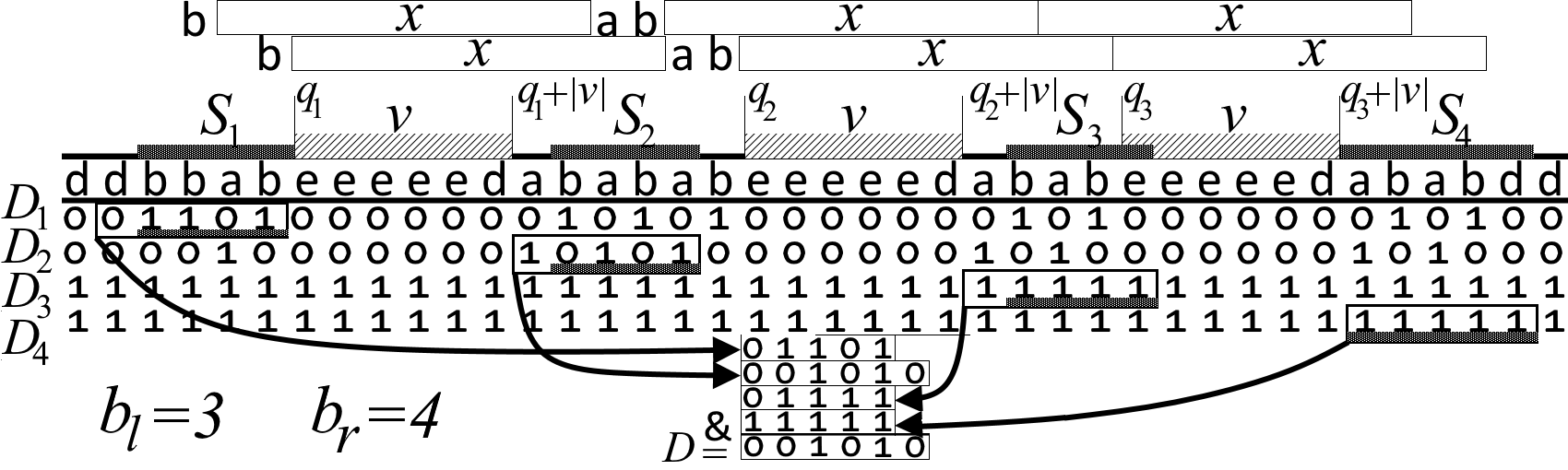}
    \caption{{\small Two instances of the pattern $p = bxabxx$.}}
    \label{fig:shades}
\end{figure}

To this end, we measure how much can we extend simultaneously,
with the same string to the left (respectively, to the right), the occurrences of $v$ corresponding to these $w_i$'s. This will give us ranges of the same length, around each of the $v$'s, that contain all possible $w_i$'s. We follow a similar strategy for the $w_j$'s corresponding to $\lvec{x}$ in $p$ (see the details below). Now, all it remains is to see whether we can glue together some occurrences of $w_1,w_2,\ldots,w_r$ from the respective ranges, by identifying between them exactly the strings $s_1,s_2,\ldots,s_r$. This is be done efficiently using the arrays storing the occurrences of the $s_i$'s, and standard bitwise operations.
Let us formalize this explanation.

For $z \in [1..r)$, denote $q_z = h_1 + |s_2s_3\cdots s_z| + (z{-}1)\beta$. Denote by $Z$ (resp., $\lvec{Z}$) the set of all $z\in [1..r)$ such that $x_z = x$ (resp., $x_z = \lvec{x}$). If there is an instance $t[i..j] = s_1w_1s_2w_2\cdots s_{r-1}w_{r-1}s_r$ of $p$ such that $|w_1| = \cdots = |w_{r-1}| = \beta$ and $i + |s_1| \le h_1 < h_2 < i + |s_1w_1|$, then, for any $z, z' \in Z$ (resp., $z, z' \in \lvec{Z}$), $t[q_z..q_z{+}|v|{-}1] = t[q_{z'}..q_{z'}{+}|v|{-}1]$. We check these equalities in $O(r)$ time using the $\lcp$ structure. Suppose this checking succeeds. There might exist many corresponding instances of $p$ as in Fig.~\ref{fig:shades}.

We can immediately calculate the numbers $b_{\ell} = \min\{\rlcp(q_z{-}1, q_{z'}{-}1) \colon$ $(z, z') \in (Z\times Z) \cup (\lvec{Z}\times\lvec{Z})\}$ and $b_{r} = \min\{\lcp(q_z{+}|v|, q_{z'}{+}|v|)\colon$ $(z, z') \in (Z\times Z)\cup(\lvec{Z}\times\lvec{Z})\}$ in $O(r)$ time. Assume that $t[i..j] = s_1w_1s_2w_2\cdots s_{r-1}w_{r-1}s_r$ is an instance of $p$ with $|w_1| = \cdots = |w_{r-1}| = \beta$ and $i + |s_1| \le h_1 < h_2 < i + |s_1w_1|$. By the definition of $b_{\ell}$ and $b_r$, we then necessarily have $q_z - \delta \ge q_z - b_{\ell}$ and $q_z - \delta + \beta \le q_z + |v| + b_{r}$ for all $z \in [1..r)$, where $\delta = h_1 - (i + |s_1|)$.

Thus, the next segments are non-empty (see Fig.~\ref{fig:shades}):\\
\indent $S_z = [q_z - |s_z| - b_{\ell}\ ..\ q_{z-1} + |v| + b_{r}]\ \cap\ [q_{z-1} + |v|\ ..\ q_z - |s_z|]\text{ for }z \in (1..r)$,\\
\indent $S_1 = [q_1 - |s_1| - b_{\ell}\ ..\ q_1 - |s_1|]\ \cap\ [q_1 + |v| - |s_1| - \beta\ ..\ q_1 - |s_1|]$,\\
\indent $S_r = [q_{r-1} + |v|\ ..\ q_{r-1} + |v| + b_{r}]\ \cap\ [q_{r-1} + |v|\ ..\ q_{r-1} + \beta].$

Further, if such instance $t[i..j]$ exists, then there is a sequence of positions $\{i_z\}_{z=1}^r$ such that $i_z \in S_z$, $D_z[i_z] = 1$ for $z \in [1..r]$ and $i_{z+1} - i_z = |s_z| + \beta$ for $z \in [1..r)$ (namely, $i_1 = i$). If $x_1 = \cdots = x_{r-1}$, then the converse is also true: if a sequence $\{i_z\}_{z=1}^r$ satisfies all these conditions, then $t[i_1..i_r{+}|s_r|{-}1] = s_1ws_2w\cdots s_{r-1}ws_r$, where $|w| = \beta$ and $i + |s_1| \le h_1 < h_2 < i + |s_1| + \beta$. The bit arrays $\{D_z\}_{z=1}^r$ help us to find all such sequences.

Let  $D'_1{=}D_1[q_1{+}|v|{-}|s_1|{-}\beta .. q_1{-}|s_1|]$, $D'_r{=}D_r[q_{r-1}{+}|v| .. q_{r-1}{+}\beta]$ and $D'_z = D_z[q_{z-1}{+}|v|..q_z{-}|s_z|]$ for $z \in (1..r)$. For each $z \in [1..r]$, we clear in the array $D'_z$ all bits corresponding to the regions that are not covered by the segment $S_z$ and then perform the bitwise ``and'' of $D'_1,\ldots,D'_r$; thus, we obtain a bit array $D[0..\beta{-}|v|]$ (see Fig.~\ref{fig:shades}). If $x_1 = \cdots = x_{r-1}$, then, for any $i \in [0..\beta{-}|v|]$, we have $D[i] = 1$ iff there is a string $s_1ws_2w\cdots s_{r-1}ws_r$ starting at $i' = h_2 - \beta - |s_1| + i + 1$ such that $|w| = \beta$ and $i' + |s_1| \le h_1 < h_2 < i' + |s_1w|$. Obviously, one can put $occ[h_2{-}\beta{-}|s_1|{+}1..h_1{-}|s_1|] = D[0..\beta{-}|v|]$. Since the length of each of the arrays $D'_1, \ldots, D'_r$ does not exceed $\beta$, all these calculations can be done in $O(r{+}\frac{r\beta}{\log n})$ time by standard bitwise operations on the $\Theta(\log n)$-bit machine words.

If $p$ contains both $x$ and $\lvec{x}$, it is not clear how to check whether the substitutions of $x$ and $\lvec{x}$ corresponding to a given $D[i] = 1$ respect each other. The case when $p$ contains both $x$ and $\lvec{x}$ turns out to be much more difficult; see Appendix.
\qed
\end{proof}

\section{Periodic Anchor Substring $v$}\label{SectPeriodic}

In this section we suppose $v$ is periodic. Recall that $v$ starts at $q_1$ and we also know its length. By Lemma~\ref{SubstringRun}, we find in $O(1)$ time the minimal period $d$ of $v$ and a run $t[i'..j']$ with period $d$ containing $v$ (i.e., $i' \le q_1 < q_1 + |v| - 1 \le j'$).

Just like before, we are searching for instances $t[i..j] = s_1w_1\cdots s_{r-1}w_{r-1}s_r$ of $p$ such that $\frac{3}2|v| < |w_1| \le 2|v|$ and $v$ occurs in $w_1$, so at least $|s_1|$ symbols away from $i$ (in other words, $i + |s_1| \le q_1 < q_1 + |v| \le i + |s_1w_1|$). Let us assume that $t[i..j]$ is such an instance. Then, either $w_1$ has period $d$ or one of the strings $v' = t[q_1 .. j'{+}1]$ or $v'' = t[i'{-}1 .. q_1{+}|v|{-}1]$ is a substring of $w_1$ (that is, the run containing $v$ ends or, respectively, starts strictly inside $w_1$).

Suppose first that $w_1$ contains $v'$ as a substring (the case of $v''$ is similar); note that $v'$ is the suffix of the run $t[i'..j']$ starting at position $q_1$, to which a letter that breaks the period was added. One can show that, since the minimal period of $t[q_1 .. j']$ is $d$, $2d \le j' - q_1 + 1$, and $t[j'{+}1] \ne t[j'{+}1{-}d]$, the string $v'$ is not periodic. Hence, $v'$ can be processed in the same way as $v$ in Section~\ref{SectNonPeriodic}, and get the instances of $p$ that occur around it. A similar conclusion is reached when $w_1$ contains $v''$, so we assume in the following that $w_1$ is periodic.

Suppose that $w_1$ has period $d$. Periodic substitutions of $x$ (such as $w_1$) can produce a lot of instances of $p$: e.g., $a^n$ contains $\Theta(n^2)$ instances of $xx$. However, it turns out that when such multiple instances really occur, they have a uniform structure that can be compactly encoded and appear only when all substitutions of $x$ and $\lvec{x}$ lie either within one or two runs. Before the discussion of this case, let us first consider the case when three or more runs contain $w_1,\ldots,w_{r-1}$.
Due to space constraints, some proofs are moved to Appendix.

\paragraph{Three and more runs.}
Let $t[i..j]$ be an instance of $p$ with a substitution of $x_1=x$ denoted by $w_1=w$ and such that $w$ has period $d$. Moreover, for our chosen $v$ starting at position $q_1$, we still have $\frac{3}2|v| < |w| \le 2|v|$ and $v$ occurs inside $w_1$ (i.e., $i + |s_1| \le q_1 < q_1 + |v| \le i + |s_1w|$). Since $|v| \ge 2d$, we have $|w| \ge \frac{3}2|v| \ge 3d$. Clearly, each substitution of $x$ or $\lvec{x}$ in $t[i..j]$ is contained in some run with period $d$ (some of these runs may coincide). It turns out that if all substitutions of $x$ and $\lvec{x}$ in $t[i..j]$ are contained in at least three distinct runs with period $d$, then there are only constantly many possibilities to choose the length $|w|$, and these possibilities can be efficiently found and then processed by Lemma~\ref{FixedLen} to find the instances of the pattern. To begin with, let us introduce several  lemmas; in their statements $w$ and $s$ are strings (extensions for reversals are given in Appendix).

\begin{lemma}
Let $ws$ be a substring of $t$ such that $w$ has period $d$, $|w| \ge 3d$, and $ws$ does not have period $d$. Let $t[i..j]$ be a run with period $d$ containing $w$ and let $h$ be the starting position of $s$. Then, either $h = j - |\pref_d(s)| + 1$ or $h \in (j{+}1{-}d..j{+}1]$.\label{wsw}
\end{lemma}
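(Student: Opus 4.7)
The plan is to identify the first position at which period $d$ is violated in $ws$ and then use the maximality of the run $t[i..j]$ to pin $h$ down. Write $w = t[k..k{+}|w|{-}1]$, so $h = k + |w|$; the assumption that $[i..j]$ contains $w$ gives $i \le k$ and $k + |w| - 1 \le j$, hence $h \le j + 1$. Since $w$ has period $d$ but $ws$ does not, there is a smallest $\ell > |w|$ with $(ws)[\ell] \ne (ws)[\ell - d]$; setting $m = \ell - |w| \ge 1$, this reads $t[h{+}m{-}1] \ne t[h{+}m{-}1{-}d]$ inside $t$. A preliminary remark: the hypotheses force $j < n$, because otherwise $ws$ would be a substring of the run $t[i..j]$ and would inherit period $d$.

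The first key step is to show $h + m - 1 > j$. Otherwise both $h + m - 1$ and $h + m - 1 - d$ lie in $[i..j]$ (using $h + m - 1 - d \ge h - d = k + |w| - d \ge k + 2d \ge i$, from $|w| \ge 3d$), and the run's periodicity forces the contrary equality $t[h{+}m{-}1] = t[h{+}m{-}1{-}d]$. Thus $h \ge j - m + 2$.

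The second step is a case analysis on $m$. If $m \le d$, then $h \ge j + 2 - d$ combined with $h \le j + 1$ yields $h \in (j{+}1{-}d..j{+}1]$, the second alternative. If $m > d$, I first establish $|\pref_d(s)| = m - 1$: the prefix $s[1..m{-}1]$ inherits period $d$ from $ws[1..|w|{+}m{-}1]$, which has period $d$ by minimality of $\ell$; and $s[1..m]$ fails period $d$ because, for $m > d$, the violation at $\ell$ translates to $s[m] \ne s[m{-}d]$. Then I prove $h \le j - m + 2$ by examining $t[k..h{+}m{-}2] = (ws)[1..|w|{+}m{-}1]$, which has period $d$. If $h + m - 2 \ge j + 1$, this periodicity would force $t[j{+}1] = t[j{+}1{-}d]$ (the index $j + 1 - d$ lies in $[k..h{+}m{-}2]$, since $k \le j + 1 - |w| \le j + 1 - 3d \le j + 1 - d$), contradicting $t[j{+}1] \ne t[j{+}1{-}d]$ which holds by maximality of $[i..j]$ because $j < n$. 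Combined with $h \ge j - m + 2$, this gives $h = j - m + 2 = j - |\pref_d(s)| + 1$, the first alternative.

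The main obstacle is the case $m > d$: one has to recognize that the violation inside $ws$ actually witnesses a break of the period of $s$ itself (so that $|\pref_d(s)| = m - 1$), and then squeeze $h$ from both sides by playing the periodicity of $(ws)[1..|w|{+}m{-}1]$ against the maximality of the run $[i..j]$.
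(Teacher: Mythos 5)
Your proof is correct and rests on the same two facts the paper uses: period $d$ holds throughout the run $t[i..j]$, and run maximality (with $j<n$, which you rightly justify) gives $t[j{+}1]\ne t[j{+}1{-}d]$; your case split on the offset $m$ of the first period violation is just a reparametrization of the paper's direct split on whether $h\le j{+}1{-}d$, with the $m>d$ case reproducing the paper's argument that $|\pref_d(s)|$ ends exactly at position $j$. So it is essentially the paper's proof, carried out with somewhat more explicit bookkeeping.
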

\begin{proof}
Suppose that $h \le j + 1 - d$. Then, $|\pref_d(s)| \ge j - h + 1 \ge d$. Thus, since $t[j{+}1] \ne t[j{+}1{-}d]$, $|\pref_d(s)|$ must be equal to $j - h + 1$ and hence $h = j - |\pref_d(s)| + 1$.\qed
\end{proof}

\begin{lemma}
Let $wsw$ (resp., $\lvec{w}sw$)
be a substring of $t$ such that $w$ has period $d$, $|w| \ge 3d$, and $wsw$ (resp., $\lvec{w}sw$)
does not have period $d$. Let $t[i..j]$ be a run with period $d$ containing the first occurrence of $w$ (resp., $\lvec{w}$)
in $wsw$ (resp., $\lvec{w}sw$).
Denote by $h$ the starting position of~$s$. Then, we have $h = j - |\pref_d(s)| + 1$ or $h \in (j{+}1{-}d..j{+}1]$ or $h \in (j{-}|s|{-}d..j{-}|s|]$.\label{wsw2}
\end{lemma}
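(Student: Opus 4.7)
The plan is to split into two cases according to whether the prefix $ws$ (or $\lvec{w}s$ in the reversal version) already breaks the period $d$, or whether the break appears only when the second $w$ is appended. In the former case, I would apply Lemma~\ref{wsw} directly: the hypotheses match verbatim, and in the reversal version I would use that $\lvec{w}$ inherits both the period $d$ and the length bound $\ge 3d$ from $w$, so the same proof carries over. This yields $h = j - |\pref_d(s)| + 1$ or $h \in (j{+}1{-}d..j{+}1]$, covering the first two options of the conclusion.

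The harder case is when $ws$ (or $\lvec{w}s$) has period $d$. Here I would first observe that, since $w$ (or $\lvec{w}$) has period $d$ and length $\ge 3d > d$, the given run $t[i..j]$ is the unique run with period $d$ containing it, because two distinct runs sharing the same period cannot overlap in $d$ or more positions. The periodic extension $ws$ must therefore also lie inside $t[i..j]$, giving $h + |s| - 1 \le j$. Since $wsw$ does not have period $d$ by hypothesis, the second $w$ cannot fit entirely inside $t[i..j]$ either, so it extends past position $j$. But the second $w$ is itself a period-$d$ string of length $\ge 3d$, hence lies inside some run $t[i_2..j_2]$ with $j_2 > j$, necessarily distinct from $t[i..j]$. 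The overlap bound for distinct period-$d$ runs then forces $i_2 \ge j - d + 2$; combined with $i_2 \le h + |s|$ and the upper bound $h + |s| \le j + 1$, I would conclude $h \in [j - |s| - d + 2,\, j - |s| + 1]$. The boundary value $h = j - |s| + 1$ corresponds exactly to $|\pref_d(s)| = |s|$ (all of $s$ is periodic), so it is absorbed into the first option $h = j - |\pref_d(s)| + 1$, while the remaining values lie inside the third range $(j - |s| - d..j - |s|]$.

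The main obstacle I expect is the case-(b) analysis: rigorously justifying that the second $w$ must lie in a strictly different period-$d$ run, and then converting the ``runs overlap in fewer than $d$ positions'' constraint into the claimed bound on $h$ without off-by-one slips. The $\lvec{w}sw$ version requires only the small additional remark that period $d$ is preserved under reversal, after which the same structural argument transfers unchanged.
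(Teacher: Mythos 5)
Your proof is correct and follows essentially the same route as the paper's: one case is delegated to Lemma~\ref{wsw}, and the other places the second occurrence of $w$ in a distinct run with period $d$ and uses that two distinct such runs overlap in fewer than $d$ positions, combined with $i_2 \le h+|s|$ and the upper bound on $h+|s|$, to pin $h$ into the third range. The only difference is the case split (periodicity of $ws$ versus whether $s$ ends beyond $j$), which is immaterial; your split even has the minor advantage that Lemma~\ref{wsw} is applied with its hypotheses satisfied verbatim, the boundary value $h=j-|s|+1$ being correctly absorbed into the alternative $h=j-|\pref_d(s)|+1$.
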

\begin{proof}
If $h + |s| > j$, then, by Lemma~\ref{wsw}, either $h = j - |\pref_d(s)| + 1$ or $h \in (j{+}1{-}d..j{+}1]$. Suppose that $h + |s| \le j$. Let $t[i'..j']$ be a run with period $d$ containing the last occurrence of $w$ in $wsw$ (resp., $\lvec{w}sw$). Clearly, $i' \le h + |s|$. Hence, since $t[i..j]$ and $t[i'..j']$ cannot overlap on $d$ letters, we obtain $j - d + 1 < h + |s|$. Therefore, $h \in (j{-}|s|{-}d..j{-}|s|]$.\qed
\end{proof}

As the string $w$ is periodic, but the whole image of $p$ is not (it extends over three or more runs), some of the strings $s_z$ must break the period induced by $w$. If we can identify the $s_z$'s which break the period, Lemmas~\ref{wsw} and~\ref{wsw2} allow us to locate their occurrences which, together with the $v$ we considered, might lead to finding corresponding instances of $p$. The next lemma formalizes these ideas (its proof, especially for the patterns containing both $x$ and $\lvec{x}$, is rather non-trivial and uses results from~\cite{GalilSeiferas2,GawrychowskiLewensteinNicholson,KosolobovPalk,Manacher,RubinchikShur}; see Appendix.

\begin{lemma}
Let $v = t[h_1..h_2]$ be a string with the minimal period $d \le \frac{|v|}2$. Given $z, z'$ such that $1 < z < z' < r$, we can find all instances $t[i..j] = s_1w_1s_2w_2\cdots s_{r-1}w_{r-1}s_r$ of $p$ such that $\frac{3}2|v| < |w_1| \le 2|v|$, $v$ is contained in $w_1$,
$w_1s_2w_2\cdots s_{z-1}w_{z-1}$ and $w_zs_{z+1}w_{z+1}\cdots s_{z'-1}w_{z'-1}$ both have period $d$, and $w_{z-1}s_zw_z$ and $w_{z'-1}s_{z'}w_{z'}$ both do not have period $d$, in $O(r + \frac{r|v|}{\log n})$ time. To this we add $O(\frac{\log n}{\log\log n})$ time if $\frac{\log n}{16\log\log n} \le |v| \le \log n$.\label{SeparateLemma}
\end{lemma}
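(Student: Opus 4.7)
The plan is to produce, in $O(r)$ time, a constant number of candidate values for the substitution length $\beta = |w_1|$, and then feed each to Lemma~\ref{FixedLen}. The key preprocessing is Lemma~\ref{SubstringRun}, which locates in $O(1)$ time the run $t[i'..j']$ of minimal period $d$ containing $v$. Any instance satisfying the hypotheses has its first periodic block $w_1 s_2 \cdots s_{z-1} w_{z-1}$ entirely inside $t[i'..j']$, since this block has period $d$, contains the $d$-periodic factor $v$, and $|v| \ge 2d$; in particular the starting position $h$ of $s_z$ satisfies $h - 1 \le j'$.

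Next I apply Lemma~\ref{wsw2} to $w_{z-1} s_z w_z$; the hypothesis $|w_{z-1}| = \beta \ge \frac{3}{2}|v| \ge 3d$ is met. This forces $h$ to be either the single position $j' - |\pref_d(s_z)| + 1$, or to lie in one of two intervals of length $d$ adjacent to the boundary $j'$. For every candidate $h$ the start of $w_z$ is $h + |s_z|$, and a second use of Lemma~\ref{SubstringRun} yields the run $t[i''..j'']$ of period $d$ containing $w_z$. A second application of Lemma~\ref{wsw2}, now to $w_{z'-1} s_{z'} w_{z'}$ against this second run, produces an analogous trichotomy of candidates for the start $h'$ of $s_{z'}$. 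Any pair $(h, h')$ determines $\beta$ uniquely via
\[
h' - h = |s_z s_{z+1} \cdots s_{z'-1}| + (z'-z)\beta,
\]
and one verifies that $\beta$ is an integer in $(\frac{3}{2}|v|, 2|v|]$ and that the implied starting position $i = h - |s_1 \cdots s_{z-1}| - (z-1)\beta$ places $v$ inside $w_1$.

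The main obstacle is that the two ``interval'' branches of Lemma~\ref{wsw2} each contribute $\Theta(d)$ candidates, so the naive product is $\Theta(d^2)$ pairs $(h, h')$, which would blow the time budget. To handle this I use that within each such interval the substring of $t$ lying between $s_z$ (resp.\ $s_{z'}$) and the run boundary is forced to match the $d$-periodic pattern of the run, so the fixed factors $s_2,\ldots,s_{z-1}$ and $s_{z+1},\ldots,s_{z'-1}$ must align with that pattern. This alignment condition, together with the range constraint on $\beta$, leaves only $O(1)$ admissible values of $\beta$. When $x_j = \lvec{x}$ for some $j$, the alignment becomes a condition on palindromic factors of the $d$-period, and I invoke the palindrome and periodicity machinery of \cite{GalilSeiferas2,GawrychowskiLewensteinNicholson,KosolobovPalk,Manacher,RubinchikShur}; the details are deferred to the Appendix.

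With the $O(1)$ surviving values of $\beta$ in hand, I call Lemma~\ref{FixedLen} for each, spending $O(r + r|v|/\log n)$ per call plus the extra $O(\log n / \log\log n)$ in the regime $\frac{\log n}{16 \log\log n} \le |v| \le \log n$. Summed over the $O(1)$ invocations this matches the claimed bound.
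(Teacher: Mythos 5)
Your skeleton matches the paper's (locate the run $t[i'..j']$ containing $v$ via Lemma~\ref{SubstringRun}, constrain the start $h$ of $s_z$ via Lemma~\ref{wsw2}, locate a second run, constrain the start $h'$ of $s_{z'}$, derive $\beta$ from $h'-h$, and call Lemma~\ref{FixedLen} for each candidate), but the step that carries all the difficulty is exactly the one you assert rather than prove: why only $O(1)$ pairs $(h,h')$ survive. Your stated mechanism --- that ``the fixed factors $s_2,\ldots,s_{z-1}$ and $s_{z+1},\ldots,s_{z'-1}$ must align with the $d$-periodic pattern'' --- is not a constraint at all: those factors lie inside blocks that are assumed to have period $d$, so they automatically ``align'' and eliminate nothing. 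The constraint that actually synchronizes the two windows is that the substitutions themselves are equal (or reversals of each other): when $x_{z-1}=x_{z'-1}$, the last $d$ letters of $w_{z-1}$ and of $w_{z'-1}$ coincide, and by primitivity of the period root (Lemma~\ref{PrimitiveVV}) this forces $h-\ell \equiv h'-\ell' \pmod d$, where $\ell,\ell'$ are Lyndon-root positions of the two runs, computable in $O(1)$ time by Lemma~\ref{LyndonRoots}. Only then, because $h$ and $h'$ each range over length-$d$ windows anchored at the respective run boundaries, does $h'-h$ take at most two explicit values per combination of windows. Note also that ``$\beta \bmod d$ known plus $\beta\in(\frac32|v|,2|v|]$'' would by itself leave $\Theta(|v|/d)$ candidates, so the range constraint you appeal to cannot substitute for pinning down $h'-h$ itself; and iterating Lemma~\ref{SubstringRun} ``for every candidate $h$'' in a length-$d$ window is $\Theta(d)$ work, whereas the paper queries one fixed substring per branch of Lemma~\ref{wsw2} to get at most three candidate runs for $w_z$.

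The second gap is that you defer the case where $p$ contains both $x$ and $\lvec{x}$ entirely (``details deferred to the Appendix''), but that case is the bulk of the lemma's proof and needs genuinely different tools: when $x_{z-1}\ne x_{z'-1}$ one must either introduce a third run and redo the phase analysis on $h_0,h'_0$, use reversed Lyndon roots to get congruences of the form $2h'\equiv \ell'+\ell''_0+1-|s_{z'}| \pmod d$ (two solutions per window), or, when $w$ and $\lvec{w}$ share a run, invoke the palindromic decomposition $|uv|=d$ of the period root (Lemmas~\ref{Palsuv}, \ref{PrimitivePalPair}, \ref{InARunPalPair}, \ref{FindPalPair}) to bound the admissible positions by two per window. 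Citing the relevant papers without carrying out this analysis leaves the claimed $O(1)$ candidate count, and hence the stated time bound, unestablished for those patterns.
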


It remains to explain how to identify the $s_z$'s that break the period inside the instances of $p$, and show that their number is $O(1)$.
Let $Z$ (resp., $Z'$, $Z''$) be the set of all numbers $z \in (1..r)$ such that $x_{z-1} = x_z$ (resp., $\lvec{x}_{z-1} = x_z = \lvec{x}$, $x_{z-1} = \lvec{x}_z = \lvec{x}$). By Lemma~\ref{PrimitiveVV}, as $w_z\in\{w,\lvec{w}\}$ for $z\in [1..r]$, the next lemma~follows:

\begin{lemma}
For any numbers $z_1, z_2 \in Z$ (resp.,~$ Z'$, $ Z''$), if the strings $w_{z_1-1}s_{z_1}w_{z_1}$ and $w_{z_2-1}s_{z_2}w_{z_2}$ both have period $d$, then the next properties hold:
\begin{equation}
\begin{array}{l}
|s_{z_1}| \equiv |s_{z_2}| \pmod{d},\ \ s_{z_1}\text{ and }s_{z_2}\text{ both have period }d,\\
\text{one of }s_{z_1}\text{ and }s_{z_2}\text{ (}s_{z_1}\text{ and }\lvec{s}_{z_2}\text{ if }x_{z_1} \ne x_{z_2}\text{) is a prefix of another.}
\end{array}\label{eq:propwsw}
\end{equation}
In the following sense, the converse is also true: if $|s_{z_1}| \ge d$, $w_{z_1-1}s_{z_1}w_{z_1}$ has period $d$, and $z_1$ and $z_2$ satisfy~\eqref{eq:propwsw}, then $w_{z_2-1}s_{z_2}w_{z_2}$ necessarily has period $d$.\label{SepCond}
\end{lemma}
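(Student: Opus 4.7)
My plan is a unified embedding argument using $u=w[1..d]$ as the primitive root. In the setting of Section~\ref{SectPeriodic}, $w$ contains the anchor $v$ whose minimal period is $d$; since any period of $w$ is also a period of the substring $v$, no period of $w$ is strictly less than $d$, so $d$ is the minimal period of $w$ and $u=w[1..d]$ is primitive. By Lemma~\ref{PrimitiveVV}, $u$---and hence every period-$d$ word of length $\ge d$ starting with $u$, in particular $w$ itself---occurs in $u^\infty$ exactly at positions $\equiv 1\pmod d$.

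Forward direction. For each $z$ with $w_{z-1}s_zw_z$ of period $d$, I unfold the string as the prefix $v^\infty[1..2|w|+|s_z|]$, where $v=w_{z-1}[1..d]$ equals $u$ when $w_{z-1}=w$ and $\tilde u:=\lvec{w}[1..d]$ when $w_{z-1}=\lvec{w}$. This forces $s_z=v^\infty[|w|+1..|w|+|s_z|]$ (so $s_z$ has period $d$), and the trailing $w_z$ must occur in $v^\infty$ at position $|w|+|s_z|+1$; primitivity of $v$ applied to the $d$-prefix of $w_z$ now yields the \emph{class condition} $w_z[1..d]=v$ together with $|w|+|s_z|\equiv 0\pmod d$, i.e.\ $|s_z|\equiv -|w|\pmod d$. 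The class condition is automatic when $w_{z-1}=w_z$, and reduces to $\lvec{w}[1..d]=u$ in $Z'\cup Z''$; in every case it depends only on $w$ and is therefore shared by all $z$ in one class. To relate the two sub-cases inside $Z$, namely $(x_{z_1-1},x_{z_1})=(x,x)$ versus $(\lvec{x},\lvec{x})$, I apply the reversal identity $\lvec{\lvec{w}s\lvec{w}}=w\lvec{s}w$ and the fact that reversal preserves a period: $w\lvec{s}_{z_2}w$ has period $d$, so $\lvec{s}_{z_2}$ is a prefix of $u^\infty[|w|+1..]$ of length $\equiv -|w|\pmod d$, which is precisely the $\lvec{s}_{z_2}$-switch in~\eqref{eq:propwsw} when $x_{z_1}\ne x_{z_2}$. (Inside $Z'$ and $Z''$ one always has $x_{z_1}=x_{z_2}$, so no twist is needed.) Collecting everything, the three items of~\eqref{eq:propwsw} follow simultaneously for any $z_1,z_2$ in the same class: both sides (possibly after reversing $s_{z_2}$) are prefixes of the same string $v^\infty[|w|+1..]$, so one is a prefix of the other; both have period $d$; and both have length $\equiv -|w|\pmod d$.

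Converse direction. Assume $|s_{z_1}|\ge d$ and $w_{z_1-1}s_{z_1}w_{z_1}$ has period $d$; the forward analysis pins $s_{z_1}[1..d]=v^\infty[|w|+1..|w|+d]$. By~\eqref{eq:propwsw}, $s_{z_2}$ (or $\lvec{s}_{z_2}$ in the mixed case) has period $d$ and shares a prefix of length $\min(|s_{z_1}|,|s_{z_2}|)$ with $s_{z_1}$; its first $\min(d,|s_{z_2}|)$ characters therefore agree with $v^\infty[|w|+1..]$, and the period-$d$ property propagates this to give $s_{z_2}=v^\infty[|w|+1..|w|+|s_{z_2}|]$ (modulo the reversal). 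Combined with $|s_{z_2}|\equiv -|w|\pmod d$, the trailing $w_{z_2}$ in $w_{z_2-1}s_{z_2}w_{z_2}$ aligns to a position $\equiv 1\pmod d$ of $v^\infty$, where primitivity of $v$ and the class condition identify its content with $w_{z_2}$; hence $w_{z_2-1}s_{z_2}w_{z_2}$ has period $d$.

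The main obstacle will be the disciplined bookkeeping of the six sub-cases (three classes crossed with two choices of $w_{z-1}\in\{w,\lvec{w}\}$) combined with the reversal twist inside $Z$; in particular, one must verify carefully that the class condition $\lvec{w}[1..d]=u$ for $Z'\cup Z''$ is exactly what primitivity of $u$ imposes (as opposed to a weaker cyclic-shift condition), and that the converse's $|s_{z_1}|\ge d$ hypothesis is tight, since without it $s_{z_1}$ carries no information about the $v^\infty$-offset and propagation to $s_{z_2}$ cannot begin.
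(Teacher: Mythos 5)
Your overall plan---embed each junction $w_{z-1}s_zw_z$ into the $d$-periodic infinite word generated by $w_{z-1}[1..d]$ and invoke Lemma~\ref{PrimitiveVV}---is exactly the justification the paper (tersely) relies on, and your treatment of the class $Z$, including the reversal twist between the sub-cases $(x,x)$ and $(\lvec{x},\lvec{x})$ and the role of $|s_{z_1}|\ge d$ in the converse, is sound. The gap is precisely the point you flag at the end: for $z\in Z'\cup Z''$ the period-$d$ property of $w_{z-1}s_zw_z$ does \emph{not} force the ``class condition'' $w_z[1..d]=w_{z-1}[1..d]$, nor the residue $|s_z|\equiv-|w|\pmod d$; primitivity only yields the weaker cyclic-shift statement that $w_z[1..d]$ equals the rotation of $v=w_{z-1}[1..d]$ by $(|w|+|s_z|)\bmod d$. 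Concretely, take $w=(aab)^3$, $d=3$, $s_z=aa$: then $ws_z\lvec{w}$ is a prefix of $(aab)^{\infty}$ and has period $3$, yet $\lvec{w}[1..3]=baa\ne aab$ and $|s_z|=2\not\equiv -|w|\equiv 0\pmod 3$. (The paper's Lemmas~\ref{Palsuv} and~\ref{wmodd} corroborate this: at an $xs_z\lvec{x}$ junction one gets $s_z=u(vu)^{k}$ for the palindromic factorization of the root, i.e.\ $|s_z|\equiv|u|\pmod d$, not $-|w|$.) So, as written, your forward derivation of $|s_{z_1}|\equiv|s_{z_2}|\pmod d$ and your converse step ``aligns to a position $\equiv 1\pmod d$ where the class condition identifies its content'' both break on $Z'$ and $Z''$, even though the lemma is true there.

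The repair stays entirely inside your framework. Forward: $w_z[1..d]$ is a fixed primitive word determined by $w$ alone, and distinct residues modulo $d$ in $v^{\infty}$ carry distinct rotations of the primitive $v$; hence all occurrences of $w_z[1..d]$ in $v^{\infty}$ lie in a single residue class modulo $d$, so $|w|+|s_{z_1}|\equiv |w|+|s_{z_2}|\pmod d$, i.e.\ $|s_{z_1}|\equiv|s_{z_2}|\pmod d$, for any two indices of the same class---without pinning that residue to $0$. (The prefix and period statements follow, as you argue, from both $s_{z_1}$ and $s_{z_2}$, possibly reversed, being prefixes of $v^{\infty}[|w|{+}1..]$.) Converse: instead of aligning $w_{z_2}$ to a position $\equiv 1\pmod d$, align it to the known occurrence of $w_{z_1}=w_{z_2}$ at position $|w|+|s_{z_1}|+1$: after propagating $s_{z_2}=v^{\infty}[|w|{+}1..|w|{+}|s_{z_2}|]$ exactly as you do (this is where $|s_{z_1}|\ge d$ is needed), the congruence $|s_{z_2}|\equiv|s_{z_1}|\pmod d$ and the $d$-periodicity of $v^{\infty}$ show that the factor of $v^{\infty}$ of length $|w|$ starting at $|w|+|s_{z_2}|+1$ equals the one starting at $|w|+|s_{z_1}|+1$, i.e.\ equals $w_{z_2}$; hence $w_{z_2-1}s_{z_2}w_{z_2}$ is a prefix of $v^{\infty}$ and has period $d$. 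With these two substitutions your case bookkeeping goes through.
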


We call a pair of numbers $(z,z')$ such that $z \le z'$ and $z,z' \in Z$ a \emph{separation in $Z$} if all numbers $z_1, z_2 \in ((1..z)\cup (z..z'))\cap Z$ satisfy~\eqref{eq:propwsw} and all numbers $z_1 \in ((1..z) \cup (z..z')) \cap Z$ and $z_2 \in \{z,z'\}$ either do not satisfy~\eqref{eq:propwsw} or satisfy $|s_{z_1}| < d \le |s_{z_2}|$; separations in $Z'$ and $Z''$ are defined analogously. Informally, a pair $(z,z')$ is a separation in $Z$ (resp., $Z'$, $Z''$) if $w_1s_2\cdots s_{z-1}w_{z-1}$ and $w_zs_{z+1}\cdots s_{z'-1}w_{z'-1}$ both have period $d$, and $w_{z-1}s_zw_z$ and $w_{z'-1}s_{z'}w_{z'}$ both do not have period~$d$.

In other words, such a pair indicates exactly the first two $s_z$'s where the period breaks in an instance of $p$.
Accordingly, if we will apply Lemma~\ref{SeparateLemma} for all pairs $(z,z')$ such that $z$ and $z'$ occur in some separations in $Z$ or $Z'$ or $Z''$, then we will find all instances $t[i..j] = s_1w_1s_2w_2\cdots w_{r-1}s_r$ of $p$ such that $w_1,\ldots,w_{r-1}$ lie in at least three distinct runs with period $d$, $\frac{3}2|v| < |w_1| \le 2|v|$, and $v$ occurs in $w_1$. So, it suffices to show that there are at most $O(1)$ possible separations in $Z$ (resp., $Z'$, $Z''$) and, to reach the complexity announced in the \emph{General Strategy} section, all such separations can be found in $O(r)$ time.

We describe how to find all separations in $Z$ (the cases of $Z'$, $Z''$ are similar). Clearly, if $(z,z')$ is a separation, then $(z,z)$ is also a (degenerate) separation. We find all separations $(z,z)\in Z$ applying the following general  lemma with $Z_0 = Z$.
\begin{lemma}
For any subset $Z_0 \subseteq Z$ (resp., $Z_0 \subseteq Z'$, $Z_0 \subseteq Z''$), there are at most three numbers $z \in Z_0$ satisfying the following property~\eqref{eq:subz}:%
\begin{equation}
\begin{array}{l}
\text{any }z_1, z_2 \in (1..z)\cap Z_0\text{ satisfy~\eqref{eq:propwsw}},\\
\text{any }z_1 \in (1..z)\cap Z_0, z_2 = z\text{ either do not satisfy~\eqref{eq:propwsw} or }|s_{z_1}| < d \le |s_{z_2}|.\label{eq:subz}
\end{array}
\end{equation}
All such $z$ can be found in $O(r)$ time.\label{subz}
\end{lemma}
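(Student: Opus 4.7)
The plan is to prove the bound of three via a short combinatorial argument and then describe a single left-to-right sweep that realises it in $O(r)$ time.

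The key observation behind both parts is that, restricted to strings sharing period $d$ and having equal lengths modulo $d$, the relation \eqref{eq:propwsw} is an equivalence relation. For $Z_0 \subseteq Z'$ or $Z_0 \subseteq Z''$ all orientations $x_z$ are uniform, so the prefix clause of \eqref{eq:propwsw} is plain prefix-compatibility and transitivity is immediate. For $Z_0 \subseteq Z$ different $z$'s may have different $x_z$, but a short check shows that, under period $d$ and equal lengths modulo $d$, ``$A, B$ prefix-compatible'' is equivalent to ``$\lvec{A}, \lvec{B}$ prefix-compatible'' (the suffix of length $|A|$ of $B$ coincides, by periodicity, with its prefix of length $|A|$). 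Consequently, I would normalise by fixing the orientation of the smallest element $z^{(1)}$ and setting $\sigma(z) := s_z$ if $x_z = x_{z^{(1)}}$ and $\sigma(z) := \lvec{s}_z$ otherwise; the prefix clause of \eqref{eq:propwsw} for a pair $(z_1, z_2)$ then reduces to plain prefix-compatibility of $\sigma(z_1)$ and $\sigma(z_2)$, and transitivity follows.

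Assume for contradiction that four elements $z^{(1)} < z^{(2)} < z^{(3)} < z^{(4)}$ of $Z_0$ all satisfy \eqref{eq:subz}. The first part of \eqref{eq:subz} applied to $z^{(4)}$ tells us that $z^{(1)}, z^{(2)}, z^{(3)}$ pairwise satisfy \eqref{eq:propwsw}. The second part of \eqref{eq:subz} at $z^{(2)}$ with $z_1 = z^{(1)}$, combined with the fact that $(z^{(1)}, z^{(2)})$ satisfies \eqref{eq:propwsw}, forces $|s_{z^{(1)}}| < d \le |s_{z^{(2)}}|$; analogously, the second part at $z^{(3)}$ with $z_1 = z^{(2)}$ forces $|s_{z^{(2)}}| < d \le |s_{z^{(3)}}|$. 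The two conditions on $|s_{z^{(2)}}|$ contradict each other, so at most three $z \in Z_0$ satisfy \eqref{eq:subz}.

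For the algorithm, I would iterate $z$ over $Z_0$ in increasing order while maintaining: a boolean \emph{consistent}, true as long as every pair of already-processed elements satisfies \eqref{eq:propwsw}; a canonical representative of the current class (a pointer to the first element's $s_{z_1}$, together with $d$ and $|s_{z_1}| \bmod d$, which together with an $\lcp$ structure precomputed on $p$ and $\lvec{p}$ allow testing any new $s_z$ for class membership in $O(1)$); and a flag \emph{any\_long} recording whether some already-processed $s_{z_1}$ has $|s_{z_1}| \ge d$. For each new $z$: if \emph{consistent} is false, do nothing; otherwise, if the class is empty, or $s_z$ is compatible with the representative in the sense of \eqref{eq:propwsw} while \emph{any\_long} is still false and $|s_z| \ge d$, mark $z$; if $s_z$ is incompatible with the representative, mark $z$ and set \emph{consistent} to false (by the equivalence-relation observation, incompatibility with the representative is the same as incompatibility with every prior element, so the second part of \eqref{eq:subz} is vacuous); whenever \emph{consistent} remains true after the step, update the class data and \emph{any\_long} to include $s_z$. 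Each step is $O(1)$, so the total running time is $O(|Z_0|) = O(r)$. The main delicate step, and the reason the first paragraph is really needed, is the reversal-normalisation that makes compatibility with the single representative equivalent to pairwise compatibility with every prior element; once that is in place, the sweep is routine.
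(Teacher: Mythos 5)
Your bound of three is proved correctly, and by a genuinely different route than the paper: the paper exhibits the at most three candidates constructively (the minimum $z'$, the first break of pairwise compatibility $z''$, and the first ``long'' element $z'''$), whereas you derive a contradiction directly from four hypothetical solutions $z^{(1)}<z^{(2)}<z^{(3)}<z^{(4)}$ -- the first clause of \eqref{eq:subz} at $z^{(4)}$ makes $z^{(1)},z^{(2)},z^{(3)}$ pairwise satisfy \eqref{eq:propwsw}, and the second clause at $z^{(2)}$ and at $z^{(3)}$ then forces both $d\le|s_{z^{(2)}}|$ and $|s_{z^{(2)}}|<d$. Note that this part nowhere uses your ``equivalence relation'' observation.

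The algorithmic part, however, has a genuine gap, and it is precisely that observation which fails: \eqref{eq:propwsw} is \emph{not} transitive, because prefix-comparability breaks down when the common element is shorter than $d$. Take $d=2$ and $Z_0$ whose $s$-values are, in order, $a$, $aba$, $aca$, $b$ (all of period $2$ and odd length, so the length and period requirements of \eqref{eq:propwsw} hold throughout). Here $a$ is compatible with both $aba$ and $aca$, yet $aba$ and $aca$ are incompatible. The elements satisfying \eqref{eq:subz} are the first three (the third qualifies because its only compatible predecessor is the short $a$, and its incompatible predecessor $aba$ falls under the first alternative of the second clause); this is also what the paper's $z',z'',z'''$ give. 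Your sweep, which tests each new element only against the \emph{first} element as representative, sees $aca$ as compatible with $a$, does not mark it (\emph{any\_long} is already true), keeps \emph{consistent} true, and then marks the fourth element because $b$ is incompatible with $a$, even though the pair $(aba,aca)$ already violates the first clause of \eqref{eq:subz} for it -- so on this instance you get both a false negative and a false positive. The repair is to let the representative be a \emph{longest} element of the pairwise-compatible prefix seen so far (after your reversal normalisation, which is fine): since every shorter member of a pairwise-compatible family is a prefix of the longest one, a new element is compatible with all previous elements if and only if it is compatible with that longest one, which restores both the detection of the first break of pairwise compatibility (the paper's $z''$) and of the first element of length ${\ge}d$ (the paper's $z'''$) in $O(1)$ per step.
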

\begin{proof}
Let $z' = \min Z_0$. Clearly, $z = z'$ satisfies~\eqref{eq:subz}. Using the $\lcp$ structure on the string $p\lvec{p}$, we find in $O(r)$ time  the smallest number $z'' \in Z_0$ such that any $z_1, z_2 \in [z'..z'') \cap Z_0$ satisfy~\eqref{eq:propwsw} and some $z_1, z_2 \in [z'..z''] \cap Z_0$ do not satisfy~\eqref{eq:propwsw}; assume $z'' = +\infty$ if there is no such $z''$. Obviously, if $z'' \ne +\infty$, then $z = z''$ satisfies~\eqref{eq:subz}. Any $z \in (z''..{+}\infty)\cap Z_0$ does not satisfy~\eqref{eq:subz} because in this case $z'' \ne +\infty$ and some $z_1, z_2 \in [z'..z'']\cap Z_0$ do not satisfy~\eqref{eq:propwsw}. In $O(r)$ time we find the minimal $z''' \in [z'..z'')\cap Z_0$ such that $|s_{z'''}| \ge d$; assume $z''' = z''$ if there is no such $z'''$. By the definition, we have $s_{z_1} = s_{z_2}$ and $|s_{z_1}| = |s_{z_2}| < d$ for any $z_1, z_2 \in [z'..z''') \cap Z_0$. Therefore, any $z \in (z'..z''') \cap Z_0$ does not satisfy~\eqref{eq:subz}. Further, any $z \in (z'''..z'') \cap Z_0$ does not satisfy~\eqref{eq:subz} since in this case $z_1 = z'''$ and $z_2 = z$ satisfy~\eqref{eq:propwsw} and $|s_{z_1}| \ge d$, which contradicts to~\eqref{eq:subz}. Finally, if $z''' \ne +\infty$, then $z = z'''$ obviously satisfies~\eqref{eq:subz}. So, $z'$, $z''$, $z'''$ are the only possible numbers in $Z_0$ that can satisfy~\eqref{eq:subz}.\qed
\end{proof}

Finally, for each separation $(z,z)\in Z$ we have found, we apply Lemma~\ref{subz} with $Z_0 = Z \setminus\{z\}$ and obtain all separations in $Z$ of the form $(z,z')$ for $z' > z$. Employing Lemma~\ref{subz} at most three times, we obtain at most~9 new separations in total, in $O(r)$ total time, and, besides the at most three $(z,z)$ separations we initially had, no other separations exist. So, there are at most $12$ separations in $Z$ and they can be found in $O(r)$ time. Lemma \ref{SeparateLemma} can be now employed to conclude the identification of the instances of $p$ extending over at least three runs.

\paragraph{In-a-run instances of $p$.} This case requires a different approach.
More precisely, we process each run $t[i'..j']$ (only once) with period $d$ in order to find all instances $t[i..j]$ of $p$ satisfying the following properties (denoted altogether as~(\ref{eq:mainprop})):
\begin{equation}
\begin{array}{l}
t[i..j]\text{ is an instance of }p\text{ with substitutions of }x\text{ and }\lvec{x}\text{ of length }{\ge}3d,\\
t[i + |s_1|..j - |s_r|]\text{ is a substring of }t[i'..j'].
\end{array}\label{eq:mainprop}
\end{equation}
So, in this case, we no longer try to extend the string $v$ that anchors the occurrence of $w_1$, but have a more global approach to finding the instances of the pattern.

To begin with, since $t[i'..j']$ has period $d$, we obtain the following lemma.
\begin{lemma}
Let $t[i..j]$ be a string satisfying~\eqref{eq:mainprop} such that $i' \le i$. Then $t[i{+}(r{-}1)d..j]$ is an instance of $p$ and, if $i - (r-1)d \ge i'$, $t[i{-}(r{-}1)d..j]$ is also an instance of $p$.\label{AllInRunOccurrences}
\end{lemma}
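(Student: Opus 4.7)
The plan is to describe explicitly the shift that turns the given instance with substitutions of length $\ell = |w_1|$ into one with substitutions of length $\ell - d$ (and, symmetrically, $\ell + d$) while keeping the right endpoint $j$ fixed. Write $t[i..j] = s_1 w_1 s_2 w_2 \cdots s_{r-1} w_{r-1} s_r$ and let $\sigma_z = i + |s_1 w_1 \cdots s_{z-1} w_{z-1}|$ be the starting position of $s_z$. A short calculation shows that a candidate instance starting at $i + (r-1)d$ with substitutions of length $\ell - d$ would place $s_z$ at $\sigma'_z = \sigma_z + (r-z)d$; in particular $\sigma'_r = \sigma_r$, so $j$ is preserved, while $\sigma'_1 = i + (r-1)d$.

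The first step is to check that each $s_z$ indeed occurs at $\sigma'_z$. For $z = r$ this is automatic, while for $z \in [1..r)$ I would argue that both $[\sigma_z..\sigma_z + |s_z| - 1]$ and $[\sigma'_z..\sigma'_z + |s_z| - 1]$ are contained in the run $t[i'..j']$: the occurrence of $s_1$ at position $i$ lies in the run since $i' \leq i$ and $i + |s_1| \leq j'$ by~\eqref{eq:mainprop}; the middle $s_z$'s lie in $[i+|s_1|..j-|s_r|] \subseteq [i'..j']$ also by~\eqref{eq:mainprop}; and since $\ell \geq 3d$ forces $(r-z)d \leq (r-1)\ell/3$, the shifted copies still fit inside. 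Because each shift is a multiple of $d$, the $d$-periodicity of $t[i'..j']$ yields $t[\sigma'_z..\sigma'_z + |s_z| - 1] = s_z$.

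The second step is to verify that the new substitutions $w'_z := t[\sigma'_z + |s_z|..\sigma'_{z+1} - 1]$, each of length $\ell - d$, satisfy the pattern constraints. Letting $u = t[i'..i' + d - 1]$ be the primitive period string of the run, every letter in the run is read off $u$ at a residue mod $d$, so each relation $w_z = w_{z'}$ or $w_z = \lvec{w_{z'}}$ from the original instance becomes an identity on $u$ indexed by a range of residues mod $d$. Since every shift $(r-z)d$ preserves residues mod $d$, the same identity, restricted to the shorter range of $k \in [0..\ell - d - 1]$, yields the desired relation between $w'_z$ and $w'_{z'}$. In the equal-variable case this uses Lemma~\ref{PrimitiveVV}, via primitivity of $u$, to turn the original equality into a congruence of starting positions mod $d$; in the reversal case the same argument is combined with the palindromic symmetry of $u$ induced by $w_z = \lvec{w_{z'}}$.

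The argument for $t[i - (r-1)d..j]$ is the mirror image, with shifts reversed; the only extra verification is that the new $s_1$ at position $i - (r-1)d$ still lies in the run, which is exactly the hypothesis $i - (r-1)d \geq i'$. The main obstacle, if any, is careful mod-$d$ bookkeeping in the reversal case; beyond that, no combinatorial input is needed beyond the periodicity of the run and the primitivity of its period.
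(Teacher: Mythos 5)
Your proof is correct and takes essentially the approach the paper intends: the paper states this lemma as an immediate consequence of the $d$-periodicity of the run $t[i'..j']$ (offering no separate proof), and your argument—shifting the occurrences of the $s_z$'s and the substitutions by multiples of $d$ within the run and transferring the letter identities via mod-$d$ bookkeeping—is exactly that. Note that the appeals to Lemma~\ref{PrimitiveVV} and to palindromic structure are unnecessary (once all shifted positions are shown to lie inside the run, the equalities, including the reversal constraints, transfer directly because every shift is a multiple of $d$, and for the lengthening direction the identities extend since $|w_1|\ge d$ already covers all residues), but they do not harm the argument.
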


Let $t[i..j]$ satisfy~\eqref{eq:mainprop} and $w$ be a substitution of $x$ in $t[i..j]$. Recall that $x_1=x$ and $r\geq 3$. We try to get some information on $|w|$, the length of the substitution of $x$. Suppose that $p \ne s_1xs_2\lvec{x}s_3$ (the case $p = s_1xs_2\lvec{x}s_3$ is considered in Appendix). Then, either there is $z \in (1..r)$ such that $x_{z-1} = x_z$ or there are $z',z'' \in (1..r)$ such that $x_{z'-1}s_{z'}x_{z'} = \lvec{x}s_{z'}x$ and $x_{z''-1}s_{z''}x_{z''} = xs_{z''}\lvec{x}$. Accordingly, we can compute the number $|w| \bmod d$ as follows.

\begin{lemma}
Let $t[i..j]$ satisfy~\eqref{eq:mainprop} and $w$ be a substitution of $x$ in $t[i..j]$. If, for some $z \in (1..r)$, $x_{z-1} = x_z$, then $|w| \equiv -|s_{z}| \pmod{d}$; if, for some $z', z'' \in (1..r)$, $x_{z'-1}s_{z'}x_{z'} = \lvec{x}s_{z'}x$ and $x_{z''-1}s_{z''}x_{z''} = xs_{z''}\lvec{x}$, then either $|w| \equiv \frac{d - |s_{z''}| - |s_{z'}|}{2} \pmod{d}$ or $|w| \equiv \frac{-|s_{z''}| - |s_{z'}|}{2} \pmod{d}$.\label{wmodd}
\end{lemma}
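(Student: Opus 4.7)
My plan is to reduce both claims to a single observation about the run $t[i'..j']$. Since $t[i'..j']$ has minimal period $d$, its length-$d$ period pattern $u := t[i'..i'+d-1]$ is primitive (a strictly smaller period of $u$ would also be a period of the whole run, contradicting minimality); by Lemma~\ref{PrimitiveVV}, the $d$ length-$d$ substrings of $uu$ are pairwise distinct, so different starting residues modulo $d$ inside the run yield different length-$d$ prefixes. Hence any two occurrences of a common string of length $\ge d$ that lie entirely inside the run must start at positions congruent modulo $d$.

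For the first statement, I assume $x_{z-1}=x_z$. Then the substitutions $w_{z-1}$ and $w_z$ coincide (either both equal $w$ or both equal $\lvec{w}$), each of length $|w|\ge 3d\ge d$, and by~\eqref{eq:mainprop} both lie in the run. Their starting positions in $t$ differ by $|w|+|s_z|$, so the observation gives $|w|+|s_z|\equiv 0 \pmod d$, i.e.\ $|w|\equiv -|s_z|\pmod d$, as required.

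For the second statement, let $P_z$ denote the starting position of $w_z$ in $t$, so $P_z - P_{z-1}=|w|+|s_z|$. The hypotheses on $z',z''$ give $w_{z'-1}=\lvec{w}=w_{z''}$ and $w_{z'}=w=w_{z''-1}$, each a length-$|w|$ substring of the run. Applying the central observation to the pair of $w$-occurrences at $P_{z'}$ and $P_{z''-1}$, and to the pair of $\lvec{w}$-occurrences at $P_{z'-1}$ and $P_{z''}$, yields $P_{z'}\equiv P_{z''-1}\pmod d$ and $P_{z'-1}\equiv P_{z''}\pmod d$. Substituting $P_{z'}=P_{z'-1}+|s_{z'}|+|w|$ and $P_{z''}=P_{z''-1}+|s_{z''}|+|w|$ and subtracting the two congruences gives $2|w|\equiv -|s_{z'}|-|s_{z''}|\pmod d$. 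The congruence $2x\equiv -|s_{z'}|-|s_{z''}|\pmod d$ admits at most two residues mod $d$ as solutions, and they are precisely $(-|s_{z''}|-|s_{z'}|)/2$ and $(d-|s_{z''}|-|s_{z'}|)/2$ (one or both being integers depending on the parity of $d$ and of $|s_{z'}|+|s_{z''}|$), which is exactly the stated disjunction.

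The only delicate step is the central observation itself: I would spell out that the minimality of $d$ forces $u$ to be primitive, and then invoke Lemma~\ref{PrimitiveVV} to conclude that length-$\ge d$ substrings of the run are in bijection with their starting residues modulo $d$. After that, all the remaining positional arithmetic is routine.
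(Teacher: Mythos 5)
Your proof is correct, and the first half coincides with the paper's argument (both reduce $x_{z-1}=x_z$ to the fact that two occurrences of the same string of length ${\ge}d$ inside a run with minimal period $d$ start at positions congruent modulo $d$, which is Lemma~\ref{PrimitiveVV} in disguise). For the second half, however, you take a genuinely different and more elementary route. The paper invokes Lemma~\ref{Palsuv} to decompose the period root into palindromes $u,v$ with $|uv|=d$, then uses Lemma~\ref{PrimitiveVV} to force $s_{z''}=u(vu)^{k'}$ and $\lvec{w}s_{z'}w=(vu)^{k'}v$, from which $|u|=|s_{z''}|\bmod d$ and $2|w|\equiv |v|-|s_{z'}|\pmod d$; you instead apply your synchronization observation twice, once to the two occurrences of $w$ (at $P_{z'}$ and $P_{z''-1}$) and once to the two occurrences of $\lvec{w}$ (at $P_{z'-1}$ and $P_{z''}$), and the linear arithmetic immediately yields the same congruence $2|w|\equiv -|s_{z'}|-|s_{z''}|\pmod d$. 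Your version is shorter and avoids the palindromic machinery entirely; what the paper's detour buys is that the quantities $|u|$ and $|v|$ it extracts are reused later (e.g., in the proofs of Lemmas~\ref{SeparateLemma} and~\ref{GeneralInRunLemma}, where the positions compatible with a given palindromic factorization must actually be located), so its proof of Lemma~\ref{wmodd} doubles as setup for those arguments, whereas your argument only delivers the residues. Two small wording slips you should fix: an arbitrary smaller period of the length-$d$ window $u$ need not extend to a period of the whole run (e.g., $u=abaab$ inside $(abaab)^2$ has period $3$, which is not a period of the run) --- primitivity of $u$ only needs the case $u=v^k$, where $|v|$ does propagate to the run and contradicts minimality of $d$; and $uu$ has $d+1$ substrings of length $d$, of which the first and last coincide, so the correct statement is that the $d$ rotations starting at positions $1,\ldots,d$ are pairwise distinct.
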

\begin{proof}
Suppose that $x_{z-1} = x_z$. Since, by Lemma~\ref{PrimitiveVV}, the distance between any two occurrences of $w$ (or $\lvec{w}$) in $t[i'..j']$ is a multiple of $d$, we have $|w| \equiv -|s_z| \pmod{d}$.

Suppose that $x_{z'-1}s_{z'}x_{z'} = \lvec{x}s_{z'}x$ and $x_{z''-1}s_{z''}x_{z''} = xs_{z''}\lvec{x}$. Since $w$ and $\lvec{w}$ both are substrings of $t[i'..j']$ and $|w| \ge 3d$, it follows from Lemma~\ref{Palsuv} that there are palindromes $u$ and $v$ such that $|uv| = d$, $v \ne \epsilon$, and $\lvec{w}$ is a prefix of the infinite string $(vu)^{\infty}$. Since $ws_{z''}\lvec{w}$ is a substring of $t[i'..j']$ and the strings $vu$ and $uv$ are primitive, it follows from Lemma~\ref{PrimitiveVV} that $s_{z''} = u(vu)^{k'}$ for an integer $k'$ and hence $|u| = |s_{z''}| \bmod d$, $|v| = d - |u|$. Similarly, since $\lvec{w}s_{z'}w$ is a substring of $t[i'..j']$, we have $\lvec{w}s_{z'}w = (vu)^{k'}v$ for an integer $k'$ and therefore $2|w| \equiv |v| - |s_{z'}| \pmod{d}$. Thus, either $|w| \equiv \frac{|v| - |s_{z'}|}{2} \pmod{d}$ or $|w| \equiv \frac{d + |v| - |s_{z'}|}{2} \pmod{d}$. Since $|v| = (-|s_{z''}|) \bmod d$, we obtain either $|w| \equiv \frac{d - |s_{z''}| - |s_{z'}|}{2} \pmod{d}$ or $|w| \equiv \frac{-|s_{z''}| - |s_{z'}|}{2} \pmod{d}$.\qed
\end{proof}

We now fix the possible ends of the instances $t[i..j]$ of the pattern $p$, with respect to $t[i'..j']$. Consider the segments $\{(j'{+}1{-}bd..j'{+}1{-}(b{-}1)d]\}_{b=1}^f$, where $f$ is the maximal integer such that $j'{+}1{-}fd \ge i'$ (i.e., $f$ is exponent of the period in the run $t[i'..j']$).  For each $b \in [1..f]$, we can find in $O(r + \frac{rd}{\log n})$ time, using Lemma~\ref{GeneralInRunLemma}, all strings $t[i..j]$ satisfying~\eqref{eq:mainprop} such that $j{-}|s_r|{+}1 \in (j'{+}1{-}bd..j'{+}1{-}(b{-}1)d]$ (so with $s_1w_1\ldots s_{r-1}w_{r-1}$ ending in the respective segment); the parameter $\delta$ in this lemma is chosen according to Lemma~\ref{wmodd} (see below). Adding all up, this enables us to find all instances of $p$ satisfying~\eqref{eq:mainprop} in $O(r(\frac{j' - i' + 1}{d} + \frac{j' - i' + 1}{\log n}))$ time. Since $j' - i' + 1 \ge 3d$, it follows from~\cite{KolpakovKucherov} and Lemma~\ref{sum_runs} that the sum of the values $\frac{j' - i' + 1}d + \frac{j' - i' + 1}{\log n}$ over all such runs $t[i'..j']$ is $O(n)$; hence the total time needed to find these instances of the pattern is~$O(rn)$.

Technically, our strategy is given in the following lemma (which does not cover the case $p = s_1xs_2\lvec{x}s_3$, which is present in Appendix). In this lemma, $\delta \in [0..d)$ is one of the possible values of $|w| \bmod d$, as obtained in Lemma~\ref{wmodd}: if $x_{z-1} = x_{z}$ for some $z \in (1..r)$, we use only one value $\delta = -|s_z|\bmod d$; otherwise, we use two values of $\delta$ described in Lemma~\ref{GeneralInRunLemma} (the special case $p = s_1xs_2\lvec{x}s_3$ is considered separately in Appendix). For each thus computed $\delta$, we process each segment $[b_1..b_2] = (j'{+}1{-}bd..j'{+}1{-}(b{-}1)d]$ and get a compact representation (in the bit arrays $E$, $F$) of the instances $s_1w_1\ldots s_{r-1}w_{r-1}s_r$ of $p$ such that $s_1w_1\ldots s_{r-1}w_{r-1}$ ends in the respective segment and $\delta = |w| \mod d$. The proof of Lemma~\ref{GeneralInRunLemma} is moved to Appendix.

\begin{lemma}
Let $p \ne s_1xs_2\lvec{x}s_3$, $r \ge 3$, and $\delta \le d$.
Given a run $t[i'..j']$ with period $d$ and a segment $[b_1..b_2] \subset [i'..j'{+}1]$ of length $d$, we can compute in $O(r{+}\frac{rd}{\log n})$ time the numbers $d', d'', h', h'', a', a''$ and bit arrays $E[b_1..b_2]$, $F[b_1..b_2]$ such that:
\begin{enumerate}
\item for any $h \in [b_1..h']$ (resp., $h\in (h'..b_2]$), we have $E[h] = 1$ iff the strings $t[h{-}|s_1s_2\cdots s_{r-1}|{-}(r{-}1)(\delta{+}cd)..h{+}|s_r|{-}1]$ for all $c \in [0..d']$ (resp., for all $c \in [0..d'']$) are instances of $p$ and $h - |s_1s_2\cdots s_{r-1}| - (r - 1)(\delta + cd) \ge i'$;
\item for any $h \in [b_1..h'']$ (resp., $h \in (h''..b_2]$), we have $F[h] = 1$ iff the string $t[h{-}|s_1s_2\cdots s_{r-1}|{-}(r{-}1)(\delta{+}ad)..h{+}|s_r|{-}1]$, where $a = a'$ (resp., $a = a''$), is an instance of $p$ and $h - |s_2s_3\cdots s_{r-1}| - (r - 1)(\delta + ad) \ge i'$.
\end{enumerate}
In addition, we find at most one instance $t[i_0..j_0] = s_1w_1s_2w_2\cdots w_{r-1}s_r$ of $p$ satisfying~\eqref{eq:mainprop} and such that $j_0 - |s_r| + 1 \in [b_1..b_2]$, $|w_1| \equiv \delta \pmod{d}$, and it is guaranteed that if a string $t[i..j] = s_1w_1s_2w_2\cdots w_{r-1}s_r$ satisfies~\eqref{eq:mainprop}, $j - |s_r| + 1 \in [b_1..b_2]$, and $|w_1| \equiv \delta \pmod{d}$, then either $t[i..j]$ is encoded in one of the arrays $E$, $F$ or $i = i_0$ and $j = j_0$.\label{GeneralInRunLemma}
\end{lemma}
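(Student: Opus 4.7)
Proof plan. The plan is to parameterize each candidate instance that ends at position $h \in [b_1..b_2]$ by its ``height'' $c \ge 0$, where the substitution has length $\beta_c = \delta + cd$; the instance then starts at $i_c = h - |s_1 s_2 \cdots s_{r-1}| - (r-1)\beta_c$, and the constant block $s_z$ sits at $p_z(c) = h - |s_z s_{z+1} \cdots s_{r-1}| - (r-z)\beta_c$. Incrementing $c$ by one shifts every $p_z(c)$ by the multiple $(r-z)d$ of $d$, and every substitution $w_z$ by $(r-1)d$; hence, by the $d$-periodicity of $t[i'..j']$, every intra-run matching condition is invariant in $c$, provided the positions involved remain inside the run. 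Only the left-boundary inequality $i_c \ge i'$ and, when $p$ contains both $x$ and $\lvec{x}$, a few palindromic-alignment conditions genuinely depend on $c$.

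The first step is to verify all type-consistency conditions in $O(r)$ time. For each $z$ with $x_z = x$, the identity $w_z = w_1$ reduces, by the $d$-periodicity inside the run, to the single congruence $|s_2 s_3 \cdots s_z| + (z-1)\delta \equiv 0 \pmod{d}$, checkable in $O(1)$ from the pattern. For each $z$ with $x_z = \lvec{x}$, the identity $w_z = \lvec{w}$ forces, via Lemmas~\ref{wsw}, \ref{wsw2} and the palindromic decomposition of the period underlying Lemma~\ref{wmodd}, a palindromic-alignment condition on $s_z$ with respect to the primitive root of length $d$; each such condition is checked in $O(1)$ using the $\lcp$ structure on $t\lvec{t}$ and on $p\lvec{p}$. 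Together, these conditions admit at most two compatible residues for $|w|$ modulo $d$: one ``baseline'' residue that drives the array $E$, and at most one ``exceptional'' residue captured by $F$ at the single height $c = a'$ (respectively $c = a''$).

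The second step builds $E$ and $F$ by bitwise ANDs of shifted copies of the precomputed arrays $D_z$. Given type-consistency, ``$s_z$ occurs at $p_z(c)$ for every $c \in [0..d']$ with $p_z(c)$ inside the run'' is equivalent to the single bit ``$s_z$ occurs at $p_z(0)$''. Hence, aligning each $D_z$ to the index range $[b_1..b_2]$ by a shift of $|s_z s_{z+1} \cdots s_{r-1}| + (r-z)\delta$ and taking the bitwise AND over $z \in [1..r]$ produces $E$ in $O(r + rd/\log n)$ time on $\Theta(\log n)$-bit machine words. The parameters $d', d'', h', h''$ arise from the floor expression $c_{\max}(h) = \lfloor (h - |s_1 s_2 \cdots s_{r-1}| - (r-1)\delta - i')/((r-1)d) \rfloor$: since $h$ sweeps an interval of length $d$ and the denominator is $(r-1)d$, this floor takes at most two consecutive values, with a single transition at some $h' \in [b_1..b_2]$ (and analogously for $F$ and $h''$). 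A final $O(r)$ scan at the single position where $s_1$ straddles the left boundary of the run identifies the possibly unique extra instance $(i_0, j_0)$ that escapes the staircase encoded by $E$ and $F$.

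The main obstacle will be the interaction of $x$ and $\lvec{x}$ in the palindromic-alignment step. Proving that for every $h$ the set of valid heights $c$ really has the form $[0..c_{\max}(h)] \cup \{a^*\}$, with no further exceptional heights beyond those captured by $E$ and $F$, will require combining Lemma~\ref{PrimitiveVV} (which bounds how many shifted copies of $w$ and $\lvec{w}$ can coexist inside a single run) with the palindrome-substring tools from~\cite{GalilSeiferas2,KosolobovPalk,Manacher,RubinchikShur} already flagged by the authors before Lemma~\ref{SeparateLemma}. Once this palindromic analysis is carried out, the shift-and-AND construction of $E$ and $F$ and the identification of the single extra instance $(i_0, j_0)$ reduce to routine bookkeeping on bit arrays and on the $\lcp$ structure.
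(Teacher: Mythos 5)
There is a genuine gap, on two counts. First, you have misread what the array $F$ is for. In the lemma $\delta$ is a fixed input residue, and \emph{both} $E$ and $F$ encode instances with $|w_1| \equiv \delta \pmod d$; $F$ is not an ``exceptional residue'' device. Its role is to capture the instances $t[i..j]$ that satisfy~\eqref{eq:mainprop} with $i < i'$, i.e.\ whose prefix $s_1$ sticks out to the left of the run while $i+|s_1| \ge i'$. By a symmetric version of Lemma~\ref{wsw}, for such an instance either $i+|s_1| \in [i'..i'{+}d)$ or $i+|s_1| = i'+|\suff_d(s_1)|$. The first alternative is not ``a single position where $s_1$ straddles the boundary'': it can occur for up to $d$ different ending positions $h \in [b_1..b_2]$, each with the height forced to the maximal value $a_h \in \{a',a''\}$, and these must be handled in bulk (two more shift-and-AND passes against the $D_z$ arrays with $\eta = \delta + a'd$ and $\eta = \delta + a''d$, glued at the threshold $h''$), since checking them one by one in $O(r)$ each would cost $O(rd)$ and exceed the $O(r + \frac{rd}{\log n})$ budget. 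Only the second alternative ($i+|s_1| = i'+|\suff_d(s_1)|$) pins down a single position and yields the extra instance $(i_0,j_0)$. As written, your plan either misses the whole first family of boundary-crossing instances or cannot meet the stated time bound.

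Second, in the mixed case where $p$ contains both $x$ and $\lvec{x}$, your step one claims that the reversal constraints reduce to $O(1)$-checkable conditions on the pattern that fix at most two residues of $|w|$ modulo $d$, after which the same shift-and-AND machinery applies. This is not correct: whether a candidate substitution and its reversal ``respect each other'' inside the run depends on the ending position $h$ itself, through congruences of the form $2h \equiv \cdots \pmod d$ coming from the palindromic factorization $uv$ of the period (Lemma~\ref{Palsuv}), not only on $|w| \bmod d$. This is precisely why the bulk bitwise-AND argument breaks down there; the correct route (as in the paper) is to compute $|u|$ and $|v|$ from the pattern via Lemma~\ref{PrimitiveVV}, use Lemmas~\ref{PrimitivePalPair}, \ref{InARunPalPair} and \ref{FindPalPair} to show that at most two positions $h \in [b_1..b_2]$ are admissible, and then verify each of those individually in $O(r)$ time, so that $E$ has at most two nonzero entries and $F$ encodes at most one instance. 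You flag this as ``the main obstacle'' and name the right tools, but the argument you would need --- that the valid heights for each $h$ form the staircase $[0..c_h]$ and that only $O(1)$ positions $h$ survive the palindromic alignment --- is exactly the content of the proof and is not supplied.
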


The only case left is of in-two-runs instances of $p$. To solve this case we combine (in a rather technical way) the ideas of the previous cases. Instances of $p$ extending over two runs are determined by separators (as the period breaks once inside these instances), but the prefix and suffix of each instance, occurring before, resp. after, the separator can be extended just as in the case of instances occurring inside a single run, discussed above. The details are given in Appendix.

\bibliographystyle{splncs03}
\bibliography{pat}

\newpage
\section*{Appendix}

\subsection*{To Section~\ref{SectNonPeriodic}}

\paragraph{The continuation of the discussion before Lemma~\ref{FixedLen}.}

We have $v = t[q_1..q_1{+}|v|{-}1]$. Let $q_2 \in [q_1 + |vs_2| .. q_1 + |vs_2vv|]$ be the starting position of an occurrence of $\lvec{v}$. If $p = s_1xs_2\lvec{x}s_3$ or $p = s_1\lvec{x}s_2xs_3$, then we apply the following lemma to find all instances $t[i..j] = s_1ws_2\lvec{w}s_3$ of $p$ such that $\frac{3}2|v| < |w| \le 2|v|$, $i + |s_1| \le q_1 < q_1 + |v| \le i + |s_1w|$, and $q_1 + |v| - (i + |s_1|) = (i + |s_1ws_2w|) - q_2$; the latter equality guarantees that the string $t[q_2..q_2{+}|v|{-}1]$ in such instance is a reversal of $t[q_1..q_1{+}|v|{-}1]$ produced by the substitution $\lvec{w}$ (see Fig.~\ref{fig:xslvecx}).
\begin{lemma}
Let $p = s_1xs_2\lvec{x}s_3$ or $p = s_1\lvec{x}s_2xs_3$. Given a substring $t[h_1..h_2] = v$ and a position $q > h_2$ such that $t[q..q{+}|v|{-}1] = \lvec{v}$, we can compute in $O(1 + \frac{|v|}{\log n})$ time a bit array $occ[h_1{-}|s_1v|..h_1{-}|s_1|]$ such that, for any $i$, $occ[i] = 1$ iff $t[i..n]$ has a prefix $s_1ws_2\lvec{w}s_3$ such that $h_2 - (i + |s_1|) = (i + |s_1ws_2w| - 1) - q$ and $i + |s_1| \le h_1 < h_2 < i + |s_1w|$ (see Fig.~\ref{fig:xslvecx}).\label{PalPattern}
\end{lemma}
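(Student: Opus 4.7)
The plan rests on exploiting two simplifying facts implied by the equation $h_2 - (i + |s_1|) = (i + |s_1ws_2w|-1) - q$ from the lemma statement. First, this equation forces $|w|$ to be the affine function $\ell(i) = \tfrac{q + h_1 + |v| - |s_2|}{2} - |s_1| - i$. Second, both the last position of $w$, namely $W_e := i + |s_1| + \ell(i) - 1 = \tfrac{q+h_1+|v|-|s_2|}{2} - 1$, and the first position of $\lvec{w}$, namely $\bar W_s := W_e + |s_2| + 1$, are therefore \emph{independent of $i$}. In particular the position $W_e + 1 = \bar W_s - |s_2|$ at which $s_2$ must occur is fixed, so a single lookup in $D_2$ either forces $occ$ to be identically zero or lets us disregard $D_2$ henceforth; the position where $s_3$ must occur simplifies to $q + h_1 + |v| - |s_1| - i$, which decreases linearly with $i$.

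Next I would reduce the compatibility of $w$ with its reversal to a single $\lcp$ query. Expanding $t[i+|s_1|..W_e] = \lvec{t[\bar W_s..\bar W_s+\ell(i)-1]}$ coordinate-wise and substituting $m = \ell(i)-1-j$ turns it into $t[W_e - m] = t[\bar W_s + m]$ for $m \in [0..\ell(i)-1]$. Since $W_e$ and $\bar W_s$ do not depend on $i$, the condition holds precisely when $\ell(i) \le L$, where $L = \lcp(\bar W_s,\, 2n - W_e + 1)$ is read off the $\lcp$ structure of $t\lvec{t}$ in $O(1)$. Back-substituting $\ell(i)$ turns this into the one-sided inequality $i \ge i^* := \tfrac{q+h_1+|v|-|s_2|}{2} - |s_1| - L$, carving out a contiguous admissible sub-range $[\max(h_1-|s_1v|, i^*)..h_1-|s_1|]$ of the target interval. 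I would first perform the cheap sanity checks (the parity of $q+h_1+|v|-|s_2|$, so that $\ell(i)\in\mathbb{Z}$, and $q \ge h_1+|v|+|s_2|$, so that $v$ and $\lvec{v}$ fit inside $w$ and $\lvec{w}$); if either fails, $occ$ is returned as all zeros.

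What remains is to form $occ$ as the bitwise AND of three length-$(|v|+1)$ bitmasks: the admissible-range indicator, the slice $D_1[h_1-|s_1v|..h_1-|s_1|]$, and the bit vector whose $i$-th entry is $D_3[q+h_1+|v|-|s_1|-i]$. The range mask and the $D_1$ slice are read directly. For $D_3$, because its index decreases with $i$, the desired vector is the \emph{reversal} of a contiguous length-$(|v|+1)$ window of $D_3$. I would precompute once, as part of the global $O(n)$ preprocessing, a lookup table of bit-reversals of $\tfrac{\log n}{2}$-bit chunks; the table has size $O(\sqrt n)$ and lets each reversed machine word be produced in $O(1)$ time, so the entire reversed slice is assembled in $O(1+|v|/\log n)$ time, after which the bitwise AND finishes within the same budget.

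The main obstacle is the bookkeeping around the reversal: the source window of $D_3$, its reversed image, and the $D_1$ slice may begin at different bit offsets within their containing machine words, and all three must be shifted to a common offset before the AND. This is standard word-RAM manipulation, handled by the usual overlapping-two-words trick, but it is the only step that goes beyond $O(1)$ primitives and the place where the $|v|/\log n$ factor in the stated time bound actually appears.
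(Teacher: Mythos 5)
Your proposal is correct and follows essentially the same route as the paper's proof: the alignment equation fixes the position of $s_2$ (one $D_2$ lookup), an $\lcp$ query on $t\lvec{t}$ yields a contiguous admissible range of starting positions, and $occ$ is obtained by AND-ing the $D_1$ slice with a bit-reversed window of $D_3$ using a precomputed $O(\sqrt{n})$ word-reversal table. The only difference is cosmetic: you issue a single $\lcp$ query outward from the fixed center $W_e/\bar W_s$, whereas the paper checks the gap between $v$ and $\lvec{v}$ and separately measures the extension beyond them, which is an equivalent computation within the same time bound.
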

\begin{proof}
We first test whether $(h_2 + 1 + q - |s_2|)/2$ is integer and $D_2[(h_2 + 1 + q - |s_2|)/2] = 1$ to check that $s_2$ occurs precisely between the substitutions of $x$ and $\lvec{x}$ (see Fig.~\ref{fig:xslvecx}). Using the $\lcp$ structure for the string $t\lvec{t}$, we check in $O(1)$ time  that $t[h_2{+}1] = t[q{-}1], t[h_2{+}2] = t[q{-}2], \ldots$ and find the length $b$ of the longest common prefix of $\lrange{t[1..h_1{-}1]}$ and $t[q{+}|v|..n]$. Then, for each $i \in [h_1 - |s_1| - \min\{b,|v|\} .. h_1 - |s_1|]$, we have $occ[i] = 1$ iff $D_1[i] = 1$ and $D_3[q + h_2 - (i + |s_1|) + 1] = 1$; for all other $i \in [h_1 - |s_1v| .. h_1 - |s_1|]$, we have $occ[i] = 0$. Thus, we obtain $occ$ performing the bitwise ``and'' of the corresponding subarray of $D_1$ and the corresponding reversed subarray of $D_3$. The length of both these subarrays is bounded by $|v|$. To obtain the reversed subarray efficiently, we utilize a precomputed table of size $O(2^{\frac{\log n}{2}}) = O(\sqrt{n})$ that allows us to reverse the order of bits in one $\Theta(\log n)$-bit machine word in $O(1)$ time. Thus, the running time of this algorithm is $O(\frac{|v|}{\log n})$.\qed
\begin{figure}[htb]
\includegraphics[scale=0.2]{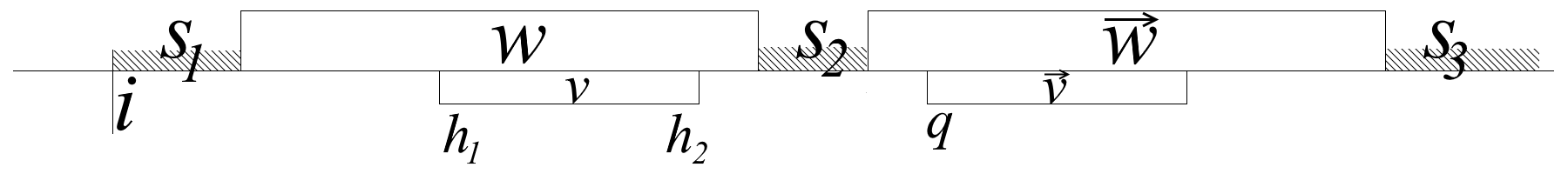}
\caption{\small An instance of $p = s_1xs_2\lvec{x}s_3$ in the proof of Lemma~\ref{PalPattern}.}
\label{fig:xslvecx}
\end{figure}
\end{proof}

Suppose that $p$ starts with $s_1xs_2\lvec{x}$ but is not equal to $s_1xs_2\lvec{x}s_3$ (the case when $p$ starts with $s_1\lvec{x}s_2x$ is analogous). We are to compute all possible lengths of the substitutions of $x$ from the range $(\frac{3}2|v|..2|v|]$ that could produce the found occurrences of $v$ and $\lvec{v}$ starting at positions $q_1$ and $q_2$, respectively. Since $v$ is not periodic, there are at most four occurrences of $v$ [or~$\lvec{v}$] starting in the segment $[q_2 + |vs_3|..q_2 + |vs_3vv|]$. We find all these occurrences with the aid of Lemma~\ref{vFind} putting $\lambda = |s_3|$. If $x_3 = x$ (resp.,~$x_3 = \lvec{x}$), then, given the starting position $q_3$ of an occurrence of $v$ (resp.,~$\lvec{v}$) such that $q_3 \in [q_2 + |vs_3| .. q_2 + |vs_3vv|]$, the number $\beta = (q_3 - q_1 - |s_2s_3|) / 2$ (resp.,~$\beta = q_3 - q_2 - |s_3|$) is equal to the length of the corresponding substitution of $x$ that could produce the found occurrences of $v$ and $\lvec{v}$ starting at $q_1$, $q_2$, and $q_3$. Thus, we obtain a constant number of possible lengths for substitutions of $x$~and, for each of the found lengths, we apply Lemma~\ref{FixedLen}.

To discuss the full version of the proof of Lemma~\ref{FixedLen}, we need an additional tool. A \emph{Lyndon root} (\emph{reversed Lyndon root}) of a run $t[i..j]$ of period $d$ is its lexicographically smallest substring $t[\ell{+}1..\ell{+}d]$ (resp., $\lrange{t[\ell{+}1..\ell{+}d]}$) with~$\ell \in [i{-}1..j{-}d]$. The following result is Lemma~1 from~\cite{CrochemoreEtAl}.

\begin{lemma}
The leftmost (reversed) Lyndon root of any run in $t$ can be found in $O(1)$ time assuming $O(n)$ time preprocessing.\label{LyndonRoots}
\end{lemma}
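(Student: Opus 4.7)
The plan is to reduce the query to a range minimum query on the inverse suffix array. In the preprocessing phase I would build in $O(n)$ time the suffix array, inverse suffix array, and $\lcp$ structure of $t$ (and analogously for $\lvec{t}$, for the reversed case), and equip each inverse suffix array with a linear-space RMQ structure answering queries in $O(1)$ time.

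Given a run $t[i..j]$ of minimal period $d$, I claim that the leftmost Lyndon root sits at the unique position $\ell^* \in [i-1..i+d-2]$ that minimizes $t[\ell^*+1..\ell^*+d]$, and that $\ell^*$ is returned by an RMQ on the inverse suffix array over the range $[i..i+d-1]$. The justification rests on two observations. First, by periodicity, $t[\ell+1..\ell+d] = t[\ell-d+1..\ell]$ whenever $\ell \ge i+d-1$ and $\ell \le j-d$, so every length-$d$ rotation appearing in the run already occurs starting at some $\ell \in [i-1..i+d-2]$. Second, the $d$ length-$d$ substrings starting at these $d$ positions are pairwise distinct: any equality between $t[\ell_1+1..\ell_1+d]$ and $t[\ell_2+1..\ell_2+d]$ for distinct $\ell_1, \ell_2$ in this range would, via the Fine and Wilf theorem applied inside $t[i..j]$, force a period of $t[i..j]$ strictly smaller than $d$, contradicting minimality. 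Combining the two observations, $\ell^*$ is indeed the leftmost occurrence of a Lyndon root. Moreover, pairwise distinctness implies that any two of these length-$d$ substrings already disagree within their first $d$ characters, so comparing them lexicographically is equivalent to comparing the corresponding suffixes of $t$; hence the lex-smallest suffix among those starting in $[i..i+d-1]$ pinpoints $\ell^*$, and RMQ on the inverse suffix array returns it in $O(1)$ time.

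For the leftmost reversed Lyndon root, I would carry out the same argument inside $\lvec{t}$: the string $\lrange{t[\ell+1..\ell+d]}$ equals $\lvec{t}[n-\ell-d+1..n-\ell]$, so as $\ell$ ranges over $[i-1..i+d-2]$ the corresponding starting positions in $\lvec{t}$ form a contiguous interval of length $d$, on which an RMQ over the inverse suffix array of $\lvec{t}$ identifies the minimizing position; a simple affine conversion then recovers $\ell^*$. The only subtle step in the plan is the distinctness claim needed to reduce length-$d$ comparison to suffix comparison; I expect this minimality-of-period argument to be the main (though quite mild) obstacle, since once it is in place the $O(1)$ query follows immediately from the RMQ on the inverse suffix array.
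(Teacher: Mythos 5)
Your proposal is sound. Note first that the paper itself offers no proof of this statement: Lemma~\ref{LyndonRoots} is imported verbatim as Lemma~1 of~\cite{CrochemoreEtAl}, so there is no in-paper argument to compare against. What you wrote is essentially a reconstruction of the standard argument behind that citation: rank/inverse-suffix-array plus $O(1)$ RMQ over the block of $d$ starting positions at the beginning of the run, and symmetrically over $\lvec{t}$ for reversed Lyndon roots. The two key points you identify are exactly the ones that need checking, and both go through: (i) by $d$-periodicity every length-$d$ window of the run is a copy of one starting among the first $d$ positions, so the leftmost occurrence of the minimal window lies there; (ii) the $d$ windows starting at these positions are pairwise distinct, since an equality at shift $e<d$ gives a factor of length $d+e$ inside the run with periods $d$ and $e$, hence period $\gcd(d,e)<d$ by Fine and Wilf, which propagates to the whole run and contradicts minimality of $d$. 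Distinctness then forces the longest common prefix of any two of the corresponding suffixes to be below $d$, so suffix order and window order agree on this set and the RMQ answer is the (unique, hence leftmost) Lyndon root position; the reversed case is identical after the affine index change in $\lvec{t}$. As a small simplification, point (ii) also follows directly from the paper's Lemma~\ref{PrimitiveVV}: the run's root of length $d$ is primitive, and two equal windows at shift $e<d$ would give a third occurrence of the root inside a square of it, so you could avoid invoking Fine and Wilf altogether; either route is fine.
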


Now we are ready to present the proof of Lemma~\ref{FixedLen}, which is full of non-trivial technical details.

\begin{proof}[Lemma~\ref{FixedLen}, continuation]
\emph{The case when $p$ contains both $x$ and $\lvec{x}$.}
For brevity, denote $s(i) = h_2 - \beta - |s_1| + i + 1$. Our aim is to ``filter'' the bit array $D[0..\beta{-}|v|]$ so that, for any $i$, it will be guaranteed that $D[i] = 1$ iff there is an instance $s_1w_1s_2w_2\cdots s_{r-1}w_{r-1}s_r$ of $p$ starting at $s(i)$ and such that $|w_1| = \cdots = |w_{r-1}| = \beta$ and $s(i) + |s_1| \le h_1 < h_2 < s(i) + |s_1| + \beta$. Without loss of generality, suppose that $p$ has a prefix $s_1x$ (the case of the prefix $s_1\lvec{x}$ is symmetrical). For each $h\in (0..8]$, we ``filter'' the subarray $D[(y - 1)\lceil\frac{\beta}{8}\rceil..y\lceil\frac{\beta}{8}\rceil{-}1]$ in $O(r + \frac{\beta}{\log n})$ time (assuming $D[i] = 0$ for $i > \beta - |v|$ so that these subarrays are well defined); hence, the overall running time of our filtration algorithm is $O(r + \frac{\beta}{\log n})$.

Fix $y \in (0..8]$. Suppose that there are two positions $i, i' \in [(y - 1)\lceil\frac{\beta}{8}\rceil..y\lceil\frac{\beta}{8}\rceil)$ such that $i < i'$ and there are two instances of $p$ starting at positions $s(i)$ and $s(i')$, respectively, such that the lengths of the substitutions of $x$ in both these instances are equal to $\beta$; denote by $w$ and $w'$ the corresponding substitutions of $x$ in these instances. Clearly $D[i] = D[i'] = 1$. It turns out that in this case $w$ and $w'$ both are periodic and, relying on this fact, we will deduce some regularities in the distribution of positions $i'' \in [(y - 1)\lceil\frac{\beta}{8}\rceil..y\lceil\frac{\beta}{8}\rceil)$ such that $t[s(i'') .. s(i'') + |s_1s_2\cdots s_r| + (r - 1)\beta - 1]$ is an instance of $p$; so, it will suffice to assign $D[i''] = 0$ for all ``non-regular'' positions $i''$. Let us describe precisely the nature of these regularities.

Denote $\gamma = i' - i$. Fix the minimal number $z \in (1..r)$ such that $x_z = \lvec{x}$. Since $\gamma \le \frac{\beta}{8}$, it follows from Fig.~\ref{fig:respect} that $w$ and $w'$ both have period $2\gamma \le \frac{\beta}{4}$.

\begin{figure}[htb]
\includegraphics[scale=0.3]{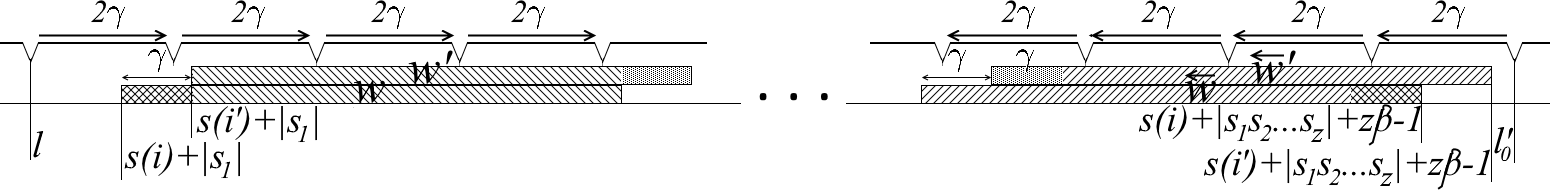}
\caption{\small Substitutions of $x$ and $\lvec{x}$ in two instances of $p$.}
\label{fig:respect}
\end{figure}

Then, for any $z' \in [1..r)$, the string $t[s(y\lceil\frac{\beta}8\rceil) + |s_1s_2\cdots s_{z'}| + (z' - 1)\beta .. s(y\lceil\frac{\beta}8\rceil) + |s_1s_2\cdots s_{z'}| + (z' - 1)\beta + \lceil\frac{\beta}2\rceil]$ is a substring of the substitutions of $x_{z'}$ in the instances of $p$ starting at $s(i)$ and $s(i')$. Therefore, this string has period $2\gamma \le \frac{\beta}4$. Applying Lemma~\ref{SubstringRun} for this string, we find in $O(1)$ time a run $t[i_{z'}..j_{z'}]$ with the minimal period $d_{z'} \le 2\gamma$ that contains the substitutions of $x_{z'}$ occurring at positions $s(i) + |s_1s_2\cdots s_{z'}| + (z' - 1)\beta$ and $s(i') + |s_1s_2\cdots s_{z'}| + (z' - 1)\beta$, respectively (note that $d_{z'}$ might not equal $2\gamma$); the run $t[i_{z'}..j_{z'}]$ must exist because otherwise there cannot exist such positions $i$ and $i'$ corresponding to two instances of $p$. The found runs $t[i_1..j_1], t[i_2..j_2], \ldots, t[i_{r-1}..j_{r-1}]$ are uniquely determined by $y \in (0..8]$ and their choice does not depend on the choice of $i$ or $i'$. Moreover, since the choice of $i$ and $i'$ was arbitrary, it follows that if, for some $i'' \in [(y - 1)\lceil\frac{\beta}{8}\rceil..y\lceil\frac{\beta}{8}\rceil)$, there is an instance of $p$ starting at position $s(i'')$ with substitutions of length $\beta$, then, for any $z' \in [1..r)$, the substitution of $x_{z'}$ in this instance is a substring of $t[i_{z'}..j_{z'}]$.

We check in $O(1)$ time whether $d_1 = d_2 = \cdots = d_{r-1}$ (if not, then there cannot exist such positions $i$ and $i'$ corresponding to two instances of $p$). Denote by $\ell$ and $\ell'_0$, respectively, the starting position of a Lyndon root of $t[i_1..j_1]$ and the ending position of a reversed Lyndon root of $t[i_z..j_z]$; $\ell$ and $\ell'_0$ can be computed in $O(1)$ time by Lemma~\ref{LyndonRoots}. Obviously, we necessarily have $\lrange{t[\ell..\ell{+}d{-}1]} = t[\ell'_0{-}d{+}1..\ell'_0]$. We check this condition using the $\lcp$ structure on the string $t\lvec{t}$. It follows from Lemma~\ref{PrimitiveVV} that the distance between $\ell$ and the starting position of $w$ in $t[i_1..j_1]$ must be equal to the distance between $\ell'_0$ and the ending position of $\lvec{w}$ in $t[i_z..j_z]$ modulo $d$, i.e., $s(i) + |s_1| - \ell \equiv \ell'_0 - (s(i) + |s_1s_2\cdots s_z| + z\beta - 1) \pmod{d}$ (see Fig.~\ref{fig:respect}). The latter is equivalent to the equality $2s(i) \equiv \ell + \ell'_0 - |s_1| - |s_1s_2\cdots s_z| - z\beta + 1 \pmod{d}$. The right hand side of this equality, denoted $\eta$, can be calculated in $O(r)$ time. Thus, since the choice of $i$ and $i'$ was arbitrary, we have $2s(i'') \equiv \eta \pmod{d}$ for any $i'' \in [(y - 1)\lceil\frac{\beta}{8}\rceil..y\lceil\frac{\beta}{8}\rceil)$ such that the string $t[s(i'') .. s(i'')+ |s_1s_2\cdots s_r| + (r - 1)\beta - 1]$ is an instance of $p$.

It turns out that, in a sense, the converse is also true. Suppose that $i \in [(y - 1)\lceil\frac{\beta}{8}\rceil..y\lceil\frac{\beta}{8}\rceil)$, $D[i] = 1$, $2s(i) \equiv \eta \pmod{d}$, and, for each $z' \in [1..r)$, the string $w_{z'} = t[s(i) + |s_1s_2\cdots s_{z'}| + (z' - 1)\beta .. s(i) + |s_1s_2\cdots s_{z'}| + z'\beta - 1]$, which is a ``suspected'' substitution of $x_{z'}$ in the string $t[s(i)..s(i) + |s_1s_2\cdots s_r| + (r - 1)\beta - 1]$, is contained in the run $t[i_{z'}..j_{z'}]$. Choose $z', z'' \in [1..r)$ such that $x_{z'} = x$ and $x_{z''} = \lvec{x}$. If $z' = 1$ and $z'' = z$, then the equalities $2s(i) \equiv \eta \pmod{d}$ and $\lrange{t[\ell..\ell{+}d{-}1]} = t[\ell'_0{-}d{+}1..\ell'_0]$ imply that $w_{z'} = w_{z''}$. It follows from the equality $D[i] = 1$ that, for any $z', z'' \in [1..r)$, if $x_{z'} = x_{z''}$, then $w_{z'} = w_{z''}$. Therefore, the string $t[s(i)..s(i) + |s_1s_2\cdots s_r| + (r - 1)\beta - 1]$ is an instance of $p$.

It is easy to verify that $2s(i) \equiv \eta \pmod{d}$ iff either $s(i) \equiv \frac{\eta}2 \pmod{d}$ or $s(i) \equiv \frac{\eta + d}2 \pmod{d}$. So, using an appropriate bit mask and the bitwise ``and'' operation on $\Theta(\log n)$-bit machine words, we can assign $D[i''] = 0$ for all $i'' \in [(y - 1)\lceil\frac{\beta}{8}\rceil..y\lceil\frac{\beta}{8}\rceil)$ such that $2s(i'') \not\equiv \eta \pmod{d}$ in $O(\frac{\beta}{\log n})$ time. Then, according to the starting and ending positions of the runs $t[i_1..j_1], \ldots, t[i_{r-1}..j_{r-1}]$, we calculate in an obvious way in $O(r)$ time the exact subrange $[k_1..k_2] \subset [(y - 1)\lceil\frac{\beta}{8}\rceil..y\lceil\frac{\beta}{8}\rceil)$ such that, for any $i \in [k_1..k_2]$ and any $z' \in [1..r)$, the string $t[s(i) + |s_1s_2\cdots s_{z'}| + (z' - 1)\beta .. s(i) + |s_1s_2\cdots s_{z'}| + z'\beta - 1]$ (a ``suspected'' substitution of $x_{z'}$) is a substring of $t[i_{z'}..j_{z'}]$. Finally, we fill the subarrays $D[(y - 1)\lceil\frac{\beta}{8}\rceil .. k_1 - 1]$ and $D[k_2 + 1 .. y\lceil\frac{\beta}{8}\rceil - 1]$ with zeros in $O(\frac{\beta}{\log n})$ time.

If something was wrong in the above scenario or simply the final array $D[(y - 1)\lceil\frac{\beta}{8}\rceil .. y\lceil\frac{\beta}{8}\rceil{-}1]$ contains only zeros, then we still can have exactly one position $i \in [(y - 1)\lceil\frac{\beta}{8}\rceil .. y\lceil\frac{\beta}{8}\rceil)$ such that $t[s(i)..s(i) + |s_1s_2\cdots s_r| + (r - 1)\beta - 1]$ is an instance of $p$. The minimal period of the substitution of $x$ in such instance will necessarily be greater than $\frac{\beta}4$ (otherwise, we would find this instance by the above filtering algorithm). Note that such instance of $p$ should contain the substring $t[s(y\lceil\frac{\beta}{8}\rceil) + |s_1| .. s(y\lceil\frac{\beta}{8}\rceil) + |s_1| + \lceil\frac{\beta}{2}\rceil]$ in the substitution of $x_1$. Thus, the substitution of $x_{z-1} = x$ (recall that $x_1 = x$ and $z$ is the minimal number such that $x_z = \lvec{x}$) in such instance should contain the substring $\mu = t[q .. q + \lceil\frac{\beta}{2}\rceil]$, where $q = s(y\lceil\frac{\beta}{8}\rceil) + |s_1s_2\cdots s_{z-1}| + (z - 2)\beta$.

Suppose that the minimal period of $\mu$ is greater than $\frac{\beta}4$; since $\frac{\beta}4 \le \frac{|\mu|}2$, this condition can be checked in $O(1)$ time by Lemma~\ref{SubstringRun}. Since $\mu$ is supposed to be a substring of a substitution of $x_{z-1} = x$ in the required instance of $p$, the string $\lvec{\mu}$ must occur in the string $t[q + \lceil\frac{\beta}{2}\rceil + |s_z| .. q + 2\beta + |s_z|]$ as a substring of a substitution of $x_z = \lvec{x}$ (of course, if there exists such instance of $p$). Since the minimal period of $\lvec{\mu}$ is greater than $\frac{\beta}4$, any two occurrence of $\lvec{\mu}$ cannot overlap on $|\mu| - \frac{\beta}4 \ge \frac{\beta}4$ letters. Therefore, there are at most $2\beta / \frac{\beta}4 = 8$ occurrence of $\lvec{\mu}$ in $t[q + \lceil\frac{\beta}{2}\rceil + |s_z| .. q + 2\beta + |s_z|]$. We find all these occurrences of $\lvec{\mu}$ using a slightly modified algorithm from the proof of Lemma~\ref{vFind} putting $\lambda = |s_z|$; it takes $O(r + \frac{r|\mu|}{\log n})$ time plus $O(\frac{\log n}{\log\log n})$ time, if $\frac{\log}{16\log\log n} \le |v| \le \log n$. Each found occurrence of $\lvec{\mu}$ specifies exactly one possible instance of $p$ with substitutions of length $\beta$ and the starting position of this instance can be easily calculated. To test whether the substitutions of the variables $x$ and $\lvec{x}$ in this instance respect each other, we utilize the $\lcp$ structure.

Finally, suppose that the minimal period of $\mu$ is less than or equal to $\frac{\beta}4$. Since $\frac{\beta}4 \le \frac{|\mu|}2$, by Lemma~\ref{SubstringRun}, we can find in $O(1)$ time a run $t[i''..j'']$ containing $\mu$ and having the same minimal period as $\mu$. Denote $\mu_1 = t[i'' - 1 .. s(y\lceil\frac{\beta}{8}\rceil) + |s_1s_2\cdots s_{z-1}| + (z - 2)\beta + y\lceil\frac{\beta}{2}\rceil]$ and $\mu_2  = t[s(y\lceil\frac{\beta}{8}\rceil) + |s_1s_2\cdots s_{z-1}| + (z - 2)\beta .. j'' + 1]$. Note that $\mu_1$ (resp.,~$\mu_2$) is the minimal extension of $\mu$ to the left (resp.,~right) that ``breaks'' the minimal period of $\mu$. One can show that both $\mu_1$ and $\mu_2$ are not periodic. Since the minimal period of the substitution of $x$ in the instance of $p$ that we are searching for must be greater than $\frac{\beta}4$, this substitution must contain either $\mu_1$ or $\mu_2$. So, to find this instance, it suffices to execute the algorithm similar to that described above putting $\mu = \mu_1$ and $\mu = \mu_2$.\qed
\end{proof}

\subsection*{To Section~\ref{SectPeriodic}}

\paragraph{Three and more runs.}
Here in appendix we also consider the case when $p$ contains both $x$ and $\lvec{x}$. As above, this case turned out to be quite difficult and we need additional machinery for this. The following result is Lemma~14 from~\cite{KosolobovPalk}.

\begin{lemma}
For any primitive string $w$, there exists at most one pair of palindromes $u$ and $v$ such that $v\ne\epsilon$ and $w = uv$.\label{PrimitivePalPair}
\end{lemma}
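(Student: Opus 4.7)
My plan is to use a cyclic-rotation argument that directly leverages the primitivity hypothesis through Lemma~\ref{PrimitiveVV}. Suppose, toward a contradiction, that the primitive string $w$ admits two palindromic factorizations $w = uv = u'v'$ with $v, v' \ne \epsilon$, and, without loss of generality, $|u| \le |u'|$; it suffices to rule out $|u| < |u'|$.

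The first step exploits the palindromic nature of all four factors: since $u = \lvec{u}$ and $v = \lvec{v}$, we have $\lvec{w} = \lvec{v}\lvec{u} = vu$, and analogously $\lvec{w} = v'u'$. Hence $vu = v'u'$. The crucial observation is that $vu$ is precisely the left cyclic rotation of $w = uv$ by $|u|$ positions, and $v'u'$ is the left cyclic rotation of $w$ by $|u'|$ positions. So the equation $vu = v'u'$ asserts that two cyclic rotations of $w$ coincide as strings.

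Next, I would invoke the following consequence of Lemma~\ref{PrimitiveVV}: if $w$ is primitive of length $n$, then the $n$ cyclic rotations $\rho_0(w), \rho_1(w), \ldots, \rho_{n-1}(w)$ are pairwise distinct. Indeed, cyclic rotation is a $\mathbb{Z}/n$-action, so $\rho_i(w) = \rho_j(w)$ with $0 \le i < j < n$ forces $\rho_{j-i}(w) = w$, which in turn means that $w$ occurs in $ww$ at position $j - i + 1 \in (1..n]$, contradicting the fact that $w$ occurs only at positions $1$ and $n+1$ in $ww$ by Lemma~\ref{PrimitiveVV}. Applying this to our situation, since both $|u|$ and $|u'|$ lie in $[0..n)$ (because $v, v' \ne \epsilon$), the equality $\rho_{|u|}(w) = \rho_{|u'|}(w)$ yields $|u| = |u'|$, and then the factorizations $w = uv$ and $w = u'v'$ of equal splitting length coincide.

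There is no real obstacle here; the argument is a two-line observation once the rotation viewpoint is adopted. The only point that requires a brief justification, rather than mere citation, is the distinctness of all $n$ rotations of a primitive string, which is not stated as a standalone lemma but is a direct corollary of Lemma~\ref{PrimitiveVV} via the group-action reduction above.
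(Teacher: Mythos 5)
Your argument is correct. Each step checks out: from $u=\lvec{u}$, $v=\lvec{v}$ you get $\lvec{w}=vu=v'u'$, and $vu$, $v'u'$ are the left rotations of $w=uv=u'v'$ by $|u|$ and $|u'|$ respectively, both amounts lying in $[0..|w|)$ because $v,v'\ne\epsilon$; the pairwise distinctness of the $|w|$ rotations of a primitive word does follow from Lemma~\ref{PrimitiveVV} exactly as you say (a nontrivial fixed rotation of $w$ produces a third, interior occurrence of $w$ in $ww$), so $|u|=|u'|$ and the two factorizations coincide. Note, however, that the paper itself contains no proof of Lemma~\ref{PrimitivePalPair}: it is imported verbatim as Lemma~14 of the cited work of Kosolobov, Rubinchik and Shur, so there is no in-paper argument to compare against. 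What your proof buys is self-containedness: it derives the statement purely from Lemma~\ref{PrimitiveVV}, the only combinatorial primitive the paper already has on hand, rather than relying on the external reference. The one point you flag as needing justification --- that a primitive word has $|w|$ pairwise distinct rotations --- is indeed the only non-quotable ingredient, and your group-action reduction to Lemma~\ref{PrimitiveVV} handles it cleanly.
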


\begin{lemma}
Suppose that two strings $w$ and $\lvec{w}$ of length ${\ge}3d$ both lie in a run with the minimal period $d$. Then, there exists a unique pair of palindromes $u,v$ such that $|uv| = d$, $v \ne \epsilon$, and $\lvec{w}$ is a prefix of the infinite strings $(vu)^{\infty}$.\label{Palsuv}
\end{lemma}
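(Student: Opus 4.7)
The plan is to exploit that $w$ and $\lvec{w}$ both sit in a single run of minimal period $d$, which forces the cyclic-conjugacy class of the primitive period of length $d$ to be closed under reversal. From this closure a palindromic decomposition of $\tau := \lvec{w}[1..d]$ emerges, and uniqueness follows from Lemma~\ref{PrimitivePalPair}. The main obstacle is the combinatorial extraction of palindromes from the resulting rotation equation; everything around it is either Fine--Wilf bookkeeping or a one-line reduction to Lemma~\ref{PrimitivePalPair}.

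First I would note that, since $|w|, |\lvec{w}| \ge 3d \ge 2d$, Fine--Wilf rules out any period strictly smaller than $d$ for either string (such a period would propagate to the primitive period $\rho$ of the run, contradicting its primitivity). Hence $w$ is a prefix of $\sigma^{\infty}$ for $\sigma := w[1..d]$ and $\lvec{w}$ is a prefix of $\tau^{\infty}$ for $\tau := \lvec{w}[1..d]$, with $\sigma$ and $\tau$ both primitive cyclic conjugates of $\rho$. Next I would take $\sigma' := w[|w|{-}d{+}1..|w|]$, the last $d$ letters of $w$: it is a cyclic rotation of $\sigma$, hence primitive, and by construction $\tau = \lvec{\sigma'}$. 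Since $\tau$ and $\sigma'$ sit in the same conjugacy class and $\sigma'$ is primitive, there is a unique $k \in [0..d)$ with $\lvec{\sigma'} = \sigma'[k{+}1..d]\sigma'[1..k]$.

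Setting $p := \sigma'[1..k]$ and $q := \sigma'[k{+}1..d]$, we get $\sigma' = pq$ and $\lvec{\sigma'} = qp$; combining with $\lvec{\sigma'} = \lvec{q}\lvec{p}$ and matching the first $|q|$ and last $|p|$ letters forces $q = \lvec{q}$ and $p = \lvec{p}$. Taking $v := q$ and $u := p$ yields palindromes with $|uv| = d$, $\tau = vu$, and $|v| = d - k \ge 1$, so $\lvec{w}$ (which has period $d$ and begins with $\tau = vu$) is a prefix of $(vu)^{\infty}$. For uniqueness, two decompositions $\tau = v_1 u_1 = v_2 u_2$ with $u_i, v_i$ palindromes and $v_i \ne \epsilon$ give, upon reversal and using palindromicity, $\lvec{\tau} = u_1 v_1 = u_2 v_2$ with the right-hand palindromes $v_i$ non-empty; since $\lvec{\tau}$ is primitive, Lemma~\ref{PrimitivePalPair} forces $(u_1,v_1) = (u_2,v_2)$.
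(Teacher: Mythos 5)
Your proposal is correct and takes essentially the same route as the paper: write the length-$d$ window $\lvec{w}[1..d]$ as a rotation of a window of $w$ inside the run, observe that reversal turns this rotation identity into $vu=\lvec{v}\lvec{u}$ so both factors are palindromes, and get uniqueness from Lemma~\ref{PrimitivePalPair} applied to the primitive conjugate. Your write-up is in fact slightly more careful than the paper's one-liner, since you compare $\lvec{w}[1..d]$ with the \emph{last} $d$ letters of $w$ (so that $\tau=\lvec{\sigma'}$ holds exactly), whereas the paper informally uses $w[1..d]$.
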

\begin{proof}
Since $\lvec{w}[1..d]$ and $w[1..d]$ are substrings of the same run with period $d$, we have $\lvec{w}[1..d] = vu$ and $w[1..d] = uv$ for some strings $u$ and $v$ such that $v \ne \epsilon$, i.e., $vu = \lrange{uv} = \lvec{v}\lvec{u}$. Hence, $u$ and $v$ are palindromes. By Lemma~\ref{PrimitivePalPair}, this pair of palindromes is unique.\qed
\end{proof}

\begin{lemma}
Let $t[i'..j']$ be a run with period $d$. If, for $h \in [i'{+}d..j'{+}1]$, $t[h{-}d..h{-}1] = uv$ for some palindromes $u$ and $v$ such that $v\ne\epsilon$, then, for any $h' \in [i'{+}d..j'{+}1]$, we have $t[h'{-}d..h'{-}1] = u'v'$ for palindromes $u'$ and $v'$ such that $|u'| = (|u| - 2(h' - h)) \bmod d$.\label{InARunPalPair}
\end{lemma}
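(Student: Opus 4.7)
I would prove the claim by induction on $|h' - h|$, reducing it to the one-shift step: if $t[h-d..h-1] = uv$ with $u, v$ palindromes, $|u| = a \in [0,d)$, and $|v| \ge 1$, then $t[h+1-d..h]$ also factorises as $u'v'$ with palindromes and $|u'| = (a-2) \bmod d$. Since the run has period $d$, the content $t[h'-d..h'-1]$ depends only on $h' \bmod d$, and the quantity $(|u| - 2(h'-h)) \bmod d$ is additive in $h' - h$, so iterating the one-shift step in both directions covers every $h' \in [i'+d..j'+1]$ and preserves the target formula.

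For the one-shift step, the crucial observation is that by the period-$d$ identity $t[h] = t[h-d]$, the new window equals the cyclic rotation $w_2 w_3 \cdots w_d w_1$ of the old window $w = w_1 w_2 \cdots w_d = uv$. I would then split into three cases on $a$. When $a \ge 2$, write $u = c Y c$ where $c$ is the equal first/last letter of the palindrome $u$ and $Y$ is a palindrome of length $a - 2$; the rotated window becomes $Y \cdot (c v c)$, both factors palindromic since wrapping a palindrome in equal letters preserves palindromy, so $|u'| = a - 2$. When $a = 1$, the rotation is just $v \cdot c$, factoring into the palindrome $v$ of length $d - 1$ followed by the one-letter palindrome $c$, so $|u'| = d - 1 = (-1) \bmod d$. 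When $a = 0$, the window $w$ is itself a palindrome of length $d$, hence $w_1 = w_d$, and the rotation $w_2 \cdots w_{d-1} w_d w_1$ splits as the palindrome $w_2 \cdots w_{d-1}$ of length $d - 2$ followed by the palindrome $w_1 w_1$, so $|u'| = d - 2 = (-2) \bmod d$.

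I do not expect any serious obstacle. The only delicate point is the wrap-around in the two boundary cases $a \in \{0,1\}$, where $a - 2$ becomes negative and the new palindromic split effectively slides across the cyclic seam of the window; here the palindrome property of $v$ supplies precisely the letter equalities (notably $w_1 = w_d$ when $a = 0$) that legitimise the new decomposition, and no deeper combinatorial tool is needed. Once the correct case distinction is in place, the induction step is by direct inspection, and the extension to arbitrary $h'$ follows immediately from the additivity of the length formula modulo $d$.
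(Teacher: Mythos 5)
Your proposal is correct and takes essentially the same route as the paper's proof: reduce to a shift-by-one step via the period-$d$ cyclic rotation of the window, then split into the same three cases $|u|\ge 2$, $|u|=1$, $|u|=0$ with the same palindromic redecompositions ($u=cYc \mapsto Y\cdot cvc$, etc.). The only (trivial) difference is that the paper handles the degenerate case $d=1$ separately, where your $|u|=0$ split into $w_2\cdots w_{d-1}$ followed by $w_1w_1$ is not literally defined, but there the claim is immediate with $u'=u$, $v'=v$.
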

\begin{proof}
Let $h' = h + 1$ (the case $h' = h - 1$ is analogous). It suffices to prove that $t[h'{-}d..h'{-}1] = u'v'$ for palindromes $u'$ and $v'$ such that $|u'| = (|u| - 2) \bmod d$. Since $t[h'{-}d..h'{-}1]$ is a suffix of $uv\cdot u[1]$ (or $v\cdot v[1]$ if $|u| = 0$), we have $t[h'{-}d..h'{-}1] = u'v'$, where $u' = u, v' = v$, if $d = 1$, and $u', v'$ are defined as follows, if $d > 1$:\\
\indent $u = u[1]u'u[1]$ if $|u| > 1$,  $u' = v$ if $|u| = 1$,  $v = v[1]u'v[1]$ if $|u| = 0$,\\
\indent $v' = u[1]vu[1]$ if $|u| > 1$,  $v' = u$ if $|u| = 1$,  $v' = v[1]v[1]$ if $|u| = 0$.\qed
\end{proof}

\newcommand{\len}{\mathop{\mathsf{len}}}
\newcommand{\pal}{\mathop{\mathsf{pal}}}
\newcommand{\link}{\mathop{\mathsf{link}}}
\newcommand{\seriesLink}{\mathop{\mathsf{seriesLink}}}

\begin{lemma}
Assuming $O(n)$ time preprocessing, one can find for any substring $t[i..j]$ in $O(1)$ time a pair of palindromes $u, v$ such that $t[i..j] = uv$ and $v\ne \epsilon$ or decide that there is no such pair.\label{FindPalPair}
\end{lemma}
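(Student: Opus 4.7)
The plan is to reduce the decision to a constant number of $O(1)$ palindrome tests via a preprocessed palindromic-suffix data structure. In $O(n)$ time I would build: (i) the $\lcp$ data structure on $t\lvec{t}$, which decides in $O(1)$ time whether any substring of $t$ is a palindrome (by a single $\lcp$ query between the substring and its reversed copy inside $\lvec{t}$); and (ii) the eertree of $t$ (and of $\lvec{t}$) enriched with series links in the sense of Rubinchik--Shur~\cite{RubinchikShur}, together with the Lyndon roots of all runs (Lemma~\ref{LyndonRoots}). These structures give $O(1)$ access, for any position $j$ of $t$, to the longest palindromic suffix of $t[1..j]$ of length at most a prescribed bound and to a small table describing the arithmetic progression formed by its ``long'' palindromic borders inside the ambient run of $t$.

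For a query on $t[i..j]$, write $n' = j-i+1$ and first test in $O(1)$ whether $t[i..j]$ is itself a palindrome; if so, output $(u,v) = (\epsilon, t[i..j])$. Otherwise, since any decomposition $t[i..j] = uv$ satisfies $|u|+|v|=n'$, one of $u$, $v$ must have length at least $\lceil n'/2\rceil$. I therefore restrict the search to palindromic suffixes $v$ of $t[i..j]$ of length $\ge \lceil n'/2\rceil$ whose complement $t[i..j-|v|]$ is also a palindrome, and symmetrically to palindromic prefixes of the same size; the prefix case is dispatched by running the same argument on $\lvec{t}$.

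Let $v^\star$ be the longest palindromic suffix of $t[i..j]$. Every palindromic suffix of $t[i..j]$ of length at least $n'/2$ is simultaneously a suffix of $v^\star$ and, thanks to $v^\star$ being a palindrome, a prefix of $v^\star$; hence such suffixes are borders of $v^\star$, and by Fine--Wilf their lengths form a single arithmetic progression with common difference $d$, the minimal period of $v^\star$. The run of $t$ with period $d$ ending at position $j$ contains $v^\star$ and is located in $O(1)$ time by Lemma~\ref{SubstringRun}, from which the parameters of the AP are read off in $O(1)$. The core combinatorial point, which I expect to be the main obstacle, is to show that within this AP only $O(1)$ candidate lengths $|v|$ can possibly make $t[i..j-|v|]$ a palindrome, and to pinpoint them in $O(1)$ time. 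The argument combines Lemma~\ref{PrimitiveVV} with Lemma~14 of \cite{KosolobovPalk} (uniqueness of the palindromic-pair decomposition for primitive strings): the palindromicity of the complement, together with the period-$d$ structure of $v$ and the Lyndon root of the enclosing run, forces $|v| \bmod d$ to lie in a constant-size residue class determined by the query. Each of the at most constantly many surviving candidates is then verified in $O(1)$ time by the $\lcp$ structure on $t\lvec{t}$, yielding the claimed query bound.
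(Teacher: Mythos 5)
There is a genuine gap at exactly the place you flag as ``the main obstacle'', and the mechanism you propose to close it cannot work as stated. The palindromic suffixes of $t[i..j]$ of length at least $\lceil n'/2\rceil$ are, as you say, borders of the longest palindromic suffix $v^\star$, and their lengths form an arithmetic progression with difference $d$, the minimal period of $v^\star$. But precisely because of that, all these candidate lengths are congruent modulo $d$, so a filter of the form ``$|v|\bmod d$ must lie in a constant-size residue class'' either discards every candidate or none of them; it can never cut the progression down to $O(1)$ elements. Moreover, the underlying claim that only $O(1)$ of these candidates can have a palindromic complement is false: for $t[i..j]=(aba)^k$ every border of length $3m\ge n'/2$ leaves the palindromic complement $(aba)^{k-m}$, so $\Theta(n'/d)$ candidates survive. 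Of course one only needs to \emph{find} a single decomposition, but then you must prove that testing a constant number of distinguished candidates (e.g., the longest palindromic suffix and the longest palindromic prefix) suffices whenever any decomposition exists --- and that statement is exactly Lemma~C4 of Galil and Seiferas~\cite{GalilSeiferas2}, which is the combinatorial heart of the paper's proof and is not implied by Lemma~\ref{PrimitiveVV} or Lemma~\ref{PrimitivePalPair} (the latter only applies to primitive strings, and $t[i..j]$ need not be primitive). Without that lemma, your verification step may require $\Theta((j-i)/d)$ palindrome tests.

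A secondary, smaller gap: you assert that the eertree with series links gives $O(1)$ access to the longest palindromic suffix of an arbitrary substring, but series links alone only bound the height of the series-link tree by $O(\log n)$, giving $O(\log n)$ per query by naive ascent. The paper obtains $O(1)$ by additionally building the weighted-ancestor structure of~\cite{GawrychowskiLewensteinNicholson} on the series-link tree and using the array of longest palindromic suffixes at every prefix; this ingredient (or an equivalent one) needs to be stated for the claimed query time to hold. The Lyndon roots of runs, which you include in the preprocessing, play no role in the paper's argument for this lemma and do not substitute for either missing piece.
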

\begin{proof}
In Lemma~C4 from~\cite{GalilSeiferas2},
if there exist palindromes $u$ and $v$ such that $t[i..j] = uv$, then there exist palindromes $u'$ and $v'$ such that $t[i..j] = u'v'$ and either $u'$ is the longest palindromic prefix of $t[i..j]$ or $v'$ is the longest palindromic suffix of $t[i..j]$. To test in $O(1)$ time whether a given substring is a palindrome, we can use the data preprocessed by Manacher's algorithm~\cite{Manacher};
so, it suffices to describe a data structure that allows to find the longest palindromic prefix/suffix of any substring in $O(1)$ time. Without loss of generality, we consider the case of palindromic suffixes.

Our main tool is the data structure called \emph{eertree}, which was introduced in~\cite{RubinchikShur}.
The eertree of $t$ can be built in $O(n)$ time by~Proposition~11 in the paper of Rubinchik and Shur. The main body of eertree of $t$ consists of nodes; any node $a$ represents a palindrome $\pal[a]$ that is a substring of $t$ and, conversely, any palindrome that is a substring of $t$ is represented by some node. Denote by $\link[a]$ the node representing the longest proper palindromic suffix of $\pal[a]$ (if any). In Proposition~9 in the paper of Rubinchik and Shur, for each node $a$, there was defined a \emph{series link} $\seriesLink[a]$ such that $\seriesLink[a]$ is either a node $a'$ representing the longest palindromic suffix of $\pal[a]$ such that $|\pal[a]| - |\pal[\link[a]]| \ne |\pal[a']| - |\pal[\link[a']]|$ or the node representing the empty palindrome if there is no such $a'$.

In~\cite{RubinchikShur}
it was shown that the tree that is induced by the series links with the root in the node representing the empty palindrome has height at most $O(\log n)$. We build on this tree the \emph{weighted ancestor} data structure from Lemma~11 of~\cite{GawrychowskiLewensteinNicholson}
that allows, for any given node $a$ and number $\gamma \ge 0$, to find the farthest ancestor $a'$ of $a$ such that $|\pal[a']| \ge \gamma$; moreover, since the height of our tree is $O(\log n)$, we can answer the queries in $O(1)$ time. Finally, as it was proved in~\cite{RubinchikShur},
during the construction of eertree, we can create an array $\mathsf{psuf}[1..n]$ such that, for any $j \in [1..n]$, $\mathsf{psuf}[j]$ is the node of eertree representing the longest palindromic suffix of $t[1..j]$.

Now, to find the longest palindromic suffix of a given substring $t[i..j]$, we compute the farthest ancestor $a$ of $\mathsf{psuf}[j]$ such that $|\pal[a]| \ge j - i + 1$; then, by the definition of the series links, the longest palindromic suffix of $t[i..j]$ is either $\pal[\seriesLink[a]]$ or the palindromic suffix of $\pal[a]$ with the length $|\pal[a]| - c(|\pal[a]| - |\pal[\link[a]]|)$, where $c$ is the minimal integer such that $|\pal[a]| - c(|\pal[a]| - |\pal[\link[a]]|) \le j - i + 1$.\qed
\end{proof}

Now we can prove Lemma~\ref{SeparateLemma} including the case when the pattern $p$ contains both $x$ and $\lvec{x}$.

\begin{proof}[Lemma~\ref{SeparateLemma}]
We can find a run $t[i'..j']$ with period $d$ containing $v = t[h_1..h_2]$ in $O(1)$ time using Lemma~\ref{SubstringRun}. Let $t[i..j] = s_1w_1s_2w_2\cdots s_{r-1}w_{r-1}s_r$ be an instance of $p$ satisfying the conditions in the statement of the lemma. Let us find all possible runs with period $d$ that can contain $w_z$. Denote by $h$ the starting position of $s_z$. Since $|w_1| \ge \frac{3}2|v| \ge 3d$, Lemma~\ref{wsw2} implies that $h = j' - |\pref_d(s_z)| + 1$ or $h \in (j'{+}1{-}d..j'{+}1]$ or $h \in (j'{-}|s_z|{-}d..j'{-}|s_z|]$. Suppose that $h \in (j'{+}1{-}d..j'{+}1]$ (resp., $h \in (j'{-}|s_z|{-}d..j'{-}|s_z|]$, $h = j' - |\suff_d(s_z)| + 1$). Then, any run containing $w_z$ certainly contains the substring $t[j' + |s_z| + 1 .. j' + |s_z| + 2d]$ (resp., $t[j'..j' + 2d - 1]$, $t[h + |s_z| .. h + |s_z| + 2d]$). We find a run with period $d$ containing this substring in $O(1)$ time by Lemma~\ref{SubstringRun}. Thus, we have three possible locations for a run with period $d$ containing $w_z$ (note that some of the found runs can coincide). We process each separately; let $t[i''..j'']$ be one of these three runs. So, suppose that $t[i'..j']$ and $t[i''..j'']$ are runs with period $d$ containing the substrings $w_1s_2w_2\cdots s_{z-1}w_{z-1}$ and $w_zs_{z+1}w_{z+1}\cdots s_{z'-1}w_{z'-1}$, respectively.

Denote by $h'$ the starting position of $s_{z'}$. Note that $t[h..h'{-}1] = s_zw_z\cdots s_{z'-1}w_{z'-1}$. Hence, if the number $h' - h$ is known, we can calculate $|w_1| = \frac{h' - h - |s_zs_{z+1}\cdots s_{z'-1}|}{z' - z}$, apply Lemma~\ref{FixedLen} for $\beta = |w_1|$, and thus find all corresponding instances of $p$ in $O(r + \frac{r|v|}{\log n})$ time plus $O(\frac{\log n}{\log\log n})$ time, if $\frac{\log}{16\log\log n}{\le}|v|{\le}\log n$. So, our aim is to find a constant number of possible values for $h' - h$ and process each of them with the help of Lemma~\ref{FixedLen}.

\begin{figure}[htb]
\centering
    \vskip-4mm
    \includegraphics[scale=0.38]{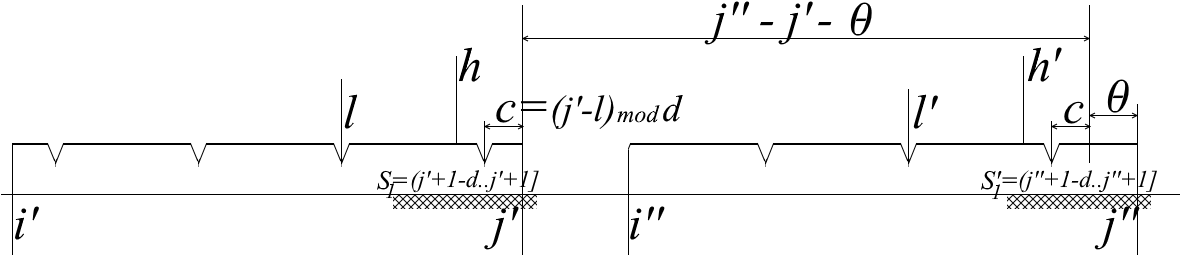}
    \vskip-3mm
    \caption{\small The case $h' - h = j'' - j' - \theta$ in the proof of Lemma~\ref{SeparateLemma}.}
    \label{fig:separate}
    \vskip-6mm
\end{figure}

Suppose that $x_{z-1} = x_{z'-1}$. By Lemma~\ref{PrimitiveVV}, since $t[h{-}d..h{-}1] = t[h'{-}d..h'{-}1]$, we have $h - \ell \equiv h' - \ell' \pmod{d}$, where $\ell$ and $\ell'$ are the starting positions of Lyndon roots of, resp., $t[i'..j']$ and $t[i''..j'']$; $\ell$ and $\ell'$ can be found in $O(1)$ time by Lemma~\ref{LyndonRoots}. So, we obtain $h' - h \equiv \ell' - \ell \pmod{d}$. By Lemma~\ref{wsw2}, $h$ either equals $j' - |\pref_d(s_z)| + 1$ or lies in one of the segments $S_1 = (j'{+}1{-}d..j'{+}1]$ or $S_2 = (j'{-}|s_z|{-}d..j'{-}|s_z|]$ of length $d$; similarly, $h'$ either equals $j'' - |\pref_d(s_{z'})| + 1$ or lies in $S'_1 = (j''{+}1{-}d..j''{+}1]$ or $S'_2 = (j''{-}|s_{z'}|{-}d..j''{-}|s_{z'}|]$.
For each $h \in S_1$, there is exactly one $h' \in S'_1$ such that $h' - h \equiv \ell' - \ell \pmod{d}$; moreover, one can easily prove that in this case $h' - h$ is equal to either $j'' - j' - \theta$ or $j'' - j' - \theta + d$, where $\theta = ((j'' - \ell') - (j' - \ell)) \bmod d$ (see Fig.~\ref{fig:separate}). For each of these values of $h' - h$, we apply Lemma~\ref{FixedLen} putting $\beta = \frac{h' - h - |s_zs_{z+1}\cdots s_{z'-1}|}{z' - z}$. Other combinations ($h \in S_1$ and $h' \in S'_2$, $h \in S_2$ and $h' \in S'_1$, $h \in S_2$ and $h' \in S'_2$) are analogous; the cases when either $h = j' - |\pref_d(s_z)| + 1$ or $h' = j'' - |\pref_d(s_{z'})| + 1$ are even simpler. It remains to discuss the case $x_{z-1} \ne x_{z'-1}$.

Assume that $x_{z-1} \ne x_{z'-1}$. Suppose that $x_{z} = x_{z'}$. Let us find all runs with period $d$ that can contain the substitution $w_{z'}$. Using the run $t[i''..j'']$, one can find three possible choices for such run in the same way as we found three choices for $t[i''..j'']$ using $t[i'..j']$; we should process each of these three choices. Let us fix one such run $t[i'''..j''']$. So, suppose that $t[i'''..j''']$ contains $w_{z'}$. Denote $h_0 = h + |s_z| - 1$ and $h'_0 = h' + |s_{z'}| - 1$. Since $t[h_0{+}1..h'_0] = w_zs_{z+1}w_{z+1}\cdots w_{z'-1}s_{z'}$, if the number $h'_0 - h_0$ is known, we can calculate $|w_1| = (h'_0 - h_0 - |s_{z+1}s_{z+2}\cdots s_{z'}|) / (z' - z)$ and apply Lemma~\ref{FixedLen} putting $\beta = |w_1|$. Let $\ell''$ be the starting position of a Lyndon root of $t[i'''..j''']$; $\ell''$ can be found in $O(1)$ time by Lemma~\ref{LyndonRoots}.  Now we can find a constant number of possible values for the number $h'_0 - h_0$ doing the same case analysis on the positions $h_0$ and $h'_0$ as we did on $h$ and $h'$ but using the runs $t[i''..j'']$ and $t[i'''..j''']$ instead of $t[i'..j']$ and $t[i''..j'']$, the positions $\ell'$ and $\ell''$ instead of $\ell$ and $\ell'$, and a reversed version of Lemma~\ref{wsw2}.

Finally, suppose that $x_{z-1} \ne x_{z'-1}$ and $x_z \ne x_{z'}$. Without loss of generality, assume that $x_{z-1} = x$. Then, we have either $x_{z-1}s_zx_z = xs_z\lvec{x}$ and $x_{z'-1}s_{z'}x_{z'} = \lvec{x}s_{z'}x$ or $x_{z-1}s_zx_z = xs_zx$ and $x_{z'-1}s_{z'}x_{z'} = \lvec{x}s_{z'}\lvec{x}$.

Suppose that $x_{z-1}s_zx_z = xs_z\lvec{x}$ and $x_{z'-1}s_{z'}x_{z'} = \lvec{x}s_{z'}x$. Since $t[h'{-}d..h'{-}1] = \lrange{t[h'{+}|s_{z'}|..h'{+}|s_{z'}|{+}d{-}1]}$, it follows from Lemma~\ref{PrimitiveVV} that $h' - 1 - \ell' \equiv \ell''_0 - (h' + |s_{z'}|) \pmod{d}$, where $\ell''_0$ is the ending position of a reversed Lyndon root of $t[i'''..j''']$; $\ell''_0$ can be computed in $O(1)$ time by Lemma~\ref{LyndonRoots}. Thus, we obtain $2h' \equiv \ell' + \ell''_0 + 1 - |s_{z'}| \pmod{d}$, i.e., either $h' \equiv \frac{\ell' +\ell''_0 + 1 - |s_{z'}|}2 \pmod{d}$ or $h' \equiv \frac{\ell' + \ell''_0 + 1 - |s_{z'}| + d}2 \pmod{d}$. It is easy to see that any segment of length $d$ contains at most two positions $h'$ satisfying the latter equalities; moreover, one can find these positions in $O(1)$ time. So, since, by Lemma~\ref{wsw2}, $h' = j' - |\pref_d(s_z)| + 1$ or $h'$ lies in one of two segments of length $d$, there are at most five values for $h'$ such that $2h' \equiv \ell' + \ell''_0 + 1 - |s_{z'}| \pmod{d}$; we can find them all in $O(1)$ time. Symmetrically, we find at most five possible values for $h$ but using the runs $t[i'..j']$ and $t[i''..j'']$ instead of $t[i''..j'']$ and $t[i'''..j''']$ and the position $\ell'_0$ instead of $\ell''_0$. Finally, for each found value $h' - h$, we calculate $\beta = (h' - h - |s_zs_{z+1}\cdots s_{z'-1}|) / (z' - z)$ and apply Lemma~\ref{FixedLen}.

It remains to process the case when $x_{z-1}s_zx_z = xs_zx$ and $x_{z'-1}s_{z'}x_{z'} = \lvec{x}s_{z'}\lvec{x}$. Since in this case $w$ and $\lvec{w}$ both are substrings of $t[i''..j'']$ as substitutions of $x_z = x$ and $x_{z'-1} = \lvec{x}$, by Lemma~\ref{Palsuv}, there exist palindromes $u$ and $v$ such that $|uv| = d$, $v \ne \epsilon$, and $\lvec{w}$ is a prefix of the infinite string $(vu)^{\infty}$. In $O(r)$ time we find a number $z'' \in (z..z')$ such that $x_{z''-1}s_{z''}x_{z''} = xs_{z''}\lvec{x}$. Since $ws_{z''}\lvec{w}$ is a substring of $t[i''..j'']$, we have $s_{z''} = u(vu)^{k'}$ for some $k' \ge 0$. Therefore, we can compute the length of $u$: $|u| = |s_{z''}| \bmod d$. Given a segment $[f_1..f_2] \subset [i'{+}d..j'{+}1]$ of length $d$, it follows from Lemmas~\ref{PrimitivePalPair} and~\ref{InARunPalPair} that there exist at most two positions $f \in [f_1..f_2]$ such that $t[f{-}d..f{-}1] = v'u'$ for some palindromes $u'$ and $v'$ such that $|u'| = |u|$. We can find these positions in $O(1)$ time using the equality from Lemma~\ref{InARunPalPair} provided, for some $f' \in [f_1..f_2]$, we know palindromes $u''$ and $v''$ such that $t[f'{-}d..f'{-}1] = u''v''$. The palindromes $u''$ and $v''$ can be computed for arbitrary $f'$ in $O(1)$ time by Lemma~\ref{FindPalPair}. Since, by Lemma~\ref{wsw2}, $h$ either equals $j' - |\pref_d(s_z)| + 1$ or lies in one of the segments $(j'{+}1{-}d..j'{+}1]$ or $(j'{-}|s_z|{-}d..j'{-}|s_z|]$ of length $d$, there are at most five possible values for $h$ and each of them can be found in $O(1)$ time. In the same way we find at most five possible values for $h'$. Finally, for each obtained possible value $h' - h$, we calculate $\beta = (h' - h - |s_zs_{z+1}\cdots s_{z'-1}|) / (z' - z)$ and apply Lemma~\ref{FixedLen}.\qed
\end{proof}

\paragraph{In-a-run instances of $p$.}

The special case $p = s_1\lvec{x}s_2xs_3$ is discussed below.

To prove Lemma~\ref{GeneralInRunLemma}, we need the following auxiliary result.
\begin{lemma}
Suppose that $x_1 = \cdots = x_{r-1}$. Given a run $t[i'..j']$ with period $d$, a segment $[b_1..b_2] \subset [i'..j']$ of length $d$, and an integer $\eta \ge d$, we can compute in $O(r + \frac{rd}{\log n})$ time a bit array $D'[b_1..b_2]$ such that, for $h \in [b_1..b_2]$, $D'[h] = 1$ iff the string $t[h{-}|s_1s_2\cdots s_{r-1}|{-}(r{-}1)\eta..h{+}|s_r|{-}1]$ is an instance of $p$ and $h - |s_2s_3\cdots s_{r-1}| - (r - 1)\eta \ge i'$.\label{Dsubarray}
\end{lemma}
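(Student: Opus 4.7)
The plan is to reduce the bit-array $D'$ to a single bitwise conjunction of $r$ appropriately-shifted windows of the precomputed bit-arrays $\{D_z\}_{z=1}^r$, plus a mask that enforces $h - |s_2\cdots s_{r-1}| - (r-1)\eta \ge i'$. To this end, define offsets $\Delta_r = 0$, $\Delta_z = -(r-z)\eta - |s_z s_{z+1}\cdots s_{r-1}|$ for $z \in (1..r)$, and $\Delta_1 = -(r-1)\eta - |s_1 s_2\cdots s_{r-1}|$, so that $h + \Delta_z$ is exactly the starting position of $s_z$ inside the candidate instance $t[h{+}\Delta_1\,..\,h{+}|s_r|{-}1]$; thus $D_z[h{+}\Delta_z] = 1$ iff $s_z$ appears at its prescribed position. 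The offsets, together with the threshold $T = i' + |s_2\cdots s_{r-1}| + (r-1)\eta$, are computable in $O(r)$ time.

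The one remaining condition is that the $r-1$ length-$\eta$ substitutions $w_1, \ldots, w_{r-1}$, which are all supposed to equal the same string $w$ (or all equal $\lvec{w}$, depending on whether $x_z = x$ or $x_z = \lvec{x}$) since $x_1 = \cdots = x_{r-1}$, are actually equal as substrings of $t$. A direct calculation from $h \in [b_1..b_2] \subset [i'..j']$ and $h \ge T$ shows that every $w_z$ is contained in the run $t[i'..j']$. Since this run has minimal period $d$ and each $w_z$ has length $\eta \ge d$, Lemma~\ref{PrimitiveVV} applied to the primitive root of the run yields that two length-$\eta$ substrings of the run coincide iff their starting positions differ by a multiple of $d$. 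The difference between the starting positions of $w_z$ and $w_{z+1}$ is exactly $\eta + |s_{z+1}|$ and does not depend on $h$, so the joint equality of all the $w_z$'s reduces to the $h$-independent congruences $|s_z| \equiv -\eta \pmod{d}$ for every $z \in (1..r)$. I check these in $O(r)$ time; if any fails, I set $D' \equiv 0$ and return.

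Assuming the congruence test passes, what remains is to compute $D'[h] = \bigwedge_{z=1}^{r} D_z[h + \Delta_z]$ for $h \in [b_1..b_2]$ and then zero out those $h < T$. For each $z$, the relevant length-$d$ window $D_z[b_1+\Delta_z..b_2+\Delta_z]$ (with any out-of-range bits interpreted as $0$) is extracted and AND-ed into an accumulator aligned to $D'$ in $O(\frac{d}{\log n} + 1)$ time by packing bits into $\Theta(\log n)$-bit machine words and performing one word-wise shift. Summing over $z$ yields $O(r + \frac{rd}{\log n})$ time; the final prefix mask for $h \ge T$ costs an additional $O(\frac{d}{\log n})$. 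The only subtle point of the argument is the reduction of the ``all $w_z$'s agree'' condition to a single $h$-independent modular test; once that is in place, the rest is essentially $r$ word-parallel shift-and-AND operations.
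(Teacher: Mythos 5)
Your proof is correct and follows essentially the same route as the paper: the paper likewise forms $D'$ as the bitwise ``and'' of the $r$ shifted windows of $\{D_z\}_{z=1}^r$, zeroes the prefix violating $h - |s_2\cdots s_{r-1}| - (r-1)\eta \ge i'$, and uses Lemma~\ref{PrimitiveVV} inside the run to reduce the agreement of all substitutions to the $h$-independent congruences $\eta \equiv -|s_z| \pmod{d}$ for $z \in (1..r)$, setting $D'$ to all zeros if any of them fails.
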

\begin{proof}
We obtain a bit array $D'[b_1..b_2]$ performing in $O(r{+}\frac{rd}{\log n})$ time the bitwise ``and'' of the subarrays $\{D_{z}[b_1{-}\gamma_z .. b_2{-}\gamma_z]\}_{z=1}^{r}$, where $\gamma_z = (r{-}z)\eta{-}|s_{z}s_{z+1}\cdots s_{r-1}|$. In $O(\frac{d}{\log n})$ time we fill with zeros a subarray $D'[b_1..b]$ for the maximal $b \in [b_1..b_2]$ such that $b{-}(r{-}1)\eta{-}|s_2s_3\cdots s_{r-1}| < i'$ (if any). Since $x_1{=}\cdots{=}x_{r-1}$ and $\eta \ge d$, it follows from Lemma~\ref{PrimitiveVV} that, for $h \in [b_1..b_2]$ such that $D'[h] = 1$, $t[h{-}|s_1s_2\cdots s_{r-1}|{-}(r{-}1)\eta..h{+}|s_r|{-}1]$ is an instance of $p$ iff $\eta \equiv -|s_z| \pmod{d}$ for all $z \in (1..r)$ (i.e., if substitutions of $x$ are aligned properly). So, if $\eta \not\equiv -|s_z| \pmod{d}$ for some $z \in (1..r)$, then we fill $D'[b_1..b_2]$ with zeros.\qed
\end{proof}

\begin{proof}[Lemma~\ref{GeneralInRunLemma}]
Choose $h \in [b_1..b_2]$. Denote $i_h = h - |s_1s_2\cdots s_{r-1}| - (r - 1)\delta$, $j_h = h + |s_r| - 1$, and $c_h = \lfloor\frac{i_h - i'}{(r - 1)d}\rfloor$. By Lemma~\ref{AllInRunOccurrences}, if a string $t[i..j] = s_1w_1s_2w_2\cdots s_{r-1}w_{r-1}s_r$ satisfies~\eqref{eq:mainprop}, $i' \le i \le j - |s_r| + 1 = h$, and $|w_1| \equiv \delta \pmod{d}$, then $t[i_h..j_h]$ is an instance of $p$ and $t[i..j] = t[i_h{-}(r{-}1)cd..j_h]$ for some $c \in [0..c_h]$; conversely, if $t[i_h..j_h]$ is an instance of $p$, then all substrings $t[i_h{-}(r{-}1)cd..j_h]$, for $c \in [0..c_h]$, are instances of $p$. By the definition of $c_h$, there is a threshold $h' \in [b_1..b_2]$ such that, for any $h_1, h_2 \in [b_1..b_2]$, $c_{h_1} = c_{h_2}$ if $h_1, h_2 \in [b_1..h']$ or $h_1, h_2 \in (h'..b_2]$, and $|c_{h_1} - c_{h_2}| = 1$, otherwise; $h'$ can be found in $O(1)$ time by simple calculations. Put $d' = c_{h'}$ and $d'' = c_{h'+1}$. Suppose that $x_1{=}\cdots{=}x_{r-1}$. Then, applying Lemma~\ref{Dsubarray} for $\eta = \delta$, we compute the required bit array $E[b_1..b_2]$. Now it remains to find all strings $t[i..j]$ satisfying~\eqref{eq:mainprop} and such that $i < i'$ and $j - |s_r| + 1 \in [b_1..b_2]$.

Let $t[i..j] = s_1w_1\cdots s_{r-1}w_{r-1}s_r$ satisfy~\eqref{eq:mainprop}, $j - |s_r| + 1 \in [b_1..b_2]$, $|w_1| \equiv \delta \pmod{d}$, and $i < i'$. Denote $h = j - |s_r| + 1$. By a symmetric version of Lemma~\ref{wsw}, we have either $i + |s_1| \in [i'..i'{+}d)$ or $i + |s_1| = i' + |\suff_d(s_1)|$. Suppose that $i + |s_1| \in [i'..i'{+}d)$. By Lemma~\ref{AllInRunOccurrences}, we have $i + |s_1| = h - |s_2s_3\cdots s_{r-1}| - (r-1)(\delta + a_hd)$, where $a_h$ is the maximal integer such that $h - |s_2s_3\cdots s_{r-1}| - (r-1)(\delta + a_hd) \ge i'$. The definition of $a_h$ implies that there is $h'' \in [b_1..b_2]$ such that, for any $h_1,h_2 \in [b_1..b_2]$, $a_{h_1} = a_{h_2}$ if $h_1,h_2\in[b_1..h'']$ or $h_1,h_2\in(h''..b_2]$, and $|a_{h_1} - a_{h_2}| = 1$, otherwise; $h''$ can be simply found in $O(1)$ time. Put $a' = a_{h''}$ and $a'' = a_{h''+1}$. Suppose that $x_1{=}\cdots{=}x_{r-1}$. Then, we compute two bit arrays $F_1[b_1..b_2]$ and $F_2[b_1..b_2]$ applying Lemma~\ref{Dsubarray} for $\eta = \delta + a'd$ and $\eta = \delta + a''d$, respectively. Finally, we concatenate the arrays $F_1[b_1..h'']$ and $F_2[h''{+}1..b_2]$ to obtain $F[b_1..b_2]$.

Suppose that $i + |s_1| = i' + |\suff_d(s_1)|$. Since $|t[i{+}|s_1|..h{-}1]| = |w_1s_2w_2\cdots s_{r-1}w_{r-1}|$, we have $h - (i' + |\suff_d(s_1)|) \equiv |s_2s_3\cdots s_{r-1}| + (r - 1)\delta \pmod{d}$. Since $b_2 - b_1 + 1 = d$, there is exactly one position $h \in [b_1..b_2]$ satisfying the latter equality; $h$ can be found in $O(1)$ time. We check whether $t[i' + |\suff_d(s_1)| - |s_1| .. h + |s_r| - 1]$ is an instance of $p$ in $O(r)$ time using the $\lcp$ structure and the arrays $\{D_z\}_{z=1}^r$; thus, we may find an additional instance of $p$ that is not encoded in $E$ and $F$.

While in the case $x_1 = \cdots = x_{r-1}$ it was sufficient to rely on the periodicity of $t[i'..j']$ to test whether corresponding substitutions are equal (as in Lemma~\ref{Dsubarray}), in the case when $p$ contains both $x$ and $\lvec{x}$ it is not clear how to test for all $h \in [b_1..b_2]$ simultaneously whether corresponding substitutions of $x$ and $\lvec{x}$ respect each other. However, it turns out that there are at most two positions $h \in [b_1..b_2]$ for which there might exist a string $t[i..j]$ satisfying~\eqref{eq:mainprop} and such that $j - |s_r| + 1 = h$. We find these two positions in $O(1)$ time and process each of them separately in $O(r)$ time.

Let $t[i..j]$ be a string satisfying~\eqref{eq:mainprop} and such that $j - |s_r| + 1 \in [b_1..b_2]$. Denote by $w$ the substitution of $x$ in $t[i..j]$. By Lemma~\ref{Palsuv}, there exist palindromes $u$ and $v$ such that $v \ne \epsilon$ and $\lvec{w}[1..d] = vu$. Let us find the lengths of $u$ and $v$. Choose a number $z' \in (1..r)$ such that $x_{z'-1}s_{z'}x_{z'} = \lvec{x}s_{z'}x$ (it exists because $x_{r-1} = x$). Since $p \ne s_1xs_2\lvec{x}s_3$, $p \ne s_1\lvec{x}s_2xs_3$, and $r \ge 3$, there is $z'' \in (1..r)$ such that $x_{z''-1}s_{z''}x_{z''}$ is equal to either $xs_{z''}\lvec{x}$ or one of the strings $xs_{z''}x$ or $\lvec{x}s_{z''}\lvec{x}$. Suppose that $x_{z''-1}s_{z''}x_{z''} = xs_{z''}\lvec{x}$. Since the strings $vu$ and $uv$ are primitive, it follows from Lemma~\ref{PrimitiveVV} that $s_{z''} = u(vu)^{k'}$ for an integer $k'$. Therefore, we can compute the length of $u$: $|u| = |s_{z''}| \bmod d$. Now suppose that $x_{z''-1}s_{z''}x_{z''} = xs_{z''}x$ (resp. $x_{z''-1}s_{z''}x_{z''} = \lvec{x}s_{z''}\lvec{x}$). It follows from Lemma~\ref{PrimitiveVV} that the distance between any two occurrence of $w$ (resp., $\lvec{w}$) in $t[i'..j']$ is a multiple of $d$; thus, we have $|w| \equiv -|s_{z''}| \pmod{d}$. Since $\lvec{w}s_{z'}w$ is a substring of $t[i'..j']$, by Lemma~\ref{PrimitiveVV}, we have $\lvec{w}s_{z'}w = (vu)^{k'}v$ for an integer $k'$. Therefore, we can compute the length of $v$: $|v| = (|s_{z'}| + 2|w|) \bmod d = (|s_{z'}| - 2|s_{z''}|) \bmod d$ assuming $|v| = d$ if $|s_{z'}| - 2|s_{z''}| \equiv 0 \pmod{d}$.

Using Lemma~\ref{FindPalPair}, we find in $O(1)$ time palindromes $u''$ and $v''$ such that $t[b_2{-}d{+}1..b_2] = u''v''$ (they exist by Lemma~\ref{InARunPalPair}). Since $b_2 - b_1 + 1 = d$, it follows from Lemmas~\ref{PrimitivePalPair} and~\ref{InARunPalPair} that there are at most two positions $h \in [b_1..b_2]$ such that $t[h{-}d..h{-}1] = u'v'$ for some palindromes $u'$ and $v'$ satisfying $|u'| = |u|$ and $|v'| = |v|$; these positions $h$ can be found using the equality from Lemma~\ref{InARunPalPair} and the lengths $|u''|$ and $|v''|$. Fix one such position $h \in [b_1..b_2]$.

By Lemma~\ref{AllInRunOccurrences}, there exists an instance $t[i..j] = s_1w_1s_2w_2\cdots w_{r-1}s_{r-1}s_r$ of $p$ satisfying~\eqref{eq:mainprop} and such that $i \ge i'$, $|w_1| \equiv \delta \pmod{d}$, and $j - |s_r| + 1 = h$ iff the string $t[i_h .. j_h]$ is an instance of $p$ (recall that $i_h = h - |s_1s_2\cdots s_{r-1}| - (r - 1)\delta$, $j_h = h + |s_r| - 1$, and $c_h = \lfloor\frac{i_h - i'}{(r - 1)d}\rfloor$); moreover, in this case $t[i..j] = t[i_h - (r - 1)(\delta + cd) .. j_h]$ for some $c \in [0..c_h]$. So, we test whether $t[i_h .. j_h]$ is an instance of $p$ in $O(r)$ time using the arrays $\{D_z\}_{z=1}^r$ and the $\lcp$ structure of the string $t\lvec{t}$. So, if $t[i_h..j_h]$ is an instance of $p$, then we put $E[h] = 1$; if $h$ is the smallest of the two positions such that $t[h{-}d..h{-}1] = u'v'$ for some palindromes $u'$ and $v'$ such that $|u'| = |u|$ and $|v'| = |v|$, then we put $h' = h$ and $d' = c_h$; otherwise, we put $d'' = c_h$. (So, $E[b_1..b_2]$ contains at most two non-zero positions).

To find all instances $t[i..j]$ of $p$ such that $j - |s_r| + 1 = h$ and $i < i'$, we use a case analysis relying on a symmetric version of Lemma~\ref{wsw} similar to the analysis described above for the case $x_1 = \cdots = x_{r-1}$. By Lemma~\ref{wsw}, we have either $i + |s_1| = i' + |\suff_d(s_1)|$ or $i + |s_1| \in [i'..i'{+}d)$. First, suppose that $i + |s_1| \in [i'..i'{+}d)$. Since we must have $h - (i + |s_1|) \equiv |s_2s_3\cdots s_{r-1}| + (r-1)\delta \pmod{d}$, we find in $O(1)$ time at most one possible position $h' \in [i'..i'{+}d)$ such that $h - h' \equiv |s_2s_3\cdots s_{r-1}| + (r-1)\delta \pmod{d}$ (we suspect that $h' = i + |s_1|$) and test whether $t[h' - |s_1| .. h + |s_r| - 1]$ is an instance of $p$ in $O(r)$ time with the aid of the arrays $\{D_z\}_{z=1}^r$ and the $\lcp$ structure of $t\lvec{t}$; if this string is an instance, then we put $F[h] = 1$, $a' = (h - h' - |s_2s_3\cdots s_{r-1}| - (r - 1)\delta) / d$, $h'' = b_2$. (So, $F$ trivially encodes at most one instance of $p$) Finally, we test in $O(r)$ time whether $t[i' + |\suff_d(s_1)| - |s_1| .. h + |s_r| - 1]$ is an instance of $p$; thus, we can find an additional instance of $p$ that is not encoded in $E$ or $F$.\qed
\end{proof}

\paragraph{In-a-run instances of $p$: the special case $p = s_1\lvec{x}s_2xs_3$.}
Consider the special case $p = s_1\lvec{x}s_2xs_3$. Let us count all instances $t[i..j] = s_1\lvec{w}s_2ws_3$ of $p$ satisfying~\eqref{eq:mainprop} and such that $i + |s_1\lvec{w}| \in [b_1..b_2]$. Suppose that $t[i..j] = s_1\lvec{w}s_2ws_3$ is such instance. By Lemma~\ref{Palsuv}, there exist palindromes $u$ and $v$ such that $v \ne \epsilon$ and $w[1..d] = uv$. It follows from Lemma~\ref{PrimitiveVV} that $s_2 = (vu)^{k'}v$ for some integer $k' \ge 0$. Hence, we can calculate the length of $v$: $|v| = |s_2| \bmod d$ assuming $|v| = d$ if $|s_2| \bmod d = 0$. Using Lemma~\ref{FindPalPair}, we find in $O(1)$ time palindromes $u''$ and $v''$ such that $t[b_2{-}d{+}1..b_2] = u''v''$ (they exist by Lemma~\ref{InARunPalPair}). Since $b_2 - b_1 + 1 = d$, it follows from Lemmas~\ref{PrimitivePalPair} and~\ref{InARunPalPair} that there are at most two positions $h \in [b_1..b_2]$ such that $t[h{-}d..h{-}1] = v'u'$ for some palindromes $u'$ and $v'$ satisfying $|u'| = |u|$ and $|v'| = |v|$; these positions $h$ can be easily found using the equality from Lemma~\ref{InARunPalPair} and the lengths $|u''|$ and $|v''|$.

So, fix one such position $h\in [b_1..b_2]$. The position $h$ is a suspected starting position of $s_2$ in an instance of $p$. By the procedure similar to that used in the proof of Lemma~\ref{PalPattern}, we compute in $O(1 + \frac{d}{\log n})$ time a bit array $D'[h - |s_1| - d .. h - |s_1|]$ such that $D'[h'] = 1$ iff there is an instance $s_1\lvec{w}s_2ws_3$ of $p$ starting at position $h'$ and such that $h' + |s_1\lvec{w}| = h$. It follows from Lemma~\ref{AllInRunOccurrences} that any string $t[i..i + |s_1s_2s_3| + 2y]$ such that $y \ge d$, $i' \le i \le i + |s_1s_2s_3| + 2y \le j'$, and $i + |s_1| + y = h$ is an instance of $p$ iff the string $t[i + \lfloor\frac{y}d\rfloor d .. j - \lfloor\frac{y}d\rfloor d]$ is an instance of $p$, i.e., iff $D'[i + \lfloor\frac{y}d\rfloor d] = 1$. So, once the array $D'$ is computed, one can easily count the number of all instance $t[i..i + |s_1s_2s_3| + 2y]$ of $p$ such that $y \ge d$, $i' \le i \le i + |s_1s_2s_3| + 2y \le j'$, $i + |s_1| + y = h$. It remains to count the number of instance $t[i..i + |s_1s_2s_3| + 2y]$ of $p$ such that $y \ge d$, $i + |s_1| + y = h$, and either $i < i' \le i + |s_1|$ or $j - |s_r| \le j' < j$. We can do this with the same case analysis as for the case $p \ne s_1\lvec{x}s_2xs_3$.

\paragraph{In-two-runs instances of $p$.}

Like in the case of one run, our algorithm for two runs processes each run $t[i'..j']$ with period $d$ (only once) and finds all instances of $p$ whose substitutions have length ${\ge}3d$ and lie in exactly two runs: $t[i'..j']$ and another run with period $d$.

Choose $z \in [1..r)$. Let $t[i..j] = s_1w_1s_2w_2\cdots w_{r-1}s_r$ be an instance of $p$ such that $|w_1|{=}\cdots{=}|w_{r-1}| \ge 3d$, $t[i + |s_1| .. h - 1]$ is a substring of $t[i'..j']$, and $t[h + |s_{z+1}| .. j - |s_r|]$ is a substring of another run with period $d$, where $h = i + |s_1w_1\cdots s_zw_z|$. We call $z$ a \emph{separator} in $t[i..j]$. Obviously, the string $w_zs_{z+1}w_{z+1}$ does not have period $d$. Hence, by Lemma~\ref{wsw2}, we have $h \in (j'{+}1{-}d..j'{+}1]$ or $h \in (j'{-}|s_{z+1}|{-}d..j'{-}|s_{z+1}|]$ or $h = j' - |\pref_d(s_{z+1})| + 1$. Suppose that $h \in (j'{+}1{-}d..j'{+}1]$ (the other cases are similar). Let $b_1 = j' + 2 - d$ and~$b_2 = j' + 1$.

Since $|w_{z+1}| \ge 3d$, the string $t[h{+}|s_{z+1}|..h{+}|s_{z+1}w_{z+1}|{-}1] = w_{z+1}$ contains the substring $t[b_2 + |s_{z+1}| .. b_2 + |s_{z+1}| + 2d - 1]$. So, using Lemma~\ref{SubstringRun} for the latter substring, we find in $O(1)$ time a run $t[i''..j'']$ with period $d$ containing $w_{z+1}$. Clearly, the strings $t[i..h{-}1]$ and $t[h{+}|s_{z+1}|..j]$ are instances of the patterns $s_1x_1\cdots s_zx_z$ and $x_{z+1}s_{z+2}\cdots x_{r-1}s_r$, respectively, and $t[i{+}|s_1|..h{-}1]$ and $t[h{+}|s_{z+1}|..j{-}|s_r|]$ are substrings of the runs $t[i'..j']$ and $t[i''..j'']$, respectively. Hence, if either there is $z' \in (1..r) \setminus \{z{+}1\}$ such that $x_{z'-1} = x_{z'}$ or there are $z',z'' \in (1..r) \setminus \{z{+}1\}$ such that $x_{z'-1}s_{z'}x_{z'} = \lvec{x}s_{z'}x$ and $x_{z''-1}s_{z''}x_{z''} = xs_{z''}\lvec{x}$, then the number $|w_1| \bmod d$ is equal to one of the values described in Lemma~\ref{wmodd}; let $\delta' \in [0..d)$ be one of these values (we process each such $\delta'$). Otherwise (if we could not find such $z'$ and $z''$), we have $r \le 5$ and we can compute a similar value $\delta'$ as follows. If $p \ne s_1xs_2\lvec{x}s_3$ and $p \ne s_1\lvec{x}s_2xs_3$, then there are $z' \in (1..z]$ and $z'' \in (z{+}1..r)$ such that $x_{z'} = x_{z''}$. Denote by $\ell$ and $\ell'$ the starting positions of Lyndon roots of $t[i'..j']$ and $t[i''..j'']$, respectively; $\ell$ and $\ell'$ can be computed in $O(1)$ time by Lemma~\ref{LyndonRoots}. It follows from Lemma~\ref{PrimitiveVV} that $i + |s_1w_1\cdots s_{z'}w_{z'}| - \ell \equiv i + |s_1w_1\cdots s_{z''}w_{z''}| - \ell' \pmod{d}$. Therefore, $|s_{z'+1}w_{z'+1}\cdots s_{z''}w_{z''}| \equiv \ell' - \ell \pmod{d}$ and hence $(z'' - z')|w_1| \equiv \ell' - \ell - |s_{z'+1}s_{z'+2}\cdots s_{z''}| \pmod{d}$. This equation has at most $z'' - z'$ solutions: $|w_1| \equiv \frac{cd + \ell' - \ell - |s_{z'+1}s_{z'+2}\cdots s_{z''}|}{z'' - z'} \pmod{d}$ for $c \in [0..z''{-}z')$; let $\delta' \in [0..d)$ be one of these solutions (we process all such $\delta'$; since $z'' - z' \le r - 3 \le 2$, there are at most two such $\delta'$). Now let us consider the cases $p = s_1xs_2\lvec{x}s_3$ and $p = s_1\lvec{x}s_2xs_3$.

Suppose that $p = s_1\lvec{x}s_2xs_3$ (the case $p = s_1xs_2\lvec{x}s_3$ is symmetrical). Consider an instance $t[i..j] = s_1\lvec{w}s_2ws_3$ of $p$ whose substitutions of $x$ and $\lvec{x}$ have length ${\ge}3d$ and lie in distinct runs $t[i'..j']$ and $t[i''..j'']$ with period $d$. As above, by Lemma~\ref{wsw2}, we have $h \in (j'{+}1{-}d..j'{+}1]$ or $h \in (j'{-}|s_{z+1}|{-}d..j'{-}|s_{z+1}|]$ or $h = j' - |\pref_d(s_{z+1})| + 1$, where $h = i + |s_1\lvec{w}|$. Suppose that $h \in (j'{+}1{-}d..j'{+}1]$ (other cases are analogous) and denote $b_1 = j' + 2 - d$ and $b_2 = j' + 1$.

Denote by $\ell$ and $\ell'_0$ the starting position of a Lyndon root of $t[i'..j']$ and the ending position of a reversed Lyndon root of $t[i''..j'']$, respectively; $\ell$ and $\ell'_0$ can be found in $O(1)$ time by Lemma~\ref{LyndonRoots}. In order to synchronize parts of $p$ that are contained in $t[i'..j']$ and $t[i''..j'']$, we check in $O(1)$ time using the $\lcp$ structure whether $\lrange{t[\ell..\ell{+}d{-}1]} = t[\ell'_0{-}d{+}1..\ell'_0]$; if not, then there cannot be any instances of $p$ such as $t[i..j]$. It follows from Lemma~\ref{PrimitiveVV} that $h - \ell \equiv \ell'_0 - (h + |s_2| - 1) \pmod{d}$. Hence, we obtain $2h \equiv \ell + \ell'_0 - |s_2| + 1 \pmod{d}$. Since $b_2 - b_1 + 1 = d$, we can find in $O(1)$ time at most two positions $h$ in $[b_1..b_2]$ satisfying the latter equality. Fix one such position $h_0 \in [b_1..b_2]$; $h_0$ is a suspected starting position of $s_2$ in an instance of $p$.

Applying Lemma~\ref{PalPattern} with $h_1 = h_0 - d, h_2 = h_0 - 1, q = h_0 + |s_2|$ (see Fig.~\ref{fig:xslvecx}), we compute a bit array $occ[h_1 - d - |s_1| .. h_1 - |s_1|]$ such that, for any $h' \in [h_1 - d - |s_1| .. h_1 - |s_1|]$, we have $occ[h'] = 1$ iff $t[h'..n]$ has a prefix $s_1\lvec{w}s_2ws_3$ such that $h' + |s_1\lvec{w}| = h_0$. Since $\lrange{t[\ell..\ell{+}d{-}1]} = t[\ell'_0{-}d{+}1..\ell'_0]$, by the definition of $h_0$, it follows that any string $t[i..j]$ such that $i' \le i$, $j \le j''$, and $i + |s_1| + \ell_w = h_0$, where $\ell_w = (j - i + 1 - |s_1s_2s_3|) / 2$, is an instance of $p$ iff $occ[h_0 - d - (\ell_w \bmod d) - |s_1|] = 1$. So, in this way we found all instances of $p$ that correspond to $h_0$ and do not cross the boundaries $i'$ and $j''$.

Now suppose that $t[i..j] = s_1\lvec{w}s_2ws_3$ is an instance of $p$ such that $i < i' \le i + |s_1|$ and $i + |s_1\lvec{w}| = h_0$ (the case $j > j''$ is symmetrical). By Lemma~\ref{wsw}, we have either $i + |s_1| \in [i'..i'{+}d)$ or $i + |s_1| = i' + |\suff_d(s_1)|$. First, we check whether $t[i' + |\suff_d(s_1)| - |s_1| .. h_0 + |s_2| + (h_0 - i' - |\suff_d(s_1)|) + |s_3| - 1]$ is an instance of $p$ in $O(1)$ time using the $\lcp$ structure and the arrays $D_1, D_2, D_3$. Secondly, we find all instances $t[i..j] = s_1\lvec{w}s_2ws_3$ of $p$ satisfying $i + |s_1\lvec{w}| = h_0$ and $i + |s_1| \in [i'..i'{+}d)$ using Lemma~\ref{PalPattern} with $h_1 = i' + d, h_2 = i' + 2d - 1, q = h_0 + |s_2| + (h_0 - i' - 2d)$ (see Fig.~\ref{fig:xslvecx}).

Denote $\delta = 3d + \delta'$. It follows from Lemma~\ref{SepCond} that any separator $z \in Z$ (resp., $z \in Z'$, $z \in Z''$) satisfies~\eqref{eq:subz} for $Z_0 = Z$ (resp., $Z_0 = Z'$, $Z_0 = Z''$). So, we find $O(1)$ ``suspect'' separators, by Lemma~\ref{subz}. We apply the following lemma to each of them and essentially obtain two bit arrays encoding compactly the occurrences of $p$.

\begin{lemma}
Given $z \in [1..r)$, two runs $t[i'..j']$ and $t[i''..j'']$ with period $d$, a number $\delta \ge d$, and a segment $[b_1..b_2] \subset [i'..j'{+}1]$ of length $d$, we can compute in $O(r + \frac{rd}{\log n})$ time the numbers $d', d'', d''', h', h'_0, h'', h''_0, a', a'', a'''$ and the bit arrays $E[b_1..b_2]$, $F[b_1..b_2]$ such that:
\begin{enumerate}
\item for any $h \in [b_1..h']$ (resp., $h \in (h'..h'_0]$, $h \in (h'_0..b_2]$), we have $E[h] = 1$ iff the strings $t[h - |s_1s_2\cdots s_z| - z(\delta + cd) .. h + |s_{z+1}s_{z+2}\cdots s_r| + (r-1-z)(\delta + cd) - 1]$, for all $c \in [0..d']$ (resp., $c \in [0..d'']$, $c \in [0..d''']$), are instances of $p$ and $i' \le h - |s_1s_2\cdots s_z| - z(\delta + cd) \le h + |s_{z+1}s_{z+2}\cdots s_r| + (r-1-z)(\delta + cd) - 1 \le j''$;
\item for any $h \in [b_1..h'']$ (resp., $h \in (h''..h''_0]$, $h \in (h''_0..b_2]$), we have $F[h] = 1$ iff the string $t[h - |s_1s_2\cdots s_z| - z(\delta + ad) .. h + |s_{z+1}s_{z+2}\cdots s_r| + (r-1-z)(\delta + ad) - 1]$, where $a = a'$ (resp., $a = a''$, $a = a'''$), is an instance of $p$ and $i' \le h - |s_2s_3\cdots s_z| - z(\delta + ad) \le h + |s_{z+1}s_{z+2}\cdots s_{r-1}| + (r-1-z)(\delta + ad) - 1 \le j''$.
\end{enumerate}
In addition, we find at most two instances $t[i_0..j_0] = s_1w_1\cdots w_{r-1}s_r$ of $p$ such that $|w_1| = \cdots = |w_{r-1}| \ge 3d$, $|w_1| \equiv \delta \pmod{d}$, $i' \le i_0 + |s_1| \le j_0 - |s_r| \le j''$, $i_0 + |s_1w_1\cdots s_zw_z| \in [b_1..b_2]$, and it is guaranteed that any instance $t[i..j] = s_1w_1\cdots w_{r-1}s_r$ of $p$ such that $i' \le i + |s_1| \le j - |s_r| \le j''$, $|w_1| = \cdots = |w_{r-1}| \ge 3d$, $|w_1| \equiv \delta \pmod{d}$, and $i + |s_1w_1\cdots s_zw_z| \in [b_1..b_2]$ either is encoded in the arrays $E, F$ or is represented by one of the additional instances.\label{InTwoRunsLemma}
\end{lemma}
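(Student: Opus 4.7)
The plan is to adapt the proof of Lemma~\ref{GeneralInRunLemma} so that it simultaneously tracks both run boundaries ($i'$ on the left and $j''$ on the right) rather than just the single boundary of one run. For each candidate position $h \in [b_1..b_2]$ (the starting position of $s_{z+1}$), the smallest candidate instance uses $|w_1| = \delta$ and has endpoints $i_h = h - |s_1 \cdots s_z| - z\delta$ and $j_h = h + |s_{z+1} \cdots s_r| + (r{-}1{-}z)\delta - 1$. By Lemma~\ref{PrimitiveVV} applied inside each run, growing $|w_1|$ from $\delta$ to $\delta + cd$ preserves the instance property as long as the enlarged endpoints still satisfy $i_h - zcd \ge i'$ and $j_h + (r{-}1{-}z)cd \le j''$.

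For the array $E$, I would first construct, in the style of Lemma~\ref{Dsubarray}, a bit vector over $[b_1..b_2]$ marking those $h$ for which the base candidate is an instance of $p$: this is the bitwise AND of the arrays $D_1,\ldots,D_r$ shifted by $|s_1 \cdots s_{k-1}| + (k{-}1)\delta$ on the left of $h$ and by $|s_{z+1} \cdots s_{k-1}|$ plus offsets on the right, computed in $O(r + \frac{rd}{\log n})$ time. The maximum number of extensions is $c_h = \min\bigl\{\lfloor (h - |s_1 \cdots s_z| - z\delta - i')/(zd) \rfloor, \lfloor (j'' - h - |s_{z+1} \cdots s_r| - (r{-}1{-}z)\delta + 1)/((r{-}1{-}z)d) \rfloor\bigr\}$. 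Over the length-$d$ interval $[b_1..b_2]$, each of these two floor functions jumps at most once and in opposite monotone directions, so their minimum takes at most three distinct values on three consecutive subintervals, yielding the breakpoints $h',h'_0$ and values $d',d'',d'''$ in $O(1)$ arithmetic.

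For the array $F$, I would handle separately the two ways an instance can reach outside the runs, using Lemma~\ref{wsw} and its symmetric mirror. If $i < i'$, then either $i + |s_1| = i' + |\suff_d(s_1)|$, a unique non-modular exception that I test explicitly in $O(r)$ time using $\lcp$ and the arrays $\{D_z\}$ and record as at most one of the two declared extra instances, or $i + |s_1| \in [i'..i'{+}d)$, where a modular congruence forces a specific value of the extension parameter $a$. Symmetric reasoning when $j > j''$ yields another forced $a$, and the case when both boundaries are simultaneously crossed gives a third; thus $F$ splits into at most three subintervals with values $a',a'',a'''$, each filled by one Dsubarray-style bitwise AND with the relevant shifts, and the right-side non-modular exception contributes the second explicit extra instance.

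The hard step will be the mixed-orientation case, when $p$ contains both $x$ and $\lvec{x}$, because compatibility of substitutions of the two orientations is not parallelisable by uniform shifted bitwise AND over all $h$. Following the template of Lemma~\ref{GeneralInRunLemma}, I would invoke Lemma~\ref{Palsuv} to force palindromes $u, v$ with $|uv| = d$ and $v \ne \epsilon$ such that $w[1..d] = uv$; compute $|u|$ and $|v|$ from $|s_{z''}| \bmod d$ and from $|s_{z'}| - 2|s_{z''}| \bmod d$ for suitably chosen indices $z',z''$ of the relevant mixed type on the first-part or second-part side; and then use Lemmas~\ref{PrimitivePalPair} and \ref{InARunPalPair}, together with a seed palindromic decomposition from Lemma~\ref{FindPalPair}, to pin $h$ down to at most two candidates in $[b_1..b_2]$. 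For each such $h$, I would verify the base instance in $O(r)$ time using the $\lcp$ structure on $t\lvec{t}$ and record it as one of the at most two declared extra instances, leaving $E$ and $F$ essentially empty. The case analysis across $x_z,x_{z+1}\in\{x,\lvec{x}\}$ and the corresponding palindromic subcases is entirely parallel to the one in Lemma~\ref{GeneralInRunLemma}, but is the most error-prone bookkeeping step.
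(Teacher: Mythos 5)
Your overall plan — re-deriving the two-run statement from scratch by adapting the proof of Lemma~\ref{GeneralInRunLemma} to track both boundaries $i'$ and $j''$ — differs from the paper, which instead splits $p$ at the separator into $p_1 = s_1x_1\cdots s_zx_z$ and $p_2 = x_{z+1}s_{z+2}\cdots x_{r-1}s_r$, applies Lemma~\ref{GeneralInRunLemma} (and a symmetric version) to each half in its own run, and merges the resulting encodings. The direct route could work in principle, but as written it has a genuine gap: you identify ``the base candidate is an instance of $p$'' with the bitwise AND of the shifted arrays $D_1,\ldots,D_r$. That identification is only valid in the one-run setting of Lemma~\ref{Dsubarray}, where all substitutions are substrings of the \emph{same} periodic run, so the mod-$d$ alignment forces them to coincide. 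Here $w_1,\ldots,w_z$ lie in $t[i'..j']$ and $w_{z+1},\ldots,w_{r-1}$ lie in $t[i''..j'']$: two runs with the same period may have unrelated content or phase (e.g.\ for $p=xax$ and $t=bcbcbc\,a\,dedede$ every $D$-test passes, yet $w_z\ne w_{z+1}$). You must add an explicit cross-run consistency test between $w_z$ and $w_{z+1}$ (or $\lvec{w}_{z+1}$); the paper does this with a single $O(1)$ $\lcp$ comparison of the $d$-block ending at $h-1$ with the $d$-block starting at $h+|s_{z+1}|$ at one representative position with $E[h]=1$, zeroing the whole array if it fails — one check suffices because the relative phase of the two runs is the same for all $h$ in the length-$d$ segment. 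Nothing in your proposal plays this role in the pure-$x$ case; you only flag the issue when both $x$ and $\lvec{x}$ occur.

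The second gap is in your mixed-orientation paragraph: pinning $h$ down to at most two candidates in $[b_1..b_2]$ does not reduce the output to at most two instances. For a fixed $h$, the instances still come in an arithmetic progression $|w_1|=\delta+cd$ for all $c$ up to the boundary-limited maximum (Lemma~\ref{AllInRunOccurrences} applied inside each run), and these families must be encoded in $E$ and $F$ together with the counters $d',d'',d''',a',a'',a'''$ — this is exactly what the paper's proof of Lemma~\ref{GeneralInRunLemma} does in its own mixed case, setting $E[h]=1$ and $d'=c_h$ for the surviving positions. Your plan to verify only the base instance and ``record it as one of the at most two declared extra instances, leaving $E$ and $F$ essentially empty'' both loses all extended instances and misuses the two extra-instance slots, which the lemma (and the paper's proof, via extending the single exceptional instance produced by each half) reserves for the non-modular boundary exceptions of the form $i+|s_1|=i'+|\suff_d(s_1)|$ and its mirror at $j''$. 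Your $E$-side analysis via the minimum of two floor functions, giving three subintervals and the breakpoints $h',h'_0$, is fine and matches what the paper obtains by merging $h'_1,h'_2$ and taking minima of the counters, but the two issues above must be repaired before the argument is sound.
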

\begin{proof}
Denote $p_1 = s_1x_1\cdots s_zx_z$ and $p_2 = x_{z+1}s_{z+2}\cdots x_{r-1}s_r$. We apply Lemma~\ref{GeneralInRunLemma} putting $p := p_1$ to compute the numbers $d'_1,d''_1,h'_1,h''_1,a'_1,a''_1$, the bit arrays $E_1[b_1..b_2]$, $F_1[b_1..b_2]$, and, if needed, one additional instance $t[i^1_0..j^1_0]$ of $p_1$ that altogether represent all instances $t[i..j] = s_1w_1\cdots s_zw_z$ of $p_1$ such that $|w_1| = \cdots = |w_z| \ge 3d$, $|w_1| \equiv \delta \pmod{d}$, $i' \le i + |s_1| \le j \le j'$, and $j + 1 \in [b_1..b_2]$. Similarly, putting $p := p_2$, we apply a symmetrical version of Lemma~\ref{GeneralInRunLemma} to obtain numbers $d'_2,d''_2,h'_2,h''_2,a'_2,a''_2$, bit arrays $E_2[b_1{+}|s_{z+1}| .. b_2{+}|s_{z+1}|]$, $F_2[b_1{+}|s_{z+1}| .. b_2{+}|s_{z+1}|]$, and, probably, one additional instance $t[i^2_0..j^2_0]$ of $p_2$ that together represent all instances $t[i..j] = w_{z+1}s_{z+2}\cdots w_{r-1}s_r$ of $p_2$ such that $|w_{z+1}| = \cdots = |w_{r-1}| \ge 3d$, $|w_{z+1}| \equiv \delta \pmod{d}$, $i'' \le i \le j - |s_r| \le j''$, and $i \in [b_1{+}|s_{z+1}|..b_2{+}|s_{z+1}|]$. We combine these to get all required instances of $p$ as follows.

To combine instances of $p_1$ and $p_2$ encoded in the arrays $E_1[b_1..b_2]$ and $E_2[b_1{+}|s_{z+1}|..b_2{+}|s_{z+1}|]$, we perform in $O(\frac{d}{\log n})$ time the bitwise ``and'' of these arrays and the bit array $D_{z+1}[b_1..b_2]$ and thus obtain a bit array $E[b_1..b_2]$. Further, we find in $O(\frac{d}{\log n})$ time one arbitrary position $h \in [b_1..b_2]$ such that $E[h] = 1$. In order to ``synchronize'' instances of $p_1$ and $p_2$, we check in $O(1)$ time using the $\lcp$ structure whether $t[h{-}d..h{-}1] = t[h{+}|s_{z+1}|..h{+}|s_{z+1}|{+}d{-}1]$, if $x_z = x_{z+1}$, or $\lrange{t[h{-}d..h{-}1]} = t[h{+}|s_{z+1}| .. h{+}|s_{z+1}|{+}d{-}1]$, if $x_z \ne x_{z+1}$; if not, then we fill $E$ with zeros. One can show that $E$ satisfies the conditions in the statement of the lemma provided $h' = \min\{h'_1,h'_2\}$, $h'_0 = \max\{h'_1,h'_2\}$, $d' = \min\{d'_1,d'_2\}$, $d''' = \min\{d''_1, d''_2\}$, $d'' = \min\{d''_1, d'_2\}$ if $h'_1 \le h'_2$, and $d'' = \min\{d'_1, d''_2\}$ if $h'_1 > h'_2$.

We apply a similar analysis for all remaining combinations: $E_1$ and $F_2$, $F_1$ and $E_2$, $F_1$ and $F_2$; but due to the definitions of the arrays $F_1, F_2$ and the numbers $a'_1,a''_1,a'_2,a''_2,h''_1,h''_2$, we can combine the results into one bit array $F[b_1..b_2]$ putting $h'' = \min\{h''_1,h''_2\}, h''_0 = \max\{h''_1, h''_2\}, a' = \min\{a'_1, a'_2\},a''' = \min\{a''_1, a''_2\}, a'' = \min\{a''_1,a'_2\}$ if $h''_1 \le h''_2$, and $a'' = \min\{a'_1, a''_2\}$ if $h''_1 > h''_2$. Finally, we try to ``extend'' in an obvious way the instance $t[i^1_0..j^1_0]$ of $p_1$ (similarly, $t[i^2_0..j^2_0]$ of $p_2$) to a full instance of $p$ in $O(r)$ time using the $\lcp$ structure and the arrays $\{D_z\}_{z=1}^r$. Thus, we obtain at most two instances of $p$.\qed
\end{proof}

\end{document}